\newif\ifsubmission
\renewcommand{\paragraph}[1]{\medskip\noindent\textbf{#1}}
 \newtheorem{theorem}{Theorem}[section]
 \newtheorem{definition}[theorem]{Definition}
 \newtheorem{remark}[theorem]{Remark}
 \newtheorem{lemma}[theorem]{Lemma}
\newcommand{\Eval}{\mathsf{Eval}}
\newcommand{\PRF}{\mathsf{PRF}}
\newcommand{\PRG}{\mathsf{PRG}}
\newcommand{\Puncture}{\mathsf{Punct}}
\newcommand{\NIZK}{\mathsf{NIZK}}
\newcommand{\WE}{\mathsf{WE}}
\newcommand{\ABE}{\mathsf{ABE}}
\newcommand{\OWF}{\mathsf{OWF}}
\newcommand{\ZAP}{\mathsf{ZAP}}
\newcommand{\ZAPR}{\mathsf{ZAPR}}
\newcommand{\NIWI}{\mathsf{NIWI}}
\newcommand{\SBSH}{\mathsf{SBSH}}
\newcommand{\KeyGen}{\mathsf{KeyGen}}
\newcommand{\TKeyGen}{\mathsf{TdGen}}
\newcommand{\TVerify}{\mathsf{TdVerify}}
\newcommand{\SKeyGen}{\mathsf{SimGen}}
\newcommand{\simgen}{\mathsf{sim}}
\newcommand{\Key}{\mathsf{Key}}
\newcommand{\Enc}{\mathsf{Enc}}
\newcommand{\Dec}{\mathsf{Dec}}
\newcommand{\Gen}{\mathsf{Gen}}
\newcommand{\Ext}{\mathsf{Ext}}
\newcommand{\Setup}{\mathsf{Setup}}
\newcommand{\Prove}{\mathsf{Prove}}
\newcommand{\Verify}{\mathsf{Verify}}
\newcommand{\cPRF}{\mathsf{cPRF}}
\newcommand{\Constrain}{\mathsf{Constrain}}
\newcommand{\CEval}{\mathsf{CEval}}
\newcommand{\share}{\mathsf{Share}}
\newcommand{\reconstruct}{\mathsf{Rec}}
\newcommand{\QFHE}{\mathsf{QFHE}}
\newcommand{\mpk}{\mathsf{mpk}}
\newcommand{\msk}{\mathsf{msk}}
\newcommand{\yes}{\mathsf{yes}}
\newcommand{\no}{\mathsf{no}}
\newcommand{\obfC}{\widetilde{\mathbf{CC}}}
\renewcommand{\CC}[3]{\ensuremath{\mathbf{CC}[#1,#2,#3]}}
\newcommand{\ccSim}{\mathsf{Sim}}
\renewcommand{\obf}{\mathsf{Obf}}
\newcommand{\crs}{\mathsf{crs}}
\newcommand{\td}{\mathsf{td}}
\newcommand{\ck}{\mathsf{ck}}
\newcommand{\params}{\mathsf{pp}}
\newcommand{\secp}{\mathsf{\lambda}}
\newcommand{\ct}{\mathsf{ct}}
\newcommand{\Hyb}{\mathsf{Hyb}}
\newcommand{\com}{\mathsf{Com}}
\newcommand{\QMA}{\mathsf{QMA}}
\newcommand{\mQMA}{\mathsf{mQMA}}
\newcommand{\NP}{\mathsf{NP}}
\newcommand{\Sim}{\mathsf{Sim}}
\newcommand{\lang}{\mathcal{L}}
\newcommand{\relation}{\mathsf{R}_\mathcal{L}}
\newcommand{\cL}{\mathcal{L}}
\newcommand{\cH}{\mathcal{H}}
\newcommand{\cA}{\mathcal{A}}
\newcommand{\cO}{\mathcal{O}}
\newcommand{\cV}{\mathcal{V}}
\newcommand{\cG}{\mathcal{G}}
\newcommand{\cI}{\mathcal{I}}
\newcommand{\bbN}{\mathbb{N}}
\newenvironment{boxfig}[2]{\begin{figure}[#1]\fbox{\begin{minipage}{\linewidth}
                        \vspace{0.2em}
                        \makebox[0.025\linewidth]{}
                        \begin{minipage}{0.95\linewidth}
            {{
                        #2 }}
                        \end{minipage}
                        \vspace{0.2em}
                        \end{minipage}}}{\end{figure}}
\newcommand{\pprotocol}[4]{
\begin{boxfig}{h!}{
\begin{center}
\textbf{#1}
\end{center}
    #4
\vspace{0.2em} } \caption{\label{#3} #2}
\end{boxfig}
}
\newcommand{\protocol}[4]{
\pprotocol{#1}{#2}{#3}{#4} }
\newcommand{\authnote}[3]{\textcolor{#3}{[{\footnotesize {\bf #1:} { {#2}}}]}}
\newcommand{\gnote}[1]{\authnote{Giulio}{#1}{red}}
\newcommand{\jnote}[1]{\authnote{James}{#1}{blue}}
\begin{document}

\ifsubmission
\title{Indistinguishability Obfuscation of Null Quantum Circuits and Applications}
\author{}
\institute{}
\else
\title{Indistinguishability Obfuscation of Null Quantum Circuits and Applications}
\author[1]{James Bartusek}
\author[2]{Giulio Malavolta}
\affil[1]{University of California, Berkeley}
\affil[2]{Max Planck Institute for Security and Privacy}
\date{}
\fi

\maketitle

\begin{abstract}
    We study the notion of indistinguishability obfuscation for null quantum circuits (quantum null-iO). We present a construction assuming:
    \begin{itemize}
        \item The quantum hardness of learning with errors (LWE).
        \item Post-quantum indistinguishability obfuscation for \emph{classical} circuits.
        \item A notion of ``dual-mode'' classical verification of quantum computation (CVQC).
    \end{itemize}
    We give evidence that our notion of dual-mode CVQC exists by proposing a scheme that is secure assuming LWE in the quantum random oracle model (QROM). 
    
   Then we show how quantum null-iO enables a series of new cryptographic primitives that, prior to our work, were unknown to exist even making heuristic assumptions. Among others, we obtain the first witness encryption scheme for QMA, the first publicly verifiable non-interactive zero-knowledge (NIZK) scheme for QMA, and the first attribute-based encryption (ABE) scheme for BQP.
\end{abstract}
\section{Introduction}

The goal of program obfuscation~\cite{AC:Hada00,C:BGIRSVY01} is to convert an arbitrary circuit $C$ into an unintelligible but functionally equivalent circuit $\widetilde{C}$. Recent work has shown that program obfuscation enables a series of new remarkable applications (e.g.~\cite{FOCS:GGHRSW13,STOC:SahWat14,EC:GGGJKL14,FOCS:BitPanRos15}), establishing obfuscation as a central object in cryptography. 

Yet, the scope of obfuscation has so far been concerned almost exclusively with classical cryptography. The advent of quantum computing has motivated researchers~\cite{aaronson,qobf} to ask whether program obfuscation is a meaningful notion also in a quantum world:
\begin{quote}\centering
\emph{Can we obfuscate quantum circuits? Is this notion useful at all?} 
\end{quote}
 Unfortunately, results on the matter are largely negative~\cite{qobf,ssl,impobf}, barring a few schemes for restricted function classes of questionable usefulness~\cite{braidobf,lowTobf}. At present, it is unclear whether obfuscation of quantum circuits in its most general form can exist at all. The goal of this work is to make progress on this question.

\subsection{Our Results}

In this work, we study the notion of obfuscation for quantum circuits. Our contributions are twofold.

\subsubsection{Quantum Null-iO and Witness Encryption for QMA} 

We show that, assuming LWE, post-quantum indistinguishability obfuscation (iO) for classical circuits, and (trapdoor) dual-mode classical verification of quantum computation (CVQC), there exists an obfuscation scheme for null quantum circuits, i.e., any polynomial-size quantum circuit that rejects all inputs with overwhelming probability. The following statement summarizes our main result.

\begin{theorem}[Informal]
Assuming the hardness of the LWE problem, the existence of post-quantum iO for classical circuits, and a (trapdoor) dual-mode CVQC protocol, there exists quantum null-iO.
\end{theorem}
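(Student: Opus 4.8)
The plan is to build $\obf$ for null quantum circuits by reducing, at a high level, to classical null-iO, which follows from post-quantum iO for classical circuits. A null quantum circuit $Q$ is one that rejects (outputs $0$) on every input with overwhelming probability. The key observation is that ``$Q$ rejects $x$'' is a $\QMA$-style statement, and a dual-mode CVQC protocol gives us a way to make acceptance of $Q$ on $x$ publicly checkable relative to a common reference string $\crs$. So I would first run the CVQC setup to obtain $\crs$ (in its \emph{soundness}/binding mode), and then define a \emph{classical} circuit $C_{Q,\crs}$ that on input $(x,\pi)$ outputs $1$ iff $\pi$ is an accepting CVQC transcript/proof certifying that $Q(x)=1$ with respect to $\crs$. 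Since $Q$ is null, soundness of CVQC in binding mode implies that no $(x,\pi)$ makes $C_{Q,\crs}$ accept (except with negligible probability over $\crs$), i.e., $C_{Q,\crs}$ is itself a null \emph{classical} circuit. The obfuscation of $Q$ is then $(\crs, \widetilde{C})$ where $\widetilde{C} \leftarrow \mathsf{classical\text{-}null\text{-}iO}(C_{Q,\crs})$, together with whatever classical/quantum helper material lets an honest evaluator, given $\widetilde{Q}$ and input $x$, first \emph{produce} an accepting CVQC proof $\pi$ for $Q(x)=1$ when indeed $Q(x)=1$ — but note that for a null circuit there is no such $x$, so functional correctness is the trivial ``always reject'' and the real content is in the security proof, where we must handle circuits that are indistinguishable from null but not syntactically null.

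For \textbf{correctness} on genuine null circuits this is immediate: $C_{Q,\crs}$ rejects everything, so does $\widetilde{C}$, hence $\widetilde{Q}$ rejects everything, matching $Q$ up to negligible error. The substance is \textbf{security}: given two null quantum circuits $Q_0,Q_1$ of the same size with (near-)identical functionality, show $\obf(Q_0) \approx_c \obf(Q_1)$. I would do this via a hybrid argument. Start from $\obf(Q_0) = (\crs, \widetilde{C_{Q_0,\crs}})$ with $\crs$ in binding mode. Hybrid 1: switch $\crs$ to \emph{hiding} (zero-knowledge/simulation) mode; indistinguishability of the two CRS modes is exactly the ``dual-mode'' property of CVQC, and it holds against quantum distinguishers given LWE. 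In hiding mode the CVQC proofs can be simulated without the witness, which is what lets us argue the next step doesn't change functionality in a detectable way. Hybrid 2: now both $C_{Q_0,\crs}$ and $C_{Q_1,\crs}$ are null classical circuits of the same size (here I use that $Q_0,Q_1$ are null so neither admits an accepting transcript even in hiding mode, up to negligibility), so by security of classical null-iO, $\widetilde{C_{Q_0,\crs}} \approx_c \widetilde{C_{Q_1,\crs}}$. Hybrid 3: switch $\crs$ back to binding mode, arriving at $\obf(Q_1)$. Composing the three indistinguishabilities gives the result. Each CRS switch must be done carefully so that the reduction to classical null-iO does not itself need the trapdoor (so we can invoke the iO adversary as a black box sandwiched between two CRS hops), and one must check that the classical circuits being compared in Hybrid 2 are padded to a common size that depends only on $|Q|$, $\secp$, and the CVQC/verification circuit size, not on which $Q_b$ we started with.

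The \textbf{main obstacle} I anticipate is reconciling evaluability with security: for a genuinely null $Q$ there is nothing to evaluate, but the construction must still be such that, \emph{in the security proof}, intermediate hybrids where we replace $Q_0$ by $Q_1$ (or feed in a circuit that is only computationally null) keep the classical circuit $C_{Q,\crs}$ null — otherwise classical null-iO security doesn't apply. This is precisely where the \emph{trapdoor} in ``trapdoor dual-mode CVQC'' earns its name: the binding mode must be statistically (or at least computationally, relative to the adversary's view) sound so that ``$Q$ null'' $\Rightarrow$ ``$C_{Q,\crs}$ null'', while the hiding mode must let proofs be simulated; and the two modes must be indistinguishable even to a quantum adversary that later sees the obfuscated circuit. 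A secondary subtlety is that CVQC protocols are typically interactive or Fiat--Shamir-compiled in the QROM, so to bake an accepting transcript into a classical circuit I need the CVQC proof string $\pi$ to be checkable by a fixed classical verifier circuit with no interaction and no random oracle at evaluation time — this is exactly the ``non-interactive, CRS-model, dual-mode'' packaging the theorem hypothesizes, and getting the soundness/ZK tradeoff to survive that compilation (and to be compatible with the later QROM instantiation) is the delicate part. Finally, one should double-check that the quantum operations used by an honest prover to generate $\pi$ are only invoked in the proof of functionality on \emph{non-null} sub-circuits if at all; in the pure null-iO statement they can be omitted entirely, which keeps the construction's output a classical string.
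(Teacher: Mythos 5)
Your high-level plan — obfuscate a classical CVQC-verifier circuit instead of the quantum circuit itself — is exactly the paper's starting point, but as written the proposal has two gaps, the second of which you yourself flag without resolving.

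First, you leave out the \emph{blinding} step. The CVQC public parameters $\params$ are produced by $\KeyGen(1^\lambda,Q)$, so they can (and in known CVQC schemes do) leak $Q$ itself. Publishing $\crs$ in the clear, as your construction does, would therefore defeat the purpose of obfuscating $Q$. The paper handles this by generating a QFHE key pair, publishing only $\QFHE.\Enc(\pk,\params)$, and having the evaluator run $\Prove$ homomorphically; the obfuscated classical circuit then takes an encrypted proof $\ct_\pi$, decrypts it with the hard-coded $\sk$, and runs $\Verify$. This also introduces the QFHE-semantic-security step that the hybrid argument uses to actually switch $Q_0 \rightsquigarrow Q_1$.

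Second, and more fundamentally, your Hybrid~2 assumes that once $\crs$ is in ``hiding mode,'' $C_{Q_0,\crs}$ and $C_{Q_1,\crs}$ are both \emph{null} classical circuits ``up to negligibility,'' and you then invoke classical null-iO. That is exactly the problem the paper's dual-mode notion is engineered to avoid: iO (hence classical null-iO built from it) requires \emph{exact} functional equivalence of the two circuits, and a computationally sound CVQC verifier is \emph{not} functionally null --- accepting proofs for false statements exist, they are merely hard to find. ``Null up to negligibility'' buys you nothing against iO's indistinguishability guarantee, which is about functionally \emph{identical} circuits. The paper's trapdoor dual-mode structure is tailored precisely to thread this needle: (i) $\TKeyGen$ produces $(\params_\td,r_\td,\td)$ indistinguishable from $\KeyGen$ and such that $\Verify(\cdot,r_\td)$ and $\TVerify(\cdot,\td)$ are \emph{functionally equivalent} (so iO lets you swap $r_\td$ for $\td$ inside the obfuscated circuit); (ii) $\SKeyGen$ produces $(\params_\simgen,\td_\simgen)$ indistinguishable from $(\params_\td,\td)$ and such that $\TVerify(Q,\cdot,\td_\simgen)$ has \emph{no} accepting inputs at all (so iO lets you swap to $\Pi_\bot$). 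Your sketch collapses these distinctions into a single ``binding vs.\ hiding CRS'' switch and asserts soundness makes the circuit null; that assertion is exactly what fails, and it is the reason the paper needs three separate algorithms $\TKeyGen,\TVerify,\SKeyGen$ and the three corresponding hybrid steps rather than one CRS flip.
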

While the first assumption is standard, and the second seems to some extent necessary, dual-mode CVQC is a non-standard cryptographic building block that we introduce in this work. Loosely speaking, a CVQC protocol is dual-mode if there is a standard mode in which the scheme is correct, and a simulation mode in which there do not exist \emph{any} accepting proofs for no instances (though in this mode, the scheme may not necessarily be correct for yes instances). These modes must be computationally indistinguishable even given the verification key. 

Actually, we do not know of any construction of CVQC that satisfies this dual-mode property, so we instead relax the property to a ``trapdoor'' variant, where there exists a trapdoor setup algorithm  (computationally indistinguishable from the original one) that satisfies the dual-mode property. We show that this relaxation suffices to construct quantum null-iO (along with LWE and post-quantum iO for classical circuits), and present a construction of trapdoor dual-mode CVQC secure against the learning with errors (LWE) problem in the quantum random oracle model (QROM).

\begin{theorem}[Informal]
Assuming the hardness of the LWE problem, there exists a trapdoor dual-mode CVQC protocol in the QROM.
\end{theorem}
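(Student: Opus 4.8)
The plan is to start from an existing (classically verifiable, non-interactive) CVQC protocol in the QROM secure under LWE---concretely, the Mahadev-style measurement protocol compiled via the Fiat-Shamir transform, as analyzed by Chia-Chung-Yamakawa / Bartusek et al.---and graft onto it a dual-mode trapdoor setup. The key structural observation is that in these protocols the verifier's message is a public key $\qpk$ of a noisy trapdoor claw-free / injective-or-lossy function family, and the prover is committed (through the random oracle measurement) to a set of equations about a witness string; soundness for $\no$-instances holds because, under the injective branch, any accepting transcript would let an extractor pull out a witness for a Hamiltonian (or local-Hamiltonian-type) instance that has none. The trapdoor $\TKeyGen$ algorithm will output the same distribution of keys as honest $\Setup$ but additionally retain the function trapdoor; computational indistinguishability of $\TKeyGen$ from $\Setup$ is immediate since they produce identical key distributions and the trapdoor is never published.

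Next I would define the simulation mode. $\SKeyGen$ will switch the verifier's key family from the injective/claw-free branch to the \emph{lossy} (or ``two-to-one'' vs.\ ``many-to-one'') branch, relying on the LWE-based key indistinguishability of the underlying function family---this is exactly the hardcore/indistinguishability property that already appears in the soundness analysis, now used directly rather than inside a reduction. The point of the lossy branch is that the prover's committed equations become information-theoretically inconsistent with \emph{every} purported $\no$-witness: because the function is lossy, the ``which preimage'' bit the prover would have to reveal is statistically undetermined, so with overwhelming probability over the random oracle no transcript passes verification for a $\no$-instance. Concretely one shows: (i) conditioned on the lossy key, for any fixed $\no$-instance $x$ and any (possibly unbounded, $q$-query) prover, the probability over the oracle and the prover's measurements that $\Verify$ accepts is negligible; this is where one reassembles the standard Mahadev rigidity/energy bound but now observes that the ``adaptive hardcore'' style gap becomes unconditional in the lossy branch. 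Then (ii) indistinguishability of modes even given $\qpk$ (the verification key) follows from LWE, since $\qpk$ is all the adversary ever sees of the setup.

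The main obstacle I anticipate is step (i): making $\no$-soundness in the lossy mode \emph{statistical} rather than merely computational. In the usual proof the claw-free property is only computational, and the extraction argument is a reduction, so ``there do not exist any accepting proofs'' is not literally true---only ``no efficient prover finds one.'' To get the dual-mode guarantee we genuinely need the lossy branch to destroy information about the preimage bit to an extent that survives the Fiat-Shamir random oracle and the prover's $q$ queries. This likely requires (a) choosing the LWE/lossy parameters so the function is, say, $2^{-\lambda}$-lossy in the relevant sense, (b) a careful QROM argument---reprogramming / one-way-to-hiding, or the compressed-oracle technique---to bound the advantage of a $q$-query quantum prover in producing a transcript consistent with a measurement string it is not committed to, and (c) combining this with the Hamiltonian soundness of the base protocol so that statistical inconsistency of the revealed bits translates into ``no accepting transcript exists'' with probability $1-\negl(\lambda)$ over the oracle. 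A secondary subtlety is that we must only require \emph{$\no$-soundness} in simulation mode---we explicitly allow correctness for $\yes$-instances to fail---which is what makes the switch to a lossy (non-functional) key family permissible in the first place; I would state the dual-mode definition accordingly and check that the quantum null-iO construction in the previous theorem only uses this one-sided guarantee.
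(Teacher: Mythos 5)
Your proposal takes a genuinely different---and substantially harder---route than the paper, and it contains a gap that you correctly flag but do not close. You try to build the dual-mode property \emph{inside} the base CVQC protocol, by switching the verifier's trapdoor claw-free family from an injective branch to a lossy branch so that, for $\no$-instances, \emph{no} accepting transcript exists. As you yourself observe, this requires upgrading the $\no$-side soundness of the Mahadev-style protocol from computational to statistical under the lossy key, and this is exactly the point where the argument stalls: soundness of these protocols rests on a computational property (adaptive hardcore bit / claw-freeness), and even if the key is made lossy there is no argument given that the Fiat-Shamir--compressed verification circuit literally rejects every string---the verifier is a public Boolean function of the transcript and the verification key, and ``information-theoretically undetermined preimage bit'' does not by itself rule out that some proof $\pi$ happens to satisfy the check. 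There is also an unaddressed issue with Verification Equivalence and the shape of $\td_\simgen$: in the lossy branch the function family generically has no trapdoor, so it is unclear what $\TVerify(Q,\cdot,\td_\simgen)$ even computes.

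The paper avoids all of this by never touching the base CVQC protocol. It wraps \emph{any} two-message CVQC with a random-oracle gadget: the prover additionally outputs $h = \cH(\pi)$ where $\cH$ has range $\{0,1\}^{\secp+1}$, and $\Verify^\ast$ re-checks $h$ before running the base verifier. $\TKeyGen$ keeps $(\params,r)$ exactly as in $\KeyGen$ but additionally picks a PRF key $\td$ and programs $\cH(x) = (\cG(x),\, F(\td,x)\oplus \Verify(Q,x,r))$, so that $\TVerify$ can decide acceptance by simply unmasking the last bit of $h$ with $F(\td,\pi)$---this gives Verification Equivalence by construction and Setup Indistinguishability from PRF security. $\SKeyGen$ is identical except the last bit is just $F(\td,x)$, so $\TVerify$ with $\td_\simgen$ returns $0$ on \emph{every} input unconditionally; the dual-mode indistinguishability $(\params_\td,\td)\approx_c(\params_\simgen,\td_\simgen)$ is then a hybrid over the adversary's $q$ oracle queries, where distinguishing query $i$ would let a reduction measure that query and produce an accepting proof, contradicting the (computational) soundness of the base CVQC. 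So the paper's ``trapdoor'' relaxation is precisely what lets one remain agnostic about the internals of the CVQC and avoid the statistical-soundness problem you identified; it buys modularity and a clean QROM reduction, whereas your route, if it could be completed, would yield the stronger non-trapdoor dual-mode CVQC that the paper explicitly states they do not know how to construct.
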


In addition, we propose an alternative construction of quantum null-iO (Appendix~\ref{sec:vbb}) secure only against the LWE problem but with respect to an oracle. The  oracle that we consider is entirely classical, but queriable in superposition. In fact, we show that the scheme is secure assuming LWE and post-quantum virtual black-box obfuscation of a particular classical circuit.

\paragraph{Witness Encryption for QMA.}
Applying a well-known transformation, we obtain witness encryption \cite{STOC:GGSW13} for QMA as a corollary. Importantly, our scheme has an entirely classical encryption algorithm: Any classical user can encrypt a message $m$ with respect to the membership of some statement $x$ in a language $\lang \in \QMA$. The message $m$ can be (quantumly) decrypted by anyone possessing (multiple copies of) a valid witness $\ket{\psi}\in\relation(x)$.

\subsubsection{New Applications}

We show that witness encryption for QMA with classical encryption enables a series of new cryptographic primitives, thereby positioning witness encryption for QMA (and consequently quantum null-iO) as a central catalyst in quantum cryptography. Most of our results are obtained via \emph{classical synthesis of quantum programs}: We compress an exponential number of quantum programs into a small classical circuit via the use of classical iO. We give an overview of the implications of our results below. We remark that, prior to our work, we did not even have a heuristic candidate for \emph{any} of the primitives that we obtain.

\begin{itemize}
    \item[(1)] \textbf{NIZK for QMA:} We present the first construction of NIZK~\cite{STOC:BluFelMic88} for QMA. A (quantum) prover can efficiently produce a zero-knowledge certificate $\pi$ that a certain statement $x\in\lang$, where $\lang$ is any language in QMA. This certificate is publicly verifiable with respect to a publicly-known common reference string (CRS). Prior to our work, all non-interactive proof systems for QMA~\cite{BG19,C:ColVidZha20,TCC:ACGH20,TCC:ChiChuYam20} were in the \emph{secret parameters model} or the \emph{designated verifier} setting, i.e.\ the verifier (or additionally the prover) needed some secret information not accessible to the other party.
    
    This resolves an outstanding open problem in the area (see e.g.~\cite{Shm20} for a discussion on the barriers to achieving public verifiability). In addition, our NIZK scheme satisfies several properties of interest, namely, (i) it is statistically zero-knowledge, (ii) the verification algorithm (and CRS) is fully classical, and (iii) the verification algorithm is \emph{succinct}, i.e.\ its runtime is independent of the size of the witness. In fact, we obtain the first succinct non-interactive argument (zk-SNARG) for QMA. 
    
    This primitive also implies the first classical verification of quantum computation (CVQC) scheme with succinct verifier (and large CRS) that is \emph{publicly verifiable}, and thus reusable. This improves over the privately-verifiable scheme of \cite{TCC:ChiChuYam20} where the large CRS is not reusable, though we note that their CRS is actually a common \emph{random} string as opposed to the common reference string required for our protocol.
    
    \item[(2)] \textbf{ZAPR for QMA:} We show how to transform our NIZK for QMA scheme into a (publicly verifiable) two-round statistically witness-indistinguishable argument (ZAPR) for QMA. Our transformation is generic and can be thought of as a quantum analogue of the Dwork-Naor compiler~\cite{FOCS:DwoNao00}, in the setting of computational soundness.
    
    \item[(3)] \textbf{ABE for BQP:} We obtain a ciphertext-policy ABE~\cite{EC:SahWat05,CCS:GPSW06} scheme for BQP (bounded-error quantum polynomial-time) computation. In ciphertext-policy ABE for BQP, anyone can encrypt a message $m$ with respect to some BQP language, represented by a quantum circuit $Q$. The key authority can generate decryption keys associated with any attribute $x$. The ciphertext can then be decrypted if and only if evaluating $Q$ on $x$ produces 1 (which, by QMA amplification, can happen with either overwhelming probability or negligible probability). Interestingly, all algorithms except for the decryption circuit are fully classical. This is the first example of an ABE scheme for functionalities beyond classical computations. 
    
    The scheme satisfies the standard notion of payload-hiding. That is, the message $m$ is hidden, though the policy $Q$ is revealed by the ciphertext. We then show that we can upgrade the security of the scheme via a generic transformation to predicate-encryption security~\cite{C:GorVaiWee15} (i.e.\ the policy $Q$ is hidden from the evaluator if they are only in possession of keys for rejecting attributes). We achieve this via a construction of lockable obfuscation~\cite{FOCS:GoyKopWat17,FOCS:WicZir17} for quantum circuits from LWE.

    \item[(4)] \textbf{Constrained PRF for BQP:} We present a construction of a pseudorandom function (PRF)~\cite{AC:BonWat13,CCS:KPTZ13,PKC:BoyGolIva14} where one can issue \emph{constrained} keys associated to a quantum circuit $Q$. Such keys can evaluate the PRF on an input $x$ if and only if evaluating $Q$ on $x$ returns 1 with overwhelming probability. Otherwise, the output of the PRF on $x$ looks pseudorandom. The scheme is fully collusion-resistant, i.e.\ security is preserved even if an unbounded number of constrained keys is issued.
    
    \item[(5)] \textbf{Secret Sharing for Monotone QMA:} Finally, as a direct application of our witness encryption scheme, we show how to construct a secret sharing scheme for access structures in monotone QMA.
\end{itemize}

\subsection{Technical Overview}

We give a cursory overview of the techniques introduced by our work and we provide some informal intuition on how we achieve our results.

\subsubsection{How to Obfuscate Quantum Circuits}

Before delving into the specifics of our approach, we highlight a few reasons why known techniques for obfuscating classical circuits do not seem to be directly portable to the quantum setting. For obvious reasons, we restrict this discussion to schemes that plausibly retain security in the presence of quantum adversaries. Recent proposals~\cite{BDGM20,GP20,WW20} follow the \emph{split fully homomorphic encryption} (split-FHE) approach~\cite{EC:BDGM20}, which we loosely recall here. The obfuscator computes
$$
\mathsf{FHE}(C) \xrightarrow[]{\mathsf{Eval}(x, \cdot)} \mathsf{FHE}(C(x)) \xrightarrow[]{\mathsf{Hint}(\sk, \cdot)} h
$$
where the decryption hint $h$ is specific to the ciphertext encoding $C(x)$. The obfuscated circuit consists of $(\mathsf{FHE}(C),h)$ and the evaluator can recompute the homomorphic evaluation and use the decryption hint $h$ to recover $C(x)$. It turns out that, as long as $|h| \ll |C(x)|$, this primitive alone is enough to build full-fledged obfuscation. One crucial aspect of this paradigm is that the split-FHE evaluation algorithm is deterministic, which allows the obfuscator and the evaluator to converge to the exact same ciphertext.

Translating this approach to the quantum setting seems to get stuck at a very fundamental level: Known FHE schemes for quantum circuits~\cite{FOCS:Mahadev18b} have an inherently randomized homomorphic evaluation algorithm. This means that the evaluator and the obfuscator would most likely end up with a different ciphertext even though they are computing the same function. This makes the hint $h$ (which is ciphertext-specific) completely useless to recover the output. A similar barrier also emerges in other generic transformations, such as~\cite{STOC:GKPVZ13}, which are at the foundations of many existing obfuscation schemes.

\paragraph{Reducing to Classical Obfuscation.} As a direct approach seems to be out of reach of current techniques, in this work we take a different route. Following the template of~\cite{FOCS:GGHRSW13}, our high-level idea is to outsource the quantum computation to some \emph{untrusted} component of the scheme and instead obfuscate only the circuit that \emph{verifies} that the computation was carried out correctly.
The important point here is that verifying the correctness of quantum computation is a much easier task than performing the computation itself. In fact, it was recently shown~\cite{FOCS:Mahadev18a} that the validity of any BQP computation can be verified by a completely classical algorithm, assuming the quantum hardness of the LWE problem. Furthermore, recent works~\cite{TCC:ACGH20,TCC:ChiChuYam20} have shown that the protocol can be collapsed to two rounds (in the random oracle model): On input a quantum circuit $Q$, the verifier produces some public parameters $\params$, which can be used by the prover (holding a quantum state $\ket{\psi}$) to compute a classical proof $\pi$. The verifier can then locally verify $\pi$ using some secret information $r$ which was sampled together with $\params$.

Using these classical verification of quantum computation (CVQC) protocols without any additional modification would allow us to implement the scheme as outlined above. However, we would not get any meaningful notion of privacy for the obfuscated circuit, since the prover needs to evaluate the circuit $Q$ in plain. Thus, we will need to turn these protocol \emph{blind}: The prover is able to prove that $Q(\ket{\psi}) = y$ obliviously, without knowing $Q$. This can be done in a canonical way, using fully homomorphic encryption for quantum circuits (QFHE) with classical keys~\cite{FOCS:Mahadev18b,C:Brakerski18}.




\paragraph{Challenges Towards Provable Security.} Although everything seems to fall in place, there is a subtle aspect that makes our attempt not sound: We implicitly assumed that the CVQC protocol is \emph{resettably secure}. If the prover is given access to a circuit implementing a (obfuscated) verifier, nothing prevents it from rewinding it in an attempt to extract the verifier's secret. Once this is leaked, the prover can fool the verifier into accepting false statements, ultimately learning some information about the obfuscated circuit. This is not only a theoretical concern, instead one can show concrete attacks against all known CVQC protocols (more discussion on this later). While this class of attacks seems to be hard to prevent in general, we observe that, if we restrict our attention to null (i.e.\ always rejecting) circuits, then this concern disappears. This is not a coincidence: Any two-round CVQC protocol that is one-time sound is automatically many-time sound for reject-only circuits, since it is easy to simulate the responses of the verifier (always reject). 

Another challenge that we need to resolve is that of provable security: The classical obfuscation only provides us with the weak guarantee of computational indistinguishability for functionally equivalent (classical) circuits. Even if we restrict our attention to null circuits, our scheme still needs to hardwire the verifier's secret in the obfuscated verifier circuit. That is, to obfuscate a null circuit $Q$, we publish
$$
\params\text{ and } \obf\left(\Pi_{\mathsf{nullQiO}}(\pi): \text{Return } \mathsf{CVQC}.\Verify(\params, \pi, r)\right)
$$
where $(\params, r) \sample \mathsf{CVQC}.\KeyGen(1^\lambda, Q)$ is sampled by the obfuscator. To show security, we cannot simply switch the obfuscated circuit to reject all inputs: Since the CVQC protocol is only \emph{computationally sound}, valid proofs $\pi$ for false statements always exist, they are just hard to find. In particular, this means that the circuit $\Pi_{\mathsf{nullQiO}}$ as defined above is not functionally equivalent to an always rejecting (classical) circuit.

\paragraph{Quantum Null iO from Trapdoor Dual-Mode CVQC.} To address the above issue, we introduce the notion of \emph{dual-mode} CVQC. As mentioned earlier, a dual-mode CVQC supports an alternative parameter generation algorithm $\SKeyGen$ where for any \emph{null} circuit $Q$ and $(\params,r) \gets \mathsf{CVQC}.\SKeyGen(1^\secp,Q)$, there do not exist any proofs $\pi$ that accept with respect to the verification key $r$. Furthermore, the parameters and verification key generated by $\SKeyGen$ should be computationally indistinguishable from those generated by $\KeyGen$. 

Unfortunately, the two-message CVQC protocol mentioned above does not satisfy this dual-mode property.\footnote{Note in particular that such a property actually implies publicly-verifiable CVQC, for which there were no known constructions prior to this work.} However, we observe that a weaker property, which we call the \emph{trapdoor} dual-mode property, both suffices for quantum null-iO, and can be shown to exist in the quantum random oracle model. In a trapdoor dual-mode CVQC, the standard key generation algorithm $\KeyGen$ does not support the dual-mode property. However, there exists a ``trapdoor'' parameter generation algorithm $\mathsf{CVQC}.\TKeyGen$ that returns the public parameters $\params_\td$ along with a secret key $r_\td$ and a trapdoor $\td$. The operation of the protocol in this trapdoor setting does actually satisfy the dual-mode property. In full detail, a trapdoor dual-mode CVQC satisfies the following properties:
\begin{itemize}
    \item (Setup Indistinguishability) For all circuits $Q$, the distributions $$\mathsf{CVQC}.\KeyGen(1^\lambda, Q) \approx_c (\params_\td, r_\td)$$ are computationally indistinguishable, where $(\params_\td, r_\td, \td) \sample \mathsf{CVQC}.\TKeyGen(1^\lambda, Q)$.
    \item (Verification Equivalence) The algorithms $$\mathsf{CVQC}.\Verify(Q, \cdot, r_\td) \equiv \mathsf{CVQC}.\TVerify(Q, \cdot, \td)$$ are functionally equivalent.
\end{itemize}

Moreover, in the trapdoor setting the scheme should satisfy the dual-mode property, using parameter generation algorithm $\SKeyGen$.

\begin{itemize}
    \item (Dual-Mode) For any null circuit $Q$, the distributions $$(\params_\td,\td) \approx_c (\params_\simgen,\td_\simgen)$$ are computationally indistinguishable, where $(\params_\td,\sk_\td,\td) \sample \mathsf{CVQC}.\TKeyGen(1^\secp,Q)$ and $(\params_\simgen,\td_\simgen)\allowbreak \sample \mathsf{CVQC}.\SKeyGen(1^\secp,Q)$. Moreover, the circuit $\mathsf{CVQC}.\TVerify(Q,\cdot,\td_\simgen)$ has no accepting inputs.
\end{itemize}


Deferring for the moment the discussion on how to actually construct a trapdoor dual-mode CVQC, we now argue that the above properties suffice for constructing quantum null-iO. Our obfuscation scheme will make use of quantum fully-homomorphic encryption to make the CVQC blind, as well as classical indistinguishability obfuscation to hide the secret key $r$ of the CVQC scheme and the secret key $\sk$ of the QFHE scheme. An obfuscation of circuit $Q$ consists of

$$\QFHE.\Enc(\params), \ \  \obf(\Pi_{\mathsf{nullQiO}}(\ct_\pi) : \text{ Return } \mathsf{CVQC}.\Verify(Q,\QFHE.\Dec(\sk,\ct_\pi),r)),$$ where $(\params,r) \sample \mathsf{CVQC}.\KeyGen(1^\secp,Q)$.
To show indistinguishability security, we use the properties of trapdoor dual-mode CVQC to gradually move from an obfuscation of null circuit $Q_0$ to null circuit $Q_1$: 
\begin{align*}
    &\QFHE.\Enc(\params), \ \  \obf(\Pi_{\mathsf{nullQiO}}(\ct_\pi) : \text{ Return } \mathsf{CVQC}.\Verify(Q_0,\QFHE.\Dec(\sk,\ct_\pi),r)) \\
    &\approx_c \QFHE.\Enc(\fbox{$\params_\td$}), \ \  \obf(\Pi_{\mathsf{nullQiO}}(\ct_\pi) : \text{ Return } \mathsf{CVQC}.\Verify(Q_0,\QFHE.\Dec(\sk,\ct_\pi),\fbox{$r_\td$}))\\
    &\approx_c \QFHE.\Enc(\params_\td), \ \  \obf(\Pi_{\mathsf{nullQiO}}(\ct_\pi) : \text{ Return } \mathsf{CVQC}.\Verify(Q_0,\QFHE.\Dec(\sk,\ct_\pi),\fbox{$\td$}))\\
    &\approx_c \QFHE.\Enc(\fbox{$\params_\simgen$}), \ \  \obf(\Pi_{\mathsf{nullQiO}}(\ct_\pi) : \text{ Return } \mathsf{CVQC}.\Verify(Q_0,\QFHE.\Dec(\sk,\ct_\pi),\fbox{$\td_\simgen$}))\\
    &\approx_c \QFHE.\Enc(\params_\simgen), \ \  \obf(\fbox{$\Pi_\bot$})
\end{align*}
where $\Pi_\bot$ is the always rejecting circuit. At this point we can appeal to semantic security of the QFHE to switch the parameter generation to use $Q_1$, and then undo the above sequence of four hybrids. This establishes that the obfuscation of $Q_0$ is computationally indistinguishable from the obfuscation of $Q_1$.


Finally, applying a known transformation, we obtain a witness encryption scheme for QMA as an immediate corollary.

\paragraph{A Trapdoor Dual-Mode CVQC Protocol in the QROM.} What is left to be shown is an instantiation of a trapdoor dual-mode CVQC protocol. While we do not know of a scheme secure in the standard model (in fact, we do not yet know how to construct \emph{any} two-message CVQC secure in the standard model, even without the trapdoor dual-mode property), we present a construction secure against the LWE assumption in the QROM. In fact, we show how to compile any two-message CVQC protocol (such as the one discussed above), into a trapdoor dual-mode CVQC. We make use of a random oracle $\cH : \{0,1\}^\ast \to \{0,1\}^{\lambda + 1}$ that may be queried in superposition. The modified protocol simply consists of running the CVQC prover and hashing the resulting proof
$$
\pi \sample \mathsf{CVQC}.\Prove(\params, \ket{\psi}) \text{ and } h = \cH(\pi)
$$
whereas the verification algorithm checks the consistency of $h \stackrel{?}{=} \cH(\pi)$, in addition to running the $\mathsf{CVQC}.\Verify$ algorithm on $\pi$.

In the trapdoor setting, we move the computation of the verification algorithm $\mathsf{CVQC}.\Verify(\params,\pi,r)$ into the specification of the random oracle $\cH$. That is, we replace the last bit of the random oracle output with an encryption of $\mathsf{CVQC}.\Verify(\params,\pi,r)$, under a secret key $\td$ that functions as the trapdoor. Then, the trapdoor verification algorithm no longer requires the CVQC secret parameters: It can instead use $\td$ to decrypt the last bit of the random oracle output $h$ in order to uncover the result of $\mathsf{CVQC}.\Verify(\params,\pi,r)$.  To implement this, we use a quantum-secure PRF $F$ with key $\td$. On input some proof $\pi$, we set the first $\lambda$ bits of the random oracle to be uniformly sampled and the last bit to
$$F(\td,\pi) \oplus \mathsf{CVQC}.\Verify(\params,\pi,r).$$
Note that the verifier ($\TVerify$) can equivalently check the validity of $\pi$ by simply recomputing $F(\td,\pi)$ and unmasking the response that was already computed in the random oracle.

Now, it remains to show how we obtain the dual-mode property in the trapdoor setting. This follows by letting $\SKeyGen$ simply be the same as $\TKeyGen$, except that $\mathsf{CVQC}.\Verify(\params,\pi,r)$ is replaced with 0 always in the implementation of the random oracle. That is, the last bit of the random oracle output is always $F(\td,\pi)$, and thus, verification using $\td$ will always output 0. To show that this is computationally indistinguishable from $\mathsf{CVQC}.\TKeyGen$ for any null circuit $Q$, we observe that any adversary that can distinguish these oracles can be used to break the soundness of $\mathsf{CVQC}$. This follows by specifying a reduction that measures one of the adversary's oracle queries to obtain an accepting proof with noticeable probability.



\subsubsection{Applications}

Next, we explore some applications of our newly constructed null-iO for quantum circuits. Since it is the weaker primitive, we are going to use witness encryption for QMA (with classical ciphertexts) as the starting point for all of our primitives. 

\paragraph{NIZK for QMA.} Our first result is a construction of NIZK arguments for QMA with public verifiability. To build up some intuition about the protocol, consider the simplified setting where we have a single fixed statement $x$. We can then define the common reference string to be
$$
(\mathsf{vk}, \WE.\Enc(x, \sigma)) \text{ such that } \mathsf{Verify}(\mathsf{vk}, \sigma, x) = 1
$$
where $\mathsf{vk}$ is a verification key of a signature scheme. Anyone with a valid witness $\ket{\psi}$ for $x$ can recover the signature by decrypting the ciphertext. Then anyone can verify the validity of $x$ by simply verifying the signature $\sigma$ against $\vk$. Note that this computation is entirely classical and succinct: It's runtime does not depend on the computation needed to compute $\ket{\psi} \in \relation(x)$.
To extend this approach to an exponential number of statements, we exploit the fact that our witness encryption scheme has a \emph{completely classical} encryption procedure. Our idea is to place in the common reference string the obfuscation of the (classical) circuit 
$$
\Pi_\NIZK[\sk,k](x): \text{Return } \WE.\Enc(x, \sigma; \prf(k,x))
$$ 
where $\prf$ is a puncturable PRF~\cite{STOC:SahWat14}. The prover algorithm can then evaluate the obfuscated circuit on $x$ to obtain $\WE.\Enc(x, \sigma)$ and proceed as before. Such an approach can be shown to be sound via a standard puncturing argument.

\paragraph{ZAPR for QMA.}
The next question that we ask is whether we can reduce the trust on the setup and obtain some meaningful guarantees also in the presence of a maliciously generated common reference string. Known generic transformations~\cite{FOCS:DwoNao00} do not apply to our case, since the NIZK that we obtain is an argument, i.e.\ it has computational soundness. Our saving grace is again the fact that our NIZK has completely classical setup and verification procedures. This allows us to leverage powerful tools from the literature of (classical) zero-knowledge. We adopt a dual-track approach (reminiscent of the Naor-Yung~\cite{STOC:NaoYun90} paradigm) where we define the setup to sample two copies $(\crs_0, \crs_1)$, the image of a one-way function $y$, and a \emph{classical} non-interactive witness indistinguishable (NIWI) proof that either $\crs_0$ or $\crs_1$ is correctly generated. 

At this point, it is still unclear whether we have any privacy guarantee, since one of the two strings can be maliciously generated and can therefore leak some information about the witness, if naively used by the prover. Thus, instead of having the prover directly compute the proofs  $(\pi_0, \pi_1)$, we let it compute a classical NIWI for the statement
$$
\left\{
\exists~(\pi_0, \pi_1, z) \text{ such that: }
\begin{array}{l}
\mathsf{Verify}(\crs_0, \pi_0, x) = 1 \text{ OR }\\
\mathsf{Verify}(\crs_1, \pi_1, x) = 1 \text{ OR }\\
\mathsf{OWF}(z) = y.
\end{array}
\right\}.
$$
Since the verification algorithm of our NIZK schem is classical, then so is the above statement.
By the witness indistinguishbility of the NIWI, the verifier cannot distinguish whether the prover inverted the one-way function or possesses a valid proof. Proving soundness requires more work, since our NIZK scheme is only \emph{computationally sound}. To get around this, we further augment the scheme with a statistically hiding sometimes-binding (SBSH) commitment~\cite{EC:KalKhuSah18,EC:GJJM20,EC:BFJKS20}: This tool allows the prover to commit to its witness, which is statistically hidden, except with some (exponentially) small probability where the commitment is efficiently extractable. We can then set the parameters of our primitives to be sufficiently large (i.e.\ use complexity leveraging) to ensure that whenever the extraction even happens, it still leads to a contradiction to the soundness of the NIZK or to the one-wayness of $\mathsf{OWF}$.

\paragraph{ABE for BQP.} We next show how our witness encryption scheme for QMA yields the first ABE scheme for quantum functionalities. For starters, consider again the simplified setting where the key authority issues a single key for a fixed attribute $x$. The witness encryption suggests a natural ABE encryption procedure for a policy encoded by a quantum circuit $Q$: We compute $\WE.\Enc((Q,x), m)$, where the statement $(Q,x)$ returns $1$ if and only if $Q(x)=1$. Note that this is technically a BQP statement (the witness is publicly computable), but a witness encryption for QMA is also a witness encryption for BQP. The challenge is now to define an encryption algorithm that does not need to take the attribute $x$ as an input. Instead of publishing the ciphertext directly, the encrypter obfuscate the (classical) circuit
$$
\Pi_\ABE[\vk,k](x, \sigma): \text{If } \mathsf{Verify}(\mathsf{vk}, \sigma, x) = 1 \text{ return } \WE.\Enc((Q,x), m; \prf(k, x))
$$
where $\prf$ is a puncturable PRF and $\vk$ is the verification key for a signature scheme. A secret key for an attribute $x$ simply consists of a signature $\sigma_x$ on $x$, computed with a signing key held by the key authority. This way, the holder of a key for an attribute $x$ can only extract witness encryption ciphertexts associated with $x$. Some additional work is needed in order to obtain a provably secure scheme, but the main ideas are already present in this outline.

The scheme as described so far does not hide the circuit $Q$, i.e.\ it only satisfies the notion of payload hiding. We show a generic compiler that transforms any quantum ABE with payload-hiding security into one with predicate encryption security, i.e.\ where the circuit $Q$ is hidden to the holders of keys for rejecting attributes. This result is obtained by introducing the notion of \emph{quantum lockable obfuscation} and presenting a construction under LWE.

\paragraph{Constrained PRF for BQP.} Given the above ABE scheme for BQP, one can easily turn any puncturable PRF into a constrained PRF. For convenience, here we consider a key-policy ABE, which can be obtained from the scheme as described above via universal (quantum) circuits. The public parameters of the PRF are augmented with an obfuscated circuit
$$
\Pi_\prf[k, \tilde{k}](x): \text{Return } \ABE.\Enc(x, \PRF(k, x); \PRF(\tilde{k}, x))
$$
where $\tilde{k}$ is an independently sampled key. Note that anyone can query such circuit on any attribute $x$, however only the holder of a key for a policy $Q$ such that $Q(x)=1$ (with overwhelming probability) can recover the PRF output $\PRF(k, x)$ by decrypting the resulting ciphertext. Furthermore, observe that the functionality specified above is entirely classical, and therefore classical obfuscation suffices.




\paragraph{Secret Sharing for Monotone QMA.} It is well-known that witness encryption for NP implies the existence of a secret sharing scheme for monotone NP~\cite{AC:KomNaoYog14}. The high-level idea is to assign to each party $P_i$ the opening of a perfectly binding commitment $c_i = \mathsf{Com}(i; r_i)$ encoding the index corresponding to the party. Then one can publish a witness encryption for the statement
$$
\left\{\exists~ (I\subseteq P, r_1, \dots, r_{|I|}) \text{ such that: } I \in \lang \text{ AND } \forall i\in I: c_i = \mathsf{Com}(i;r_i)\right\}
$$
where $I$, parsed as a binary string, forms a statement in a NP-complete language $\lang$ with witness $w$. It is not hard to show that decrypting the witness encryption (i.e.\ reconstructing the secret) can only be done by an authorized set of parties holding the witness $w$. We show that this construction naturally generalizes to the QMA setting, when given a witness encryption scheme for QMA.

\subsection{Discussion and Open Problems}

We discuss two clear open problems that are suggested by this work. We identify barriers towards making progress on each problem with our current approach.

\paragraph{Obfuscation Beyond Null Circuits.}
In this work, we only consider obfuscating the CVQC verification circuit in the setting where each instance the prover can query will be rejecting (with high probability). One could also consider obfuscating the verification circuit in the setting where the prover can query on an accepting instance, which would help in constructing fully-fledged iO for \emph{all} quantum circuits. We expect obfuscation for general quantum circuits to have a variety of applications and we consider it a fascinating problem in its own right.

Unfortunately, it turns out that this approach is in general insecure and concrete attacks exist against all known constructions of CVQC. We provide a high-level description of these attacks in Section~\ref{sec:cryptanalysis}. The main source of trouble appears to be the lack of \emph{resettable security} of CVQC protocols.  That is, an attacker is able to extract the verifier's secret by observing its responses on accepting instances.

\paragraph{Quantum Null-iO from Standard Assumptions.}
Another natural question is whether one can obtain quantum null-iO from standard cryptographic assumptions. Since we only give a construction of (trapdoor) dual-mode CVQC in the QROM, the resulting quantum null-iO does not achieve provable security. We stress that even without the dual-mode property, two-message CVQC protocols are only known in the QROM. One approach towards evading this barrier could be to instantiate the base CVQC protocol with the two message protocol of \cite{morimae2021classically} (with quantum first message), which is statistically sound. This would result in a valid quantum null-iO since the first message can be computed by the obfuscator and sent along with the obfuscated circuit. However, even in this case, the verification circuit in \cite{morimae2021classically} will accept exponentially many proofs even for no instances, as the underlying delegation of quantum computation protocol has a probabilistic verifier (that chooses which Hamiltonian terms to measure on each copy of the history state). Even though soundness can be driven to negligible by parallel repetition, this also rapidly increases the proof size. Thus, attempting to hybrid over each proof will fail, since the number of hybrids will be much larger than inverse of the soundness error. Given this barrier, we leave constructing any of the primitives discussed in this work in the standard model and against standard cryptographic assumptions, as an intriguing open problem.

In Appendix~\ref{sec:vbb}, we present an alternative construction of quantum null-iO assuming classical virtual-blackbox (VBB) obfuscation. While it is known that VBB obfuscation is in general impossible~\cite{C:BGIRSVY01}, classical VBB obfuscation has been used as a heuristic method to analyze the security of certain schemes. Recent examples include fully-homomorphic encryption for RAM programs~\cite{C:HHWW19} and one-shot signatures~\cite{STOC:AGKZ20}. As another example, Aaronson and Christiano~\cite{STOC:AarChr12} made use of ideal classical obfuscation to establish the feasibility of public-key quantum money. This influential result inspired a fruitful line of research, including a result by Zhandry~\cite{EC:Zhandry19b} that showed how to instantiate their original approach from indistinguishability obfuscation. 


\ifsubmission
\section{Preliminaries}
We denote by $\lambda$ the security parameter. A function $f : \mathbb{N} \rightarrow [0, 1]$ is negligible if for every constant $c \in \mathbb{N}$ there exists $N \in \mathbb{N}$ such that for all $n > N$, $f(n) < n^{-c}$. We recall some standard notation for classical Turing machines and Boolean circuits:
\begin{itemize}
\item We say that a Turing machine (or algorithm) is PPT if it is probabilistic and runs in polynomial time in $\lambda$.
\item We sometimes think about PPT Turing machines as polynomial-size uniform families of circuits. A polynomial-size circuit family $C$ is a sequence of circuits
$C = \{C_\lambda\}_{\lambda \in \mathbb{N}}$, such that each circuit $C_\lambda$ is of polynomial size $\lambda^{O(1)}$ and has $\lambda^{O(1)}$ input and output bits. We say that the family is uniform if there exists a polynomial-time deterministic Turing machine $M$ that on input $1^\lambda$ outputs $C_\lambda$.
\item For a PPT Turing machine (algorithm) $M$, we denote by $M(x; r)$ the output of $M$ on input $x$ and random coins $r$. For such an algorithm, and any input $x$, we write $m \in M(x)$ to denote that $m$ is in the support of $M(x;\cdot)$. Finally we write $y \sample M(x)$ to denote the computation of $M$ on input $x$ with some uniformly sampled random coins.
\end{itemize}

\subsection{Quantum Computation}

We recall some notation for quantum computation and we define the notions of computational and statistical indistinguishability for quantum adversaries. Various parts of what follows are taken almost verbatim from~\cite{STOC:BitShm20}.

We say that a Turing machine (or algorithm) is QPT if it is quantum and runs in polynomial time. We sometimes think about QPT Turing machines as polynomial-size uniform families of quantum circuits (as these are equivalent models). We call a polynomial-size quantum circuit family $C = \{C_\lambda\}_{\lambda \in \mathbb{N}}$ uniform if there exists a polynomial-time deterministic Turing machine $M$ that on input $1^\lambda$ outputs $C_\lambda$.

Throughout this work, we model efficient adversaries as quantum circuits with non-uniform quantum advices. This is denoted by $\adv^* = \{\adv^*_\lambda, \rho_\lambda\}_{\lambda \in \mathbb{N}}$, where $\{\adv^*_\lambda\}_{\lambda \in \mathbb{N}}$ is a polynomial-size non-uniform sequence of quantum circuits, and $\{\rho_\lambda\}_{\lambda \in \mathbb{N}}$ is some polynomial-size sequence of mixed quantum states. We now define the formal notion of computational indistinguishability in the quantum setting, where random variables $X,Y$ are represented as mixed quantum states.

\begin{definition}[Computational Indistinguishability]
Two ensembles of quantum random variables $\mathcal{X} = \{X_\lambda\}_{\lambda \in \mathbb{N}}$ and $\mathcal{Y} = \{Y_\lambda\}_{\lambda \in \mathbb{N}}$ are said to be
 computationally indistinguishable (denoted by $\mathcal{X} \approx_c \mathcal{Y}$) if there exists a negligible function $\nu$ such that for all $\lambda \in \mathbb{N}$ and all non-uniform
 QPT distinguishers with quantum advice 
 $\adv = \{\adv_\lambda, \rho_\lambda\}_{\lambda \in \mathbb{N}}$, it holds that
 \[
 \left|\Pr[\adv_\secp(X_\secp; \rho_\secp) = 1] - \Pr[\adv(Y_\secp; \rho_\secp) = 1]\right| \leq \nu(\lambda).
 \]
\end{definition}
The trace distance between two quantum distributions $(X_\lambda, Y_\lambda)$, denoted by $\mathsf{TD}(X_\lambda, Y_\lambda)$, is a generalization of statistical distance to the quantum setting and represents the maximal distinguishing advantage between two quantum distributions by an unbounded quantum algorithm. We define below the notion of statistical indistinguishability.
\begin{definition}[Statistical Indistinguishability]
Two ensembles of quantum random variables $\mathcal{X} = \{X_\lambda\}_{\lambda \in \mathbb{N}}$ and $\mathcal{Y} = \{Y_\lambda\}_{\lambda \in \mathbb{N}}$ are said to be
 statistically indistinguishable (denoted by $\mathcal{X} \approx_s \mathcal{Y}$) if there exists a negligible function $\nu$ such that for all $\lambda \in \mathbb{N}$, it holds that
 \[
    \mathsf{TD}(X_\lambda, Y_\lambda) \leq \nu(\lambda).
 \]
\end{definition}

Throughout this work, we will often consider quantum circuits $Q$ that output a single classical bit. Any such circuit can be written as a unitary followed by a computational basis measurement of the first qubit. When we compute $Q$ on some classical (resp. quantum) input $x$ (resp. $\ket{\psi}$), we write $Q(x)$ (resp. $Q(\ket{\psi})$) to denote the output that results from padding the input with sufficiently many ancillary $\ket{0}$ states (as determined by the description of $Q$), computing a unitary, and then measuring the first qubit. We will sometimes consider the following restricted families of ``psuedo-deterministic'' quantum circuits.

\begin{definition}[Pseudo-Deterministic Quantum Circuit]
\label{def: deterministic}
A family of psuedo-deterministic quantum circuits is defined by a family of circuits $\{Q_\secp\}_{\secp \in \mathbb{N}}$. The circuit defined by $Q_\secp$ takes as input a bit string $x \in \{0,1\}^{n(\secp)}$ (along with ancillary 0 states) and outputs a single classical bit $b \gets U(x)$. The circuit is pseudo-deterministic if there exists a negligible function $\nu$ such that for every sequence of classical inputs $\{x_\secp\}_{\secp \in \mathbb{N}}$, there exists a sequence of outputs $\{b_\secp\}_{\secp \in \mathbb{N}}$ such that \[\Pr[Q_\secp(x_\secp) = b_\secp] = 1-\nu(\secp).\] 
\end{definition}



\subsection{Learning with Errors}

We recall the definition of the learning with errors (LWE) problem~\cite{STOC:Regev05}.
\begin{definition}[Learning with Errors]
The LWE problem is parametrized by a modulus $q = q(\lambda)$, polynomials $n=n(\lambda)$ and $m=m(\lambda)$, and an error distribution $\chi$. The LWE problem is hard if it holds that
\[
(\mathbf{A},\mathbf{A} \cdot \mathbf{s} + \mathbf{e}) \approx_c (\mathbf{A},\mathbf{u})
\]
where $\mathbf{A}\sample\ZZ_q^{m \times n}$, $\mathbf{s}\sample\ZZ_q^n$, $\mathbf{u}\sample\ZZ_q^m$, and $\mathbf{e}\sample\chi^m$.
\end{definition}
As shown in \cite{STOC:Regev05,STOC:PeiRegSte17}, for any sufficiently large modulus $q$ the LWE problem where $\chi$ is a discrete Gaussian distribution with parameter $\sigma = \xi q \ge 2 \sqrt{n}$ (i.e.\ the distribution over $\ZZ$ where the probability of $x$ is proportional to $e^{-\pi (|x|/\sigma)^2}$), is at least as hard as approximating the shortest independent vector problem (SIVP) to within a factor of $\gamma = \tilde{O}({n}/\xi)$ in \emph{worst case} dimension $n$ lattices.

\subsection{Quantum Fully-Homomorphic Encryption}

We recall the notion of quantum fully homomorphic encryption (QFHE)~\cite{C:BroJef15}. In this work we are interested in QFHE schemes with classical keys, classical encryption of classical messages, and classical decryption and therefore we only define QFHE for this restricted case.

\begin{definition}[Quantum Homomorphic Encryption]\label{def:qfhe}
A quantum homomorphic encryption scheme $(\QFHE.\Gen,\allowbreak\QFHE.\Enc,\allowbreak \QFHE.\Eval,\allowbreak \QFHE.\Dec)$ consists of the following efficient algorithms.
\begin{itemize}
    \item $\QFHE.\Gen(1^\lambda)$: On input the security parameter, the key generation algorithm returns secret/public key pair $(\sk, \pk)$.
    \item $\QFHE.\Enc(\pk, m)$: On input the public key $\pk$ and a message $m$, the encryption algorithm returns a ciphertext $c$.
    \item $\QFHE.\Eval(\pk, C, c)$: On input the public key $\pk$, a quantum circuit $C$, and a ciphertext $c$, the evaluation algorithm returns an evaluated  ciphertext $\Tilde{c}$.
    \item $\QFHE.\Dec(\sk, c)$: On input the secret key $\sk$ and a ciphertext $c$, the decryption algorithm returns a message $m$.
\end{itemize}
\end{definition}
Analogously to the classical case~\cite{STOC:Gentry09}, we say that the scheme is fully homomorphic if the evaluation algorithm supports all polynomial-size quantum circuits. We say that the scheme is leveled if the maximum depth of the homomorphically evaluated circuits is bounded by the key generation algorithm (formally, the key generation algorithm takes as input an additional $1^d$ depth parameter). In this work, we are only interested in levelled schemes, so we drop the adjective wherever it is clear from the context. Next we define the notion of (single-hop) evaluation correctness for QFHE.

\begin{definition}[Evaluation Correctness]
A quantum homomorphic encryption scheme $(\QFHE.\Gen,\allowbreak\QFHE.\Enc,\allowbreak \QFHE.\Eval,\allowbreak \QFHE.\Dec)$ is correct if for all $\lambda \in\mathbb{N}$, all $(\sk, \pk)\in\QFHE.\Gen(1^\lambda)$, all messages $m$, and all polynomial-size quantum circuits $C$, it holds that
\[
\QFHE.\Dec(\sk, \QFHE.\Eval(\pk, C, \QFHE.\Enc(\pk,m))) \approx_s C(m).
\]
\end{definition}
The notion of semantic security is defined analogously to the classical case, and we refer the reader to~\cite{C:BroJef15} for a formal definition. 
The works of Mahadev~\cite{FOCS:Mahadev18b} and Brakerski~\cite{C:Brakerski18} show that QFHE with classical keys can be constructed from the quantum hardness of the LWE problem. We recall their results below.
\begin{lemma}[\cite{FOCS:Mahadev18b,C:Brakerski18}]
Assuming the quantum hardness of the LWE problem, there exists a (leveled) QFHE scheme $(\QFHE.\Gen,\allowbreak\QFHE.\Enc,\allowbreak \QFHE.\Eval,\allowbreak \QFHE.\Dec)$ with classical keys and classical decryption.
\end{lemma}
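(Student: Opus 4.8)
The plan is not to reprove this from scratch but to recall the constructions of Mahadev~\cite{FOCS:Mahadev18b} and Brakerski~\cite{C:Brakerski18}, which bootstrap a \emph{classical} leveled FHE scheme into a QFHE scheme with classical keys, classical encryption of classical messages, and classical decryption. First I would invoke an LWE-based classical leveled FHE scheme (e.g.\ a GSW-style scheme), whose key generation, encryption of bit strings, and decryption are all classical, and which supports classical homomorphic evaluation of bounded-depth Boolean circuits. The QFHE keys and the ciphertext $\QFHE.\Enc(\pk,m)$ of a classical message $m$ are taken to be exactly those of the classical scheme, so the ``classical keys'' and ``classical encryption of classical messages'' requirements are immediate; all the content is in defining $\QFHE.\Eval$ on an arbitrary polynomial-size quantum circuit $C$.

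The evaluation strategy is the standard ``encrypt the Pauli one-time pad'' template. To process $C$ on the computational-basis input $m$ (plus the needed ancillas), the evaluator maintains throughout the invariant that the physical register holds a Pauli one-time pad of the true logical state, where the pad keys $(x\text{-key}, z\text{-key})$ are available only as \emph{classical FHE ciphertexts}. Clifford gates are easy: since the Clifford group normalizes the Pauli group, applying the Clifford gate to the encrypted register and updating the encrypted pad keys by the induced affine, classically computable map restores the invariant, and this update is done under the classical FHE. Computational-basis measurements are likewise easy, as measuring a register masked by an $x$-key bit yields the true bit XOR that (classically encrypted) bit.

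The main obstacle, and the place where both~\cite{FOCS:Mahadev18b} and~\cite{C:Brakerski18} do their real work, is the non-Clifford gate, say the $T$ gate: applying $T$ to a qubit masked by an $x$-key bit $s$ introduces a spurious phase-gate error $P^s$ depending on the secret bit $s$, which the evaluator holds only encrypted. The fix is an \emph{encrypted} $\mathrm{CNOT}$ operation: using the LWE structure of the classical FHE ciphertext of $s$ (through a noisy trapdoor claw-free / lossy function family in Mahadev's version, or a simpler gadget instantiation in Brakerski's), one coherently applies a $\mathrm{CNOT}$ controlled on the encrypted bit $s$, at the cost of an additional Pauli correction that is known up to a measurement outcome and is again folded into the encrypted pad keys. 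Iterating gate by gate over $C$ while carrying the encrypted pad keys maintains the invariant; because $C$ is polynomial-size, the key-update circuits fed into the classical FHE have bounded depth, which is why a \emph{leveled} classical FHE (with depth bound a function of $|C|$) suffices, and the resulting QFHE is itself leveled. I expect this encrypted-$\mathrm{CNOT}$ step, and the precise accounting of the residual Pauli corrections it leaves behind, to be the only delicate part of a full write-up.

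Finally, $\QFHE.\Dec$ is classical: after evaluating a circuit $C$ that outputs a classical bit string, the evaluator measures the output register in the computational basis, obtaining the true output masked by the relevant $x$-key bits, and outputs that string together with the classical FHE ciphertext of those key bits; $\QFHE.\Dec$ runs the classical FHE decryption on the key ciphertext and XORs. Evaluation correctness then reduces to (i) the statistical correctness of the classical FHE on the polynomially many key-update circuits and (ii) the exactness of the Pauli/Clifford bookkeeping together with the encrypted-$\mathrm{CNOT}$ identity, which give $\QFHE.\Dec(\sk, \QFHE.\Eval(\pk, C, \QFHE.\Enc(\pk,m))) \approx_s C(m)$. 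Semantic security is inherited directly from the semantic security of the underlying LWE-based classical FHE, since the auxiliary material consumed by the encrypted-$\mathrm{CNOT}$ operation is part of the evaluation procedure rather than the ciphertext; this holds under the quantum hardness of LWE, completing the argument.
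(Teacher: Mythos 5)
The paper does not prove this lemma; it is stated as a citation of the results of Mahadev~\cite{FOCS:Mahadev18b} and Brakerski~\cite{C:Brakerski18}, with no further argument given. Your sketch is a faithful high-level account of the construction in those works (Pauli one-time-pad invariant, Clifford group normalization, encrypted-$\mathrm{CNOT}$ via claw-free/lossy LWE gadgets to handle $T$-gate phase errors, leveled classical FHE for the encrypted key updates), so it matches the intended justification even though the paper offers none explicitly.
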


\else

\ifsubmission\section{Additional Preliminaries}\else\fi

\subsection{Program Obfuscation}

We recall two notions of obfuscation for classical circuits: indistinguishability obfuscation and virtual black-box obfuscation. An obfuscator $\obf$ is a PPT algorithm that takes as input a circuit $C$ and outputs an obfuscated circuit $\widetilde{C}$. For both notions of obfuscation, $\obf$ must satisfy the following correctness property.
\begin{definition}[Correctness]
An obfuscator $\obf$ is correct if for all circuits $C$, it holds that
\[
\Pr\left[\forall x \in \{0,1\}^n: C(x)= \obf(1^\lambda, C)(x)\right] = 1.
\]
\end{definition}

Now we define the two notions of security. First, we define \emph{indistinguishability obfuscation}.
\begin{definition}[Indistinguishability Obfuscation]
For all pairs of circuits $(C_0, C_1)$ such that $|C_0| = |C_1|$, and such that for all inputs $x$, it holds that $C_0(x) = C_1(x)$, it holds that
\[
\obf(1^\lambda, C_0) \approx_c \obf(1^\lambda, C_1).
\]
\end{definition}
This definition extends to the post-quantum setting by allowing the adversary to be QPT.

\subsection{Puncturable Pseudorandom Functions}

We recall the definition of a puncturable pseudorandom function (PRF) from~\cite{STOC:SahWat14}.
\begin{definition}[Puncturable Pseudorandom Function]
A puncturable PRF $(\PRF.\Gen, \PRF.\Puncture,\allowbreak \PRF.\Eval)$ consists of the following efficient algorithms.
\begin{itemize}
    \item $\PRF.\Gen(1^\lambda)$: On input the security parameter, the key generation algorithm returns a key $k$.
    \item $\PRF.\Puncture(k,z)$: On input a key $k$ and a point $z$, the puncturing algorithm returns the punctured key $k_z$.
    \item $\PRF.\Eval(k, x)$: On input a key $k$ and a string $x\in\{0,1\}^\lambda$, the evaluation algorithm returns a string $y\in\{0,1\}^{\lambda}$.
\end{itemize}
\end{definition}
Correctness requires that the evaluation over the punctured key agrees with the evaluation over the non-punctured key, except for the punctured point.
\begin{definition}[Correctness]
A puncturable PRF $(\PRF.\Gen, \PRF.\Puncture,\allowbreak \PRF.\Eval)$ is correct if for all $\lambda\in\mathbb{N}$, all $z \in\{0,1\}^\lambda$, and all strings $x\neq z$ it holds that
\[
\Pr\left[\PRF.\Eval(k, x) = \PRF.\Eval(k_z, x)\right] =1
\]
where $k \sample \PRF.\Gen(1^\lambda)$ and $k_z \sample \PRF.\Puncture(k, z)$.
\end{definition}
Pseudorandomness requires that the evaluation of the PRF at any point $z$ is computationally indistinguishable from random, even given the punctured key $k_z$.
\begin{definition}[Pseudorandomness]
A puncturable PRF $(\PRF.\Gen, \PRF.\Puncture,\allowbreak \PRF.\Eval)$ is pseudorandom if for all $\lambda\in\mathbb{N}$ and all $z \in\{0,1\}^\lambda$ it holds that
\[
(\PRF.\Eval(k, z), k_z) \approx_c (u, k_z)
\]
where $k \sample \PRF.\Gen(1^\lambda)$, $k_z \sample \PRF.\Puncture(k, z)$, and $u \sample \{0,1\}^\lambda$.
\end{definition}
\fi

\section{Obfuscation of Null Quantum Circuits}\label{sec:qiO}

In this section, we present a scheme to obfuscate null quantum circuits. Then we show that such a scheme implies the existence of a witness encryption scheme for QMA.

\subsection{Classical Verification of Quantum Computation}

Mahadev \cite{FOCS:Mahadev18a} gave the first protocol for classical verification of quantum computation. The protocol was recently improved to only require two messages in the quantum random oracle model~\cite{TCC:ChiChuYam20,TCC:ACGH20}. 


\begin{definition}[Classical Verification of Quantum Computation (CVQC)]\label{def: classical argument} A two-message classical verification of quantum computation protocol consists of algorithms $(\KeyGen, \Prove, \Verify)$. The following is defined for quantum circuits $Q$ with one classical bit of output, and involves a random oracle $\cH$.
\begin{itemize}
    \item $\KeyGen(1^\secp,Q)$: On input the security parameter and a quantum circuit $Q$, the PPT algorithm $\KeyGen$ outputs public parameters $\params$ and a verification key $r$.
    \item $\Prove^\cH(\params,\ket{\psi})$: On input the public parameters $\params$ and a quantum state $\ket{\psi}$, the QPT $\Prove$ algorithm, with query access to random oracle $\cH$, outputs a proof $\pi$.
    \item $\Verify^\cH(Q,\pi,r)$: On input a quantum circuit $Q$, a proof $\pi$, and a verification key $r$, the PPT algorithm $\Verify$, with query access to a random oracle $\cH$, outputs a bit $b$ indicating acceptance or rejection.
\end{itemize}
\end{definition}
We require the following notion of correctness.
\begin{definition}[Correctness]
A CVQC protocol $(\KeyGen, \Prove, \Verify)$ is correct if for any negligible function $\nu$, there exists a polynomial $k$ and negligible function $\mu$ such that for all polynomial-size families of quantum circuits $\{Q_\secp\}_{\secp \in \bbN}$ and inputs $\{\ket{\psi_\secp}\}_{\secp \in \bbN}$ such that there exists $b \in \{0,1\}$ such that $\Pr[Q_\secp(\ket{\psi_\secp}) = b] \geq 1-\nu(\secp)$, it holds that
    $$
    \begin{array}{r}
    \Pr\left[\Verify^\cH(Q_\secp,\pi,r) = b: (\params,r) \sample \KeyGen(1^\secp,Q_\secp), \pi \sample \Prove^\cH(\params,\ket{\psi_\secp}^{\otimes k(\secp)})\right] = 1 - \mu(\secp).
    \end{array}$$
\end{definition}

\begin{remark}
We state correctness above with respect to (quantum circuit, input) pairs that either accept or reject with overwhelming probability. By standard QMA amplification, a protocol that satisfies this correctness guarantee can also be used to verify (quantum circuit, input) pairs that either accept with probability $\alpha$ or reject with probability $\beta$, where $\alpha$ and $\beta$ are separated by an inverse polynomial.
\end{remark}

\paragraph{Soundness.} We define the notion of soundness below.
\begin{definition}[Soundness]
A CVQC protocol $(\KeyGen, \Prove, \Verify)$ is sound if for any negligible function $\nu$, any polynomial-size family of quantum circuits $\{Q_\secp\}_{\secp \in \bbN}$ such that for all inputs $\{\ket{\psi_\secp}\}_{\secp \in \bbN}$, $\Pr[Q_\secp(\ket{\psi_\secp}) = 1] \leq \nu(\secp)$, and all QPT adversaries $\cA$, there exists a negligible function $\mu$ such that 
    $$\Pr\left[\begin{array}{c} \Verify^\cH(Q_\secp,\pi,r) = 1 \end{array} : \begin{array}{r}(\params,r) \sample \KeyGen(1^\secp,Q_\secp), \\ \pi \sample \cA^{\ket{\cH}}(\params) \end{array}\right] = \mu(\secp).$$
    The notation $\cA^{\ket{\cH}}$ indicates that $\cA$ has \emph{quantum} query access to the random oracle $\cH$.
\end{definition}

\paragraph{Trapdoor Dual-Mode CVQC.} In the following we define CVQC with a strong notion of soundness, which involves a \emph{dual-mode} verifier algorithm. 

\begin{definition}[Trapdoor Dual-Mode CVQC]
A trapdoor dual-mode CVQC protocol $(\KeyGen, \Prove, \Verify)$ includes an additional triple of algorithms $(\TKeyGen, \TVerify, \SKeyGen)$ with the following properties.
\begin{itemize}
    \item (Setup Indistinguishability) For any polynomial-size family of quantum circuits $\{Q_\secp\}_{\secp \in \bbN}$ it holds that 
    $$
    \KeyGen(1^\lambda, Q_\secp) \approx_c (\params_{\td},r_{\td})
    $$
    where $(\params_{\td},r_{\td}, \td) \sample \TKeyGen(1^\lambda, Q_\secp)$.
    \item (Verification Equivalence) For any polynomial-size family of quantum circuits $\{Q_\secp\}_{\secp \in \bbN}$, and any proof $\pi$, it holds that
    $$
    \Verify(Q_\secp,\pi,r_{\td}) = \TVerify(Q_\secp,\pi,\td)
    $$
    where $(\params_{\td},r_{\td}, \td) \sample \TKeyGen(1^\lambda, Q_\secp)$.
    
    \item (Dual-Mode)
    For any negligible function $\nu$, any polynomial-size family of quantum circuits $\{Q_{\secp}\}_{\secp \in \bbN}$ such that for all inputs $\{\ket{\psi_\secp}\}_{\secp \in \bbN}$, $\Pr[Q_{\secp}(\ket{\psi_\secp}) = 1] \leq \nu(\secp)$, it holds that 
    $$(\params_\td,\td) \approx_c (\params_\simgen,\td_\simgen),$$ where $(\params_{\td},r_{\td}, \td) \sample \TKeyGen(1^\lambda, Q_\secp)$, $(\params_\simgen,\td_\simgen) \gets \SKeyGen(1^\secp,Q_\secp)$, and $\TVerify(Q_\secp,\cdot,\td_\simgen)$ has no accepting input.
 \end{itemize}
\end{definition}
Let $(\KeyGen,\Prove,\Verify)$ be a two-message CVQC scheme (Definition~\ref{def: classical argument}). Our construction of trapdoor dual-mode CVQC is identical to such a two-message CVQC except for the modified prover algorithm
$$
\Prove^{\ast \cH}(\params,\ket{\psi}): \text{Return } (\pi \sample \Prove^{\cH}(\params,\ket{\psi}), \cH(\pi)),
$$
where $\cH$ is a random oracle with range $\{0,1\}^{\secp + 1}$, and the modified verification algorithm
$$
\Verify^{\ast \cH}(Q, (\pi, h), r): \text{If } \cH(\pi) \stackrel{?}{=} h \text{ return }\Verify^{\cH}(Q, \pi, r), \text{ else return } 0.
$$
In order to prove our construction, we are going to assume the existence of a quantum-secure PRF $F : \{0,1\}^\secp \times \{0,1\}^* \to \{0,1\}$, that is, a function that remains computationally indistinguishable from a truly random function even when the adversary can issue superposition queries. Quantum-secure PRFs are known from quantum-secure one-way functions~\cite{FOCS:Zhandry12}.

\begin{theorem}
Let $(\KeyGen,\Prove,\Verify)$ be a two-message CVQC scheme and let $F$ be a quantum-secure PRF. Then $(\KeyGen,\Prove^\ast,\Verify^\ast)$ is a trapdoor dual-mode CVQC in the QROM.
\end{theorem}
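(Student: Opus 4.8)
The plan is to construct the three additional algorithms $(\TKeyGen, \TVerify, \SKeyGen)$ explicitly and then verify the four required properties in turn, with the bulk of the work concentrated on the Dual-Mode property.

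First I would define the algorithms. $\TKeyGen(1^\secp, Q)$ runs $(\params, r) \sample \KeyGen(1^\secp, Q)$, samples a PRF key $\td \sample \{0,1\}^\secp$, sets $\params_\td = \params$ and $r_\td = r$, and additionally programs the random oracle $\cH$ so that on input $\pi$ the first $\secp$ bits are fresh uniform randomness and the last bit is $F(\td, \pi) \oplus \Verify^\cH(Q, \pi, r)$. (One must be slightly careful here: $\Verify$ itself may query $\cH$, so the programming should be understood as defining $\cH$ on ``prover-style'' inputs in terms of a base random oracle that $\Verify$ queries; since the excerpt works in the QROM this is standard and I would phrase it as: the trapdoor oracle $\cH_\td$ agrees with a fresh random oracle $\cH$ except that the last output bit on input $\pi$ is XORed with $F(\td,\pi) \oplus \Verify^{\cH}(Q,\pi,r)$.) The trapdoor itself is $\td$ (together with $Q$, $\params$). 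Then $\TVerify(Q, (\pi, h), \td)$ parses $h = (h', h'')$ with $h' \in \{0,1\}^\secp$, $h'' \in \{0,1\}$, checks that $h' $ equals the first $\secp$ bits of $\cH_\td(\pi)$, and outputs $h'' \oplus F(\td, \pi)$. Finally $\SKeyGen(1^\secp, Q)$ is identical to $\TKeyGen$ except that the last output bit of the programmed oracle is simply $F(\td,\pi)$ (equivalently, $\Verify$'s output is replaced by $0$ in the programming), and it outputs $(\params_\simgen, \td_\simgen) = (\params, \td)$.

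Next I would check the easy properties. \emph{Setup Indistinguishability}: the pair $(\params_\td, r_\td)$ output by $\TKeyGen$ is literally $(\params, r)$ from $\KeyGen$ as a random variable — the only difference between the two experiments is how $\cH$ is implemented, and by the pseudorandomness of $F$ (against superposition queries) the programmed oracle $\cH_\td$ is computationally indistinguishable from a truly random oracle even given $\td$; since the distinguisher only ever sees $(\params, r)$ plus oracle access, this gives $\approx_c$. \emph{Verification Equivalence}: for any $(\pi, h)$ with $h = (h', h'')$, the check $\cH_\td(\pi) = h$ done by $\Verify^\ast$ is exactly the check $h' = (\cH_\td(\pi))_{1..\secp}$ and $h'' = (\cH_\td(\pi))_{\secp+1} = F(\td,\pi) \oplus \Verify^{\cH}(Q,\pi,r)$; when both checks pass, $\Verify^\ast$ outputs $\Verify^\cH(Q,\pi,r)$ while $\TVerify$ outputs $h'' \oplus F(\td,\pi) = \Verify^\cH(Q,\pi,r)$, and when the first check fails both output $0$ — so the functions agree on every input, giving the required functional equivalence (it holds as an identity, not just with high probability). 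Correctness of the overall scheme (honest prover) is immediate from correctness of the base CVQC plus the fact that the honest prover attaches $h = \cH(\pi)$ so the consistency check passes; the last bit unmasks to $\Verify^\cH(Q,\pi,r)$.

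The main obstacle is the \emph{Dual-Mode} property, which has two parts. The ``no accepting input'' part is immediate by construction: under $\SKeyGen$ the last output bit is always $F(\td,\pi)$, so $\TVerify(Q, (\pi, h), \td_\simgen) = h'' \oplus F(\td,\pi)$, and this is $1$ only if $h'' = 1 \oplus F(\td,\pi)$; but then the consistency check $h' = (\cH_{\td_\simgen}(\pi))_{1..\secp}$ can still pass, so one must be careful — actually the output bit $h''$ must equal $(\cH_{\td_\simgen}(\pi))_{\secp+1} = F(\td,\pi)$ for the consistency check on the full string $h$ to pass, hence $h'' \oplus F(\td,\pi) = 0$ and there is genuinely no accepting input. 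The harder part is showing $(\params_\td, \td) \approx_c (\params_\simgen, \td_\simgen)$ for null circuits $Q$. Here $\params_\td = \params_\simgen = \params$ and $\td_\simgen = \td$ as random variables, so the only difference is again the oracle: in the $\TKeyGen$ world the last bit is $F(\td,\pi) \oplus \Verify^\cH(Q,\pi,r)$, in the $\SKeyGen$ world it is $F(\td,\pi)$. So I must show that no QPT distinguisher with superposition access to the oracle and knowledge of $(\params, \td)$ can tell these apart. The reduction is to soundness of the base CVQC: given a distinguisher $\cD$ with advantage $\epsilon$, I would build a soundness adversary $\cA$ that receives $(\params, r)$ implicitly... but wait, the soundness adversary does \emph{not} get $r$, only $\params$. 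The point is that the oracle $\cH_\td$ can be simulated by $\cA$ without $r$ \emph{except} for needing $\Verify^\cH(Q,\pi,r)$ in the last bit — and by null-ness of $Q$, $\Verify$ outputs $1$ only on $\pi$ that violate soundness. So I would argue: simulate the oracle by using $F(\td,\pi)$ for the last bit on all inputs (i.e. pretend we are in the $\SKeyGen$ world); this simulated oracle differs from the true $\cH_\td$ oracle only on the register-weight supported on accepting proofs $\pi$; by a standard QROM argument (e.g. the ``one-way to hiding'' / O2H lemma, or Zhandry's technique of measuring a random query), if $\cD$ distinguishes with advantage $\epsilon$ then measuring a uniformly random one of its (polynomially many) oracle queries yields, with probability $\Omega(\epsilon^2 / q^2)$ or so, a proof $\pi$ with $\Verify^\cH(Q,\pi,r) = 1$ — contradicting soundness of the base CVQC since $Q$ is null. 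I would cite the O2H lemma for the quantitative step rather than redo it. One subtlety to flag: $\cA$ needs to run $\Verify^\cH(Q,\pi,r)$ on the extracted $\pi$, but $\cA$ is the one interacting with the CVQC verifier, so it has $r$ — actually $\cA$ submits $\pi$ to the external $\Verify$, which is exactly the soundness game, so no issue. The reduction is clean modulo correctly invoking O2H and bookkeeping the $F$-to-random-function swap (another hybrid, justified by quantum PRF security) so that the ``hidden'' part of the oracle is genuinely a fresh random function to which O2H applies.
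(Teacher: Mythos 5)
Your construction and overall argument structure match the paper's proof closely: identical definitions of $\TKeyGen$, $\TVerify$, and $\SKeyGen$; Setup Indistinguishability via the PRF security of $F$ (the paper does this in three small hybrids separating the first $\secp$ bits from the last); Verification Equivalence by inspection; and Dual-Mode by reducing a distinguisher to a CVQC soundness breaker that measures one of the adversary's oracle queries (the paper phrases this as a hybrid over the query index rather than citing an O2H-style lemma, but it is the same underlying idea). Two slips are worth correcting. First, in Setup Indistinguishability you claim the programmed oracle is indistinguishable from random ``even given $\td$''; this is false, and also unnecessary, since the Setup Indistinguishability distinguisher receives only $(\params,r)$ and not $\td$. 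Second, and more substantively, in the Dual-Mode argument you propose an auxiliary hybrid that replaces $F(\td,\cdot)$ by a truly random function ``justified by quantum PRF security.'' That step would fail: the Dual-Mode distinguisher is handed $\td$ in the clear, so it can evaluate $F(\td,\cdot)$, XOR it out of the oracle's last bit, and observe whether the residue equals $\Verify(Q,\cdot,r)$ (which is essentially the all-zero function, since $Q$ is null) or a fresh random bit per query — these are trivially distinguishable, so PRF security gives you nothing here. Fortunately that hybrid is superfluous: the two oracles you must distinguish already differ only on inputs $\pi$ with $\Verify(Q,\pi,r)=1$, which is exactly the form needed for the measure-a-random-query reduction to CVQC soundness, with no PRF swap at all. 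Remove that extra step and your proof coincides with the paper's.
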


\begin{proof}
First, we define the algorithms $(\TKeyGen,\TVerify,\SKeyGen)$.
\begin{itemize}
\item $\TKeyGen(1^\secp,Q):$ Sample $(\params,r) \sample \KeyGen(1^\secp,Q)$ and a PRF key $k$. Set $\td = k$ and output $(\params,r,\td)$. Also, the random oracle $\cH$ will be implemented using a random oracle $\cG : \{0,1\}^* \to \{0,1\}^\secp$ and evaluated as $\cH(x) = (\cG(x), F(\td,x) \oplus \Verify(Q,x,r))$.
\item $\TVerify^\cH(Q,(\pi,h),\td):$ If $\cH(\pi) \stackrel{?}{=} h$ return $h_{\secp + 1} \oplus F(\td,\pi)$.
\item $\SKeyGen(1^\secp,Q):$ Same as $\TKeyGen$ except that the random oracle $\cH$ will be implemented as $\cH(x) = (\cG(x), F(\td,x))$.

\end{itemize}

We now argue that the three properties are satisfied.

To show Setup Indistinguishability, we first note that $(\params,r)$ are sampled identically by $\KeyGen$ and $\TKeyGen$. Thus, it suffices to argue that the adversary cannot notice the change in the implementation of the random oracle $\cH$, which we show via a sequence of hybrids.  Let $\cG: \{0,1\}^* \to \{0,1\}^\secp$ and $\cI : \{0,1\}^* \to \{0,1\}$ be random oracles.

\begin{itemize}
    \item $\Hyb_0:$ This is the original hybrid, where $\cH$ is implemented as a random oracle $\{0,1\}^* \to \{0,1\}^{\secp + 1}$.
    \item $\Hyb_1:$ Implement $\cH(x)$ as $(\cG(x),\cI(x))$, which is perfectly indistinguishable from $\Hyb_0$.
    \item $\Hyb_2:$ Implement $\cH(x)$ as $(\cG(x),\cI(x) \oplus \Verify(Q,x,r))$, which is perfectly indistinguishable from $\Hyb_1$.
    \item $\Hyb_3:$ Implement $\cH(x)$ as $(\cG(x), F(\td,x) \oplus \Verify(Q,x,r))$, which is computationally indistinguishable from $\Hyb_2$ due to the security of $F$.
\end{itemize}

Next, Verification Equivalence follows by definition. Finally, we show the Dual-Mode property. First, we note that the fact that $\TVerify(Q,\cdot,\td_\simgen)$ has no accepting input follows by definition. It remains to argue that $(\params_\td,\td) \approx_c (\params_\simgen,\td_\simgen)$, which we show via a sequence of hybrids. Let $q$ be an upper bound on the number of random oracle queries made by the adversary. 

\begin{itemize}
    \item $\Hyb_0:$ This is $(\params_\td,\td)$ sampled by $\TKeyGen(1^\secp,Q)$ and $\cH$ implemented as $\cH(x) = (\cG(x), F(\td,x) \oplus \Verify(Q,x,r))$.
    \item $\Hyb_{1,i}$ for $i \in [q]:$ The adversary's first $i$ queries to $\cH$ are answered as $\cH(x) = (\cG(x),F(\td,x))$. That is, in the first $i$ queries, $\Verify(Q,x,r)$ is always set to 0. The indistinguishability $\cH_{1,i} \approx_c \cH_{1,i-1}$ follows from the soundness of the CVQC protocol. Indeed, an adversary can only distinguish if its $i$'th query has some inverse polynomial amplitude on $x$ such that $\Verify^\cH(Q,x,r) = 1$. Otherwise, the hybrids would be statistically close. However, in this case, a reduction can produce an accepting proof for CVQC with inverse polynomial probability by answering each of the first $i-1$ queries as in $\cH_{1,i-1}$ and then measuring the $i$'th query. This violates the soundness of CVQC, since $Q$ rejects on all inputs with overwhelming probability.
\end{itemize}

Finally, note that $\Hyb_{1,q}$ is $\SKeyGen(1^\secp,Q)$, which completes the proof.

\end{proof}

\subsection{Definition}

Below we define obfuscation for a class of null quantum circuits. We again consider the set of quantum circuits that have one bit of classical output.

\begin{definition}[Null-iO for Quantum Circuits]\label{def: null-iO}
A null obfuscator $(\obf,\Eval)$ for quantum circuits is defined as the following QPT algorithms.
\begin{itemize}
    \item $\obf(1^\lambda, Q)$: The obfuscation algorithm takes as input a security parameter $1^\secp$ and a quantum circuit $Q$ with one classical bit of output, and outputs an obfuscation $\widetilde{Q}$.
    \item $\Eval(\widetilde{Q}, \ket{\psi})$: The evaluation algorithm takes as input a unitary $\widetilde{Q}$ and an input $\ket{\psi}$ and outputs a bit $b$.
\end{itemize}
\end{definition}
We define correctness below.
\begin{definition}[Correctness]
A null obfuscator $(\obf,\Eval)$ is correct if for any negligible function $\nu$, there exists a polynomial $k$ and negligible function $\mu$ such that for all polynomial-size families of quantum circuits $\{Q_\secp\}_{\secp \in \bbN}$ and inputs $\{\ket{\psi_\secp}\}_{\secp \in \bbN}$ such that there exists $b \in \{0,1\}$ such that $\Pr[Q_\secp(\ket{\psi_\secp}) = b] \geq 1-\nu(\secp)$, it holds that $$\Pr[\Eval(\widetilde{Q},\ket{\psi_\secp}^{\otimes k(\secp)}) = b : \widetilde{Q} \sample \obf(1^\secp,Q_\secp)] \geq 1-\mu(\secp).$$
\end{definition}
\begin{remark}
The correctness guarantee stated above is a weakening of standard obfuscation correctness in two ways. First, it only guarantees correctness for inputs that either accept or reject with high probability. Second, it requires the evaluator to possess multiple copies of the quantum input. However, note that for the class of pseudo-deterministic quantum circuits on classical inputs (\cref{def: deterministic}), where each input $x$ is mapped to a particular classical output $y = Q(x)$ with overwhelming probability, the above correctness guarantee is standard. Thus, this null-iO definition can be thought of as standard null-iO for psuedo-deterministic quantum circuits (strictly generalizing null-iO for classical circuits) with an additional correctness guarantee that holds when considering certain quantum inputs.
\end{remark}
We define the notion of security below.
\begin{definition}[Security]
A null obfuscator $(\obf,\Eval)$ is secure if for any negligible function $\nu$ and polynomial-size sequences of quantum circuits $\{U_{0,\secp}\}_{\secp \in \bbN}, \{U_{1,\secp}\}_{\secp \in \bbN}$ such that for all inputs $\{\ket{\psi_\secp}\}_{\secp \in \bbN}$, $\Pr[Q_{0,\secp}(\ket{\psi_\secp}) = 0] \geq 1-\nu(\secp)$ and $\Pr[Q_{1,\secp}(\ket{\psi_\secp}) = 0] \geq 1-\nu(\secp)$, it holds that $$\obf(1^\secp,Q_{0,\secp}) \approx_c \obf(1^\secp,Q_{1,\secp}).$$

\end{definition}

\subsection{Construction} 

Let $(\KeyGen,\TKeyGen,\SKeyGen,\Prove,\Verify,\TVerify)$ be a trapdoor dual-mode CVQC scheme, let $\obf$ be a post-quantum indistinguishability obfuscator, and let $\QFHE$ be a quantum fully-homomorphic encryption scheme. Our construction is presented in Figure~\ref{fig:null-iOnew}. An alternative construction assuming classical VBB obfuscation is given in Appendix~\ref{sec:vbb}.

\protocol{Null-iO for Quantum Circuits}{A null obfuscation scheme for quantum circuits.}{fig:null-iOnew}
{
\begin{itemize}
\item $\obf(1^\secp,Q)$: 
\begin{itemize}
\item Sample $(\pk,\sk) \sample \QFHE.\Gen(1^\secp)$.
\item Sample $(\params, r) \sample \KeyGen(1^\secp,Q)$ and $\ct_\params \gets \QFHE.\Enc(\pk,\params)$.
\item Let $\mathbf{C}[Q,\sk,r](\cdot)$ be the circuit that takes as input $\ct_\pi$ and outputs $$b = \Verify(Q,\QFHE.\Dec(\sk,\ct_\pi),r).$$
\item Compute $\widetilde{\mathbf{C}} \sample \obf(1^\secp,\mathbf{C}[Q,\sk,r])$.
\item Output $\widetilde{\mathbf{Q}} = (\ct_\params,\widetilde{\mathbf{C}})$.
\end{itemize}
\item $\Eval(\widetilde{\mathbf{Q}},\ket{\psi})$: Compute $\ct_\pi \sample \QFHE.\Eval(\pk,\Prove(\cdot,\ket{\psi}),\ct_\params)$ and output $\widetilde{\mathbf{C}}(\ct_\pi)$.
\end{itemize}
}

\paragraph{Analysis.} Correctness of the scheme follows immediately from correctness of the CVQC protocol, the post-quantum iO, and the QFHE. We show that our scheme is secure in an indistinguishability sense.

\begin{theorem}[Security]
Let $(\KeyGen,\TKeyGen,\SKeyGen,\Prove,\Verify,\TVerify)$ be a trapdoor dual-mode CVQC scheme, let $\obf$ be a post-quantum indistinguishability obfuscator, and let $\QFHE$ be a quantum fully-homomorphic encryption scheme. Then the protocol in Figure~\ref{fig:null-iOnew} is a secure quantum null-iO.
\end{theorem}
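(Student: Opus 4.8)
The plan is to prove security via the hybrid sequence already sketched in the technical overview, moving from $\obf(1^\secp, Q_0)$ to $\obf(1^\secp, Q_1)$ through a chain of hybrids that first replaces the obfuscated circuit $\widetilde{\mathbf{C}}$ with the always-rejecting circuit $\Pi_\bot$, then switches the circuit fed to $\KeyGen$ from $Q_0$ to $Q_1$ under QFHE semantic security, and finally undoes the first part of the chain symmetrically. Concretely, I would define the following hybrids (all for a fixed $Q_b$, $b \in \{0,1\}$, so that $\Hyb_0$ with $Q_0$ is the real obfuscation of $Q_0$ and $\Hyb_5$ with $Q_1$ is the real obfuscation of $Q_1$): $\Hyb_0$ is $\obf(1^\secp, Q_b)$ as in Figure~\ref{fig:null-iOnew}; $\Hyb_1$ replaces $(\params, r) \sample \KeyGen(1^\secp, Q_b)$ by $(\params_\td, r_\td, \td) \sample \TKeyGen(1^\secp, Q_b)$, keeping $\ct_\params \gets \QFHE.\Enc(\pk, \params_\td)$ and using $r_\td$ inside $\mathbf{C}$; $\Hyb_2$ replaces the circuit $\mathbf{C}[Q_b, \sk, r_\td]$ by the circuit $\mathbf{C}'[Q_b, \sk, \td]$ that on input $\ct_\pi$ outputs $\TVerify(Q_b, \QFHE.\Dec(\sk, \ct_\pi), \td)$; $\Hyb_3$ replaces $(\params_\td, \cdot, \td)$ by $(\params_\simgen, \td_\simgen) \sample \SKeyGen(1^\secp, Q_b)$, with $\ct_\params \gets \QFHE.\Enc(\pk, \params_\simgen)$ and $\mathbf{C}'$ using $\td_\simgen$; $\Hyb_4$ replaces the obfuscated $\widetilde{\mathbf{C}}$ by $\obf(1^\secp, \Pi_\bot)$ (padded to the same size), keeping $\ct_\params \gets \QFHE.\Enc(\pk, \params_\simgen)$.

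The justifications run as follows. $\Hyb_0 \approx_c \Hyb_1$ is Setup Indistinguishability of the trapdoor dual-mode CVQC: the pair $(\params, r)$ output by $\KeyGen$ is computationally indistinguishable from $(\params_\td, r_\td)$, and everything else in the obfuscation (sampling $(\pk,\sk)$, encrypting $\params$, obfuscating $\mathbf{C}$) is an efficient post-processing that a QPT reduction can carry out, so any distinguisher between $\Hyb_0$ and $\Hyb_1$ yields a distinguisher for the Setup Indistinguishability game. $\Hyb_1 \equiv \Hyb_2$ is the crucial use of Verification Equivalence: since $\Verify(Q_b, \pi, r_\td) = \TVerify(Q_b, \pi, \td)$ for every proof $\pi$, the circuits $\mathbf{C}[Q_b, \sk, r_\td]$ and $\mathbf{C}'[Q_b, \sk, \td]$ compute exactly the same function on every input $\ct_\pi$ (both first decrypt with $\sk$ and then apply functionally-equivalent verification procedures), so provided the two circuits are padded to the same size, indistinguishability of their obfuscations is immediate from post-quantum iO security. $\Hyb_2 \approx_c \Hyb_3$ is the Dual-Mode property: because $Q_b$ is a null circuit (it rejects all inputs with overwhelming probability, by the hypothesis of the security definition), $(\params_\td, \td) \approx_c (\params_\simgen, \td_\simgen)$, and again the rest of the obfuscation is efficient post-processing, so a QPT reduction transfers any distinguishing advantage to the Dual-Mode game. $\Hyb_3 \equiv \Hyb_4$ (up to iO) uses the second half of the Dual-Mode guarantee, namely that $\TVerify(Q_b, \cdot, \td_\simgen)$ has \emph{no} accepting input: consequently the circuit $\mathbf{C}'[Q_b, \sk, \td_\simgen]$ outputs $0$ on every input $\ct_\pi$ (it decrypts and then runs a verifier that never accepts), hence it is functionally equivalent to $\Pi_\bot$, and post-quantum iO security gives $\obf(1^\secp, \mathbf{C}'[Q_b, \sk, \td_\simgen]) \approx_c \obf(1^\secp, \Pi_\bot)$ once both are padded to equal size. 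Finally, in $\Hyb_4$ the secret key $\sk$ is no longer used anywhere except inside a ciphertext that nobody can open (the obfuscated circuit is $\Pi_\bot$), so $\QFHE.\Enc(\pk, \params_\simgen^{(0)}) \approx_c \QFHE.\Enc(\pk, \params_\simgen^{(1)})$ by QFHE semantic security, where $\params_\simgen^{(b)}$ is generated by $\SKeyGen(1^\secp, Q_b)$; this lets us swap $Q_0$ for $Q_1$ in the $\SKeyGen$ call. Then running $\Hyb_4 \to \Hyb_3 \to \Hyb_2 \to \Hyb_1 \to \Hyb_0$ in reverse with $Q_1$ completes the chain.

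I would also need to address the padding bookkeeping explicitly: the iO-based steps require $|\mathbf{C}[Q_b,\sk,r_\td]| = |\mathbf{C}'[Q_b,\sk,\td]|$, and $|\mathbf{C}'[Q_b,\sk,\td_\simgen]| = |\Pi_\bot|$. Since all of $r_\td$, $\td$, $\td_\simgen$ are of a fixed polynomial size determined by the CVQC scheme, and $\Verify$, $\TVerify$ are both polynomial-time, one fixes a polynomial upper bound on the size of all these circuits once and for all and pads every circuit in sight to that bound before obfuscating; I would state this as a remark in the construction rather than belabor it in the proof. A second subtlety is the random oracle: the trapdoor dual-mode CVQC of the previous subsection lives in the QROM, and $\TKeyGen$/$\SKeyGen$ reprogram $\cH$. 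In the obfuscation, the oracle $\cH$ is not an external resource but is part of the specification of $\mathbf{C}$ (the circuit $\Verify^\cH$ and the prover $\Prove^\cH$ both need oracle access), so strictly speaking the construction should be read as obfuscating $\Verify$ with the concrete oracle implementation $\cH(x) = (\cG(x), F(\td,x) \oplus \Verify(Q,x,r))$ hardwired, with $\cG$ itself instantiated by a PRF under iO — I would either assume a heuristic instantiation of the QRO or, more carefully, note that the whole $\cH$ (built from PRF keys) gets absorbed into the circuit $\mathbf{C}$ and that the reprogramming done by $\SKeyGen$ corresponds exactly to the functionally-equivalent rewrite in $\Hyb_3 \to \Hyb_4$.

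The main obstacle I expect is not any single hybrid step — each is a one-line reduction to a property already in hand — but rather getting the interplay between the QROM and classical iO airtight: the CVQC primitive is stated with an oracle $\cH$ that the obfuscated circuit must evaluate, and iO is a statement about oracle-free classical circuits, so the proof has to commit to a concrete implementation of $\cH$ (as a PRF-derived circuit, exactly as in the trapdoor dual-mode construction) and then re-run the QROM reductions (soundness, PRF security) in this ``implemented-oracle'' setting. Making the padding, the functional-equivalence claims in $\Hyb_1 \equiv \Hyb_2$ and $\Hyb_3 \equiv \Hyb_4$, and the no-accepting-input guarantee all fit together cleanly with a fully-specified $\cH$ is the bulk of the work; the remaining steps are routine.
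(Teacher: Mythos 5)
Your hybrid sequence (honest $\to$ $\TKeyGen$ via Setup Indistinguishability $\to$ swap to $\TVerify$ via iO and Verification Equivalence $\to$ $\SKeyGen$ via Dual-Mode $\to$ $\Pi_\bot$ via iO and the no-accepting-input guarantee $\to$ switch $Q_0$ for $Q_1$ via QFHE semantic security $\to$ unwind) is precisely the sequence $\mathcal{H}_0$ through $\mathcal{H}_9$ in the paper's proof, with the same reduction at each step. The two points you flag as requiring extra care — padding all circuits to a common bound before obfuscating, and committing to a concrete (PRF-based) implementation of the random oracle $\cH$ so that the CVQC verifier is a genuine classical circuit that iO can be applied to — are real subtleties that the paper's proof leaves implicit, so your attention to them is warranted but does not change the structure of the argument.
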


\begin{proof}
The proof proceeds by defining a series of hybrid distributions for the computation of the obfuscated circuit, moving from null circuit $Q_0$ to null circuit $Q_1$.
\begin{itemize}
    \item Hybrid $\mathcal{H}_0$: This is the honestly computed obfuscated circuit $(\ct_\params, \widetilde{\mathbf{C}})$ for $Q_0$.
    \item Hybrid $\mathcal{H}_1$: In this hybrid, we sample the public parameters of the CVQC scheme in trapdoor mode $(\params_\td, r_\td, \td) \sample \TKeyGen(1^\lambda, Q_0)$. This modification is indistinguishable by the setup indistinguishability of the dual-mode CVQC protocol.
    \item Hybrid $\mathcal{H}_2$: In this hybrid, we obfuscate the circuit $\mathbf{C}^\ast[Q_0,\sk,\td]$ that uses the algorithm $\TVerify(Q_0,\pi,\td)$ rather than $\Verify(Q_0,\pi,r_\td)$. By the verification equivalence of the CVQC protocol, the circuits $\mathbf{C}[Q_0,\sk,r_\td]$ and $\mathbf{C}^\ast[Q_0,\sk,\td]$ are functionally equivalent and therefore the indistinguishability follows from the security of $\obf$.
    \item $\Hyb_3:$ In this hybrid, we generate $(\params_\simgen,\td_\simgen) \sample \SKeyGen(1^\secp,Q_0)$. This is indistinguishable from $\Hyb_2$ due to the dual-mode property of the CVQC, since $Q_0$ is a null circuit.
    
    \item $\Hyb_4:$ In this hybrid, we let $\widetilde{\mathbf{C}}$ be an obfuscation of an always rejecting circuit, which is functionally equivalent to the obfuscated circuit in $\Hyb_2$ due to the dual-mode property of the CVQC.
    
    \item $\Hyb_5:$ In this hybrid, we sample $(\params_\simgen,\td_\simgen) \sample \SKeyGen(1^\secp,Q_1)$. This is indistinguishable from $\Hyb_4$ due to the semantic security of $\QFHE$, since $\params_\simgen$ is encrypted under $\QFHE$, and $\sk,\td_\simgen$ are not needed to produce the obfuscation.
    \item $\Hyb_6:$ We revert the change done in $\Hyb_4$, except that we obfuscate the circuit $\mathbf{C}^\ast[Q_1,\sk,\td_\simgen]$.
    \item $\Hyb_7:$ We revert the change done in $\Hyb_3$, except that we sample $(\params_\td, r_\td, \td) \sample \TKeyGen(1^\lambda, Q_1)$.
    \item $\Hyb_8:$ We revert the change done in $\Hyb_2$, except that we obfuscate the circuit $\mathbf{C}[Q_1,\sk,r_\td]$.
    \item $\Hyb_9:$ We revert the change done in $\Hyb_1$, except that we sample $(\params, r) \sample \KeyGen(1^\secp,Q_1)$.
    

\end{itemize}
The proof is concluded by observing that the last hybrid is identical to the correctly computed obfuscation of $Q_1$.
\end{proof}

\subsection{Witness Encryption for QMA}\label{sec:we}

Recall that a language $\lang = (\lang_{\text{yes}}, \lang_{\text{no}})$  in QMA is defined by a tuple $(\cV,p,\alpha,\beta)$, where $p$ is a polynomial, $\cV = \{V_\secp\}_{\secp \in \mathbb{N}}$ is a uniformly generated family of circuits such that for every $\secp$, $V_\secp$ takes as input a string $x \in \{0,1\}^\secp$ and a quantum state $\ket{\psi}$ on $p(\secp)$ qubits and returns a single bit, and $\alpha,\beta : \mathbb{N} \to [0,1]$ are such that $\alpha(\secp) - \beta(\secp) \geq 1/p(\secp)$. The language is then defined as follows.

\begin{itemize}
    \item For all $x \in \lang_{\text{yes}}$ of length $\secp$, there exists a quantum state $\ket{\psi}$ of size at most $p(\secp)$ such that the probability that $V_\secp$ accepts $(x, \ket{\psi})$ is at least $\alpha(\secp)$. We denote the (possibly infinite) set of quantum witnesses that make $V_\secp$ accept $x$ by $\relation(x)$.
    \item For all $x \in \lang_{\text{no}}$ of length $\secp$, and all quantum states $\ket{\psi}$ of size at most $p(\secp)$, it holds that $V_\secp$ accepts on input $(x,\ket{\psi})$ with probability at most $\beta(\secp)$.
\end{itemize}

We now recall the definition of witness encryption~\cite{STOC:GGSW13}, and adapt it to the quantum setting. Note that we define encryption only with respect to classical messages. This is without loss of generality, since one can encode a quantum state with the quantum one-time pad~\cite{QOTP} and use the witness encryption to encrypt the corresponding (classical) one-time pad keys.

\begin{definition}[Witness Encryption for QMA]
A witness encryption $(\WE.\Enc, \allowbreak\WE.\Dec)$ for a language $\lang\in\QMA$ with relation $\relation$ consists of the following efficient algorithms.
\begin{itemize}
    \item $\WE.\Enc(1^\lambda, x, m)$: On input the security parameter $1^\lambda$, a statement $x$, and a message $m\in\{0,1\}$, the encryption algorithm returns a ciphertext $c$.
    \item $\WE.\Dec(x, c, \ket{\psi})$: On input a statement $x$, a ciphertext $c$, and a quantum state $\ket{\psi}$, the decryption algorithm returns a message $m$ or $\bot$.
\end{itemize}
\end{definition}
We define correctness below. 

\begin{definition}[Correctness]
A witness encryption $(\WE.\Enc, \WE.\Dec)$ for a language $\lang\in\QMA$ is correct if there exists a negligible function $\nu(\secp)$ and a polynomial $k(\secp)$ such that for all $m\in\{0,1\}$, all polynomial-length sequences of instances $\{x_\secp\}_{\secp \in \bbN}$ and witnesses $\{\ket{\psi_\secp}\}_{\secp \in \bbN}$ where each $x_\secp \in \cL_\yes$ and $\ket{\psi_\secp}\in\relation(x_\secp)$, it holds that
\[
\Pr\left[\WE.\Dec(x_\secp,\WE.\Enc(1^\lambda, x_\secp, m), \ket{\psi_\secp}^{\otimes k(\lambda)}) = m\right] = 1 -\nu(\lambda).
\]
\end{definition}
Finally we recall the definition of security against quantum algorithms.
\begin{definition}[Security]
A witness encryption $(\WE.\Enc, \WE.\Dec)$ for a language $\lang\in\QMA$ is secure if for all polynomial-length sequences of instances $\{x_\secp\}_{\secp \in \bbN}$ where each $x_\secp \in \cL_\no$, it holds that
\[
\WE.\Enc(1^\lambda, x_\secp, 0) \approx_c \WE.\Enc(1^\lambda, x_\secp, 1).
\]
\end{definition}

\begin{lemma}
Assuming null-iO for quantum circuits satisfying Definition~\ref{def: null-iO}, there exists witness encryption for QMA.
\end{lemma}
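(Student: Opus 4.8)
The plan is to construct witness encryption for a QMA language $\lang = (\cV, p, \alpha, \beta)$ by reducing a statement $x$ to a null quantum circuit and then obfuscating that circuit. Concretely, given a statement $x$ and a message $m \in \{0,1\}$, I would build a quantum circuit $Q_{x,m}$ that takes as input (multiple copies of) a candidate witness $\ket{\psi}$, runs the QMA verifier $V_\secp$ on $(x, \ket{\psi})$ using QMA amplification to drive the completeness/soundness gap to overwhelming, and outputs $m$ if the (amplified) verifier accepts, and $0$ otherwise. Then $\WE.\Enc(1^\secp, x, m)$ outputs $\obf(1^\secp, Q_{x,m})$, and $\WE.\Dec(x, c, \ket{\psi}^{\otimes k})$ runs $\Eval(c, \ket{\psi}^{\otimes k})$. (A small wrinkle: since null-iO circuits output a single bit, I would actually obfuscate the circuit that computes the predicate ``verifier accepts'' and XOR the output with $m$ — or more simply, for $m=0$ take $Q_{x,0}$ to be the identically-rejecting circuit, and for $m=1$ take $Q_{x,1}$ to be the amplified-verifier circuit; this is the standard null-iO-to-WE transformation and is cleanest to present.)

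The key steps, in order: (1) recall QMA amplification, so that for $x \in \lang_\no$ the amplified verifier circuit rejects every input with overwhelming probability (hence $Q_{x,1}$ is a null circuit), and for $x \in \lang_\yes$ with witness $\ket{\psi} \in \relation(x)$, polynomially many copies of $\ket{\psi}$ make the amplified verifier accept with overwhelming probability; (2) define $\WE.\Enc$ and $\WE.\Dec$ as above; (3) prove correctness by invoking null-iO correctness — for $x \in \lang_\yes$, the input $\ket{\psi_\secp}$ satisfies $\Pr[Q_\secp(\ket{\psi_\secp}) = 1] \geq 1 - \nu(\secp)$ after amplification, so $\Eval$ on $k(\secp)$ copies outputs the right bit with overwhelming probability, where $k$ is the polynomial guaranteed by null-iO correctness composed with the amplification blow-up; (4) prove security by observing that for $x \in \lang_\no$, both $Q_{x,0}$ (identically rejecting) and $Q_{x,1}$ (amplified verifier on a no-instance) reject all inputs with overwhelming probability, so they are a valid pair for the null-iO security definition, and hence $\obf(1^\secp, Q_{x,0}) \approx_c \obf(1^\secp, Q_{x,1})$, which is exactly $\WE.\Enc(1^\secp, x, 0) \approx_c \WE.\Enc(1^\secp, x, 1)$.

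The main obstacle — and it is a mild one — is making the correctness accounting go through cleanly: the null-iO correctness guarantee is quantified in the awkward ``for any negligible $\nu$ there exists a polynomial $k$ and negligible $\mu$'' form and only applies to circuit/input pairs that accept (or reject) with probability $\geq 1 - \nu(\secp)$. So I need to first fix the amplification so that the QMA completeness error is below whatever $\nu$ I want, then feed the amplified circuit into null-iO and extract the copy-count polynomial $k$; the total number of witness copies needed in $\WE.\Dec$ is then $k(\secp)$ times the number of copies the amplified verifier consumes, which is still polynomial. I also need to note that a witness encryption for QMA is trivially a witness encryption for BQP (as used later in the applications), since a BQP statement is a QMA statement with an empty/publicly-computable witness; this follows immediately. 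No step requires a genuinely new idea beyond the standard null-iO-to-WE compiler adapted to the quantum setting.
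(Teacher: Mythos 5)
Your proposal is correct and takes essentially the same route as the paper: encrypt $m$ as $\obf(1^\secp, Q[x,m])$ where $Q[x,m]$ runs the (QMA-amplified) verifier and outputs $1$ only if the verifier accepts \emph{and} $m=1$ (so $Q[x,0]$ is identically rejecting and $Q[x,1]$ is the amplified verifier), decrypt by running $\Eval$ on $k(\secp)$ copies of the witness, and observe that for $x \in \lang_\no$ both $Q[x,0]$ and $Q[x,1]$ are null circuits, so null-iO security immediately gives $\WE.\Enc(1^\secp,x,0) \approx_c \WE.\Enc(1^\secp,x,1)$. One caveat worth fixing: the XOR alternative you float in passing (obfuscating a circuit that outputs $V_x(\cdot)\oplus m$) does \emph{not} work --- for $m=1$ on a no-instance that circuit is identically accepting rather than null, so the security reduction breaks --- whereas the AND formulation you ultimately adopt is exactly what the paper uses (with the paper's ``$\bot$'' read as the output bit $0$).
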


\begin{proof}
$\WE.\Enc(1^\secp,x,m)$ will simply output $c \sample \obf(1^\secp,Q[x,m])$, where $Q[x,m]$ takes as input $\ket{\psi}$, runs the QMA verification procedure for instance $x$ and witness $\ket{\psi}$, and then outputs $m$ if verification accepts, and otherwise outputs $\bot$. $\WE.\Dec(x,c,\ket{\psi}^{\otimes k(\secp)})$ will take polynomially many copies of the witness $\ket{\psi}$ and run $\Eval(c,\ket{\psi}^{\otimes k(\secp)})$ to produce either $m$ or $\bot$. Assuming that the QMA language $\cL$ is such that $\alpha(\secp) = 1-\negl$ and $\beta(\secp) = \negl$ (which is without loss of generality by applying standard QMA amplification), correctness and security of the witness encryption scheme follow immediately from correctness and security of the null-iO. In particular, for $x \in \cL_\no$, $Q[x,m]$ is a null circuit, implying that $\obf(1^\secp,Q[x,m]) \approx_c \obf(1^\secp,Q[x,0])$.
\end{proof}

\section{Non-Interactive Zero-Knowledge for QMA}

In the following we show how a witness encryption scheme for QMA with classical encryption allows us to obtain a non-interactive zero-knowledge (NIZK) argument for QMA. Before describing our scheme, we introduce the necessary building blocks that we are going to use.

\subsection{Definition}

We recall the definition of NIZK for QMA.

\begin{definition}[NIZK Argument]
A NIZK argument $(\NIZK.\Setup,\allowbreak \NIZK.\Prove,\allowbreak \NIZK.\Verify)$ for a language $\lang\in\QMA$ with relation $\relation$ consists of the following efficient algorithms.
\begin{itemize}
    \item $\NIZK.\Setup(1^\lambda)$: On input the security parameter $1^\lambda$, the setup returns a common reference string $\crs$.
    \item $\NIZK.\Prove(\crs, \ket{\psi}^{\otimes k(\lambda)}, x)$: On input a common reference string $\crs$, $k(\lambda)$ copies of the witness $\ket{\psi}$, and a statement $x$, the proving algorithm returns a proof $\pi$.
    \item $\NIZK.\Verify(\crs, \pi, x)$: On input a common reference string $\crs$, a proof $\pi$, and a statement $x$, the verification algorithm returns a bit $\{0,1\}$.
\end{itemize}
\end{definition}
We defined correctness below.
\begin{definition}[Correctness]
A NIZK argument $(\NIZK.\Setup, \NIZK.\Prove, \allowbreak \NIZK.\Verify)$ is correct if there exists a negligible function $\nu$ such that for all $\lambda\in\mathbb{N}$, all $x\in\lang_\yes$, and all $\ket{\psi}\in\relation(x)$ it holds that
\[
\Pr\left[\NIZK.\Verify(\crs, \NIZK.\Prove(\crs, \ket{\psi}^{\otimes k(\secp)}, x), x) = 1\right] = 1 -\nu(\lambda)
\]
where $\crs \sample \NIZK.\Setup(1^\lambda)$.
\end{definition}
Next, we define (non-adaptive) computational soundness.
\begin{definition}[Computational Soundness]
A NIZK argument $(\NIZK.\Setup, \allowbreak\NIZK.\Prove, \allowbreak \NIZK.\Verify)$ is computationally sound if
there exist a negligible function $\nu$ such that for all non-uniform  QPT adversaries with quantum advice 
 $\adv = \{\adv_\lambda, \rho_\lambda\}_{\lambda \in \mathbb{N}}$ and all $x^\ast\in\lang_\no$, it holds that
 \[
 \Pr\left[\NIZK.\Verify(\crs, \adv_\secp(\crs, x^\ast; \rho_\secp), x^\ast) = 1\right] = \nu(\lambda)
 \]
 where $\crs \sample \NIZK.\Setup(1^\lambda)$.
\end{definition}
In the following we present the notion of (statistical) zero-knowledge.
\begin{definition}[Statistical Zero-Knowledge]
A NIZK argument $(\NIZK.\Setup,\allowbreak \NIZK.\Prove, \allowbreak \NIZK.\Verify)$ is statistically zero-knowledge if here exists a simulator $\mathsf{Sim}$ such that for $\lambda \in\mathbb{N}$, all $r\in\{0,1\}^\lambda$, all statements $x\in\lang_\yes$, and all witnesses $\ket{\psi}\in\relation(x)$,
it holds that
\[
\mathsf{Sim}(1^\lambda, x, r) \approx_s \NIZK.\Prove(\crs, \ket{\psi}^{\otimes k(\lambda)}, x)
\]
where $\crs = \NIZK.\Setup(1^\lambda; r)$.
\end{definition}

\subsection{Construction}

We describe in the following our NIZK argument system for any language $\lang\in\QMA$ with relation $\relation$. We assume the existence of a witness encryption $(\WE.\Enc, \allowbreak\WE.\Dec)$ with classical encryption for the same language $\lang$, a puncturable PRF $(\PRF.\Gen, \PRF.\Puncture, \PRF.\Eval)$, a one-way function $\OWF$, and an indistinguishability obfuscator $\obf$ for classical polynomial-size circuits. Our NIZK argument system $(\NIZK.\Setup, \allowbreak\NIZK.\Prove, \allowbreak\NIZK.\Verify)$ is presented in Figure~\ref{fig:nizk}.

\protocol{NIZK for QMA}{A publicly-verifiable NIZK argument for QMA}{fig:nizk}
{
\begin{itemize}
    \item $\NIZK.\Setup(1^\lambda)$:
    \begin{itemize}
        \item Sample two keys $k_0 \sample \PRF.\Gen(1^\lambda)$ and $k_1 \sample \PRF.\Gen(1^\lambda)$.
        \item Compute the obfuscation $\widetilde{\textbf{P}} \sample \obf(1^\lambda, \textbf{P})$ where $\textbf{P}$ is the circuit that, on input some statement $x$, returns $\WE.\Enc(1^\lambda, x, \PRF.\Eval(k_0, x);\PRF.\Eval(k_1, x))$. The circuit $\textbf{P}$ is padded to the maximum size of $\textbf{P}^\ast$ (defined in the proof of Theorem~\ref{thm:nizk_sound}).
        \item Compute the obfuscation $\widetilde{\textbf{V}} \sample \obf(1^\lambda, \textbf{V})$ where $\textbf{V}$ is the circuit that, on input some statement $x$ and a string $y$, returns $1$ if and only if $\OWF(\PRF.\Eval(k_0, x)) = \OWF(y)$. The circuit $\textbf{V}$ is padded to the maximum size of $\textbf{V}^\ast$ (defined in the proof of Theorem~\ref{thm:nizk_sound}).
        \item Return $\crs = (\widetilde{\textbf{P}}, \widetilde{\textbf{V}})$.
    \end{itemize}
    \item $\NIZK.\Prove(\crs, \ket{\psi}^{\otimes k(\lambda)}, x)$: 
    \begin{itemize}
    \item Compute $c = \widetilde{\textbf{P}}(x)$.
    \item Return $\pi = \WE.\Dec(x,c,\ket{\psi}^{\otimes k(\lambda)})$.
    \end{itemize}
    \item $\NIZK.\Verify(\crs, \pi, x)$: 
    \begin{itemize}
    \item Return $\widetilde{\textbf{V}}(x, \pi)$.
    \end{itemize} 
\end{itemize}
}

\paragraph{Correctness.}
It is easy to see that the scheme is correct, i.e.\ true statements correctly verify, except with negligible probability over the randomness imposed by the evaluation of the $\WE.\Dec$ algorithm. 

\paragraph{Soundness.}
Next we show that the scheme satisfies (non-adaptive) computational soundness.
\begin{theorem}[Soundness]\label{thm:nizk_sound}
Let $(\WE.\Enc, \WE.\Dec)$ be a witness encryption, let $(\PRF.\Gen, \PRF.\Puncture,\allowbreak \PRF.\Eval)$ be a puncturable PRF, let $\OWF$ be a one-way function, and let $\obf$ be an indistinguishability obfuscator. Then the scheme in Figure~\ref{fig:nizk} is computationally sound.
\end{theorem}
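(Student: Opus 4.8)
The plan is to prove soundness via a sequence of hybrids on the common reference string, using a standard puncturing argument at the (single, non-adaptively chosen) no-instance $x^\ast \in \lang_\no$. The key idea: in the real scheme, the circuit $\widetilde{\mathbf{V}}$ accepts a proof $\pi$ for $x^\ast$ iff $\OWF(\pi) = \OWF(\PRF.\Eval(k_0, x^\ast))$, so a cheating prover essentially must invert $\OWF$ at the point $\OWF(\PRF.\Eval(k_0,x^\ast))$. The obstacle is that in the real scheme the obfuscated circuits $\widetilde{\mathbf{P}}$ and $\widetilde{\mathbf{V}}$ both have $k_0$ hardwired, so the adversary could in principle learn $\PRF.\Eval(k_0, x^\ast)$ directly; we need to puncture $k_0$ at $x^\ast$ in both circuits and replace the relevant value by a random one, at which point breaking soundness contradicts one-wayness of $\OWF$. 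The subtlety forcing the WE step is that $\widetilde{\mathbf{P}}$ on input $x^\ast$ still needs to output a WE ciphertext encoding $\PRF.\Eval(k_0, x^\ast)$ — but since $x^\ast \in \lang_\no$, WE security lets us swap that encrypted value for something independent of $k_0$.

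\textbf{Hybrid sequence.} $\Hyb_0$ is the real soundness game. In $\Hyb_1$, I puncture $k_1$ at $x^\ast$: replace $\mathbf{P}$ by the circuit $\mathbf{P}^\ast$ that has $k_{1,x^\ast} \sample \PRF.\Puncture(k_1, x^\ast)$ hardwired, together with a precomputed ciphertext $c^\ast = \WE.\Enc(1^\lambda, x^\ast, \PRF.\Eval(k_0,x^\ast); u)$ for fresh randomness $u$, returning $c^\ast$ on input $x^\ast$ and otherwise evaluating as before. By puncturing correctness $\mathbf{P}^\ast \equiv \mathbf{P}$ as a function if $u = \PRF.\Eval(k_1, x^\ast)$, then by pseudorandomness of the punctured PRF $k_1$ we may take $u$ uniform; indistinguishability of the obfuscations follows from iO security (here is where the padding of $\mathbf{P}$ to $|\mathbf{P}^\ast|$ is used). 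In $\Hyb_2$, I use WE security: since $x^\ast \in \lang_\no$, replace $c^\ast$ by $\WE.\Enc(1^\lambda, x^\ast, 0; u)$ — now $\mathbf{P}^\ast$ no longer depends on $\PRF.\Eval(k_0, x^\ast)$ through its output on $x^\ast$, though it still contains $k_0$ for other inputs. In $\Hyb_3$, I puncture $k_0$ at $x^\ast$ in both $\mathbf{P}^\ast$ and $\mathbf{V}$: since $\mathbf{P}^\ast$ already handles input $x^\ast$ via the hardwired $c^\ast$, it only needs $k_{0,x^\ast}$; and $\mathbf{V}^\ast$ hardwires the single value $z^\ast := \OWF(\PRF.\Eval(k_0, x^\ast))$ and checks $\OWF(y) = z^\ast$ on input $x^\ast$, using $k_{0,x^\ast}$ elsewhere. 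By puncturing correctness these circuits are functionally equivalent to the $\Hyb_2$ ones, so iO security applies (this is where padding $\mathbf{V}$ to $|\mathbf{V}^\ast|$ is used). In $\Hyb_4$, by pseudorandomness of the punctured PRF $k_0$, replace $z^\ast$ by $\OWF(w)$ for uniformly random $w$.

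\textbf{Final reduction and main obstacle.} In $\Hyb_4$, the whole $\crs$ together with $x^\ast$ can be simulated given only the value $z^\ast = \OWF(w)$ for uniform $w$ (everything else — $k_{0,x^\ast}$, $k_{1,x^\ast}$, $c^\ast$, the padded obfuscations — is sampleable without $w$). A QPT adversary $\adv$ that makes $\NIZK.\Verify(\crs, \pi, x^\ast) = \widetilde{\mathbf{V}^\ast}(x^\ast, \pi) = 1$ with non-negligible probability must output $\pi$ with $\OWF(\pi) = z^\ast$, i.e.\ a preimage of $\OWF(w)$; this directly breaks one-wayness of $\OWF$ against QPT inverters. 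Since each hybrid step incurs only negligible loss, soundness follows. I expect the main obstacle to be bookkeeping the two rounds of puncturing so that at every hybrid the two obfuscated circuits are genuinely functionally equivalent to their predecessors (in particular, ensuring $\mathbf{P}^\ast$ on input $x^\ast$ returns exactly the hardwired ciphertext and never touches $k_{0,x^\ast}$ or $k_{1,x^\ast}$ there), and in fixing the padding sizes consistently; the cryptographic content is routine once the circuit definitions are pinned down.
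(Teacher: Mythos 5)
Your proof is correct and follows essentially the same puncturing strategy as the paper's: puncture $k_1$ in $\mathbf{P}$ and replace the encryption randomness with uniform, invoke WE security at the no-instance $x^\ast$, puncture $k_0$ in both circuits while hardwiring the image $z^\ast$ in $\mathbf{V}$, replace the punctured PRF value with uniform, and finally reduce to one-wayness of $\OWF$. The only differences are cosmetic — you collapse several of the paper's seven intermediate hybrids into four and perform the WE step before (rather than after) puncturing $k_0$ in $\mathbf{P}$, which is equally valid since the WE reduction holds $k_0$ anyway.
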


\begin{proof}
The proof proceeds by defining a series of hybrid distributions for the computation of the $\crs$ that we argue to be computationally indistinguishable from each other. In the last hybrid, the probability that any prover can cause the verifier to accept some $x^\ast \in\lang_\no$ will be negligible.
\begin{itemize}
    \item Hybrid $\mathcal{H}_0$: This is the original distribution where the $\crs$ is sampled from $\crs \sample \NIZK.\Setup$.
    \item Hybrid $\mathcal{H}_1$: In this hybrid we compute $\widetilde{\textbf{P}} \sample \obf(1^\lambda, \textbf{P}_1)$ where $\textbf{P}_1$ is the circuit that on input some statement $x$, checks whether $x = x^\ast$. If that is the case, then it returns the ciphertext $$c = \WE.\Enc(1^\lambda, x^\ast, \PRF.\Eval(k_0, x^\ast);\PRF.\Eval(k_1, x^\ast)).$$ Otherwise compute $c = \WE.\Enc(1^\lambda, x, \PRF.\Eval(k_0, x);\PRF.\Eval(k_{1, x^\ast}, x))$, where $k_{1, x^\ast} \sample \PRF.\Puncture(k_1, x^\ast)$.
    
    Note that the circuits $\textbf{P}$ and $\textbf{P}_1$ have different representations but are functionally equivalent. Thus, $\mathcal{H}_0$ and $\mathcal{H}_1$ are computationally indistinguishable by the security of the obfuscator $\obf$.
    
    \item Hybrid $\mathcal{H}_2$: In this hybrid we compute $\widetilde{\textbf{P}} \sample \obf(1^\lambda, \textbf{P}_2)$ where $\textbf{P}_2$ is defined as $\textbf{P}_1$ except that if $x = x^\ast$, then it returns the ciphertext $$c = \WE.\Enc(1^\lambda, x^\ast, \PRF.\Eval(k_0, x^\ast);u)$$ where $u \sample \{0,1\}^\lambda$.
    
    The indistinguishability $\mathcal{H}_1\approx_c\mathcal{H}_2$ follows by the pseudorandomness of the puncturable PRF.
    
    \item Hybrid $\mathcal{H}_3$: Here we compute $\widetilde{\textbf{P}} \sample \obf(1^\lambda, \textbf{P}_3)$ where $\textbf{P}_3$ is defined as $\textbf{P}_2$ except that if $x \neq x^\ast$, then it returns the ciphertext $$c = \WE.\Enc(1^\lambda, x, \PRF.\Eval(k_{0, x^\ast}, x);\PRF.\Eval(k_{1, x^\ast}, x))$$ where $k_{0, x^\ast} \sample \PRF.\Puncture(k_0, x^\ast)$.
    
    By the correctness of the puncturable PRF, the two circuits are functionally identical and therefore the computational indistinguishability follows from the security of $\obf$.
    
    \item Hybrid $\mathcal{H}_4$: In this hybrid we compute $\widetilde{\textbf{P}} \sample \obf(1^\lambda, \textbf{P}^\ast)$ where $\textbf{P}^\ast$ is defined as $\textbf{P}_3$ except that if $x = x^\ast$, then it returns the ciphertext $$c = \WE.\Enc(1^\lambda, x^\ast, 0^\lambda;u)$$ where $u \sample \{0,1\}^\lambda$.
    
    Recall that $x^\ast \in \lang_\no$ and thus indistinguishability between $\mathcal{H}_3$ and  $\mathcal{H}_4$ follows from the security of the witness encryption scheme.
    
    \item Hybrid $\mathcal{H}_5$: We now compute $\widetilde{\textbf{V}} \sample \obf(1^\lambda, \textbf{V}_1)$ where $\textbf{V}_1$ is the circuit that, on input a pair of strings $(x,y)$ checks whether $x = x^\ast$. If this is the case, then it returns $1$ if $\OWF(\PRF.\Eval(k_{0}, x^\ast)) = \OWF(y)$ and $0$ otherwise. If $x \neq x^\ast$ it returns $1$ if and only if $\OWF(\PRF.\Eval(k_{0, x^\ast}, x)) = \OWF(y)$ where $k_{0,x^\ast}$ is the punctured key.
    
    Observe that the circuits $\textbf{V}$ and $\textbf{V}_1$ are functionally equivalent and thus we can invoke the security of $\obf$ to show that $\mathcal{H}_4 \approx_c \mathcal{H}_5$.

    \item Hybrid $\mathcal{H}_6$: In this hybrid we compute $\widetilde{\textbf{V}} \sample \obf(1^\lambda, \textbf{V}_2)$ where $\textbf{V}_2$ is defined as $\textbf{V}_1$ except for the case where $x = x^\ast$. In this case the circuit returns $1$ if and only if $\OWF(r) = \OWF(y)$, where $r \sample \{0,1\}^\lambda$.
    
    The computational indistinguishability $\mathcal{H}_5 \approx_c \mathcal{H}_6$ follows from a reduction to the pseudorandomness of the puncturable PRF.
    
    \item Hybrid $\mathcal{H}_7$: In the final hybrid we compute $\widetilde{\textbf{V}} \sample \obf(1^\lambda, \textbf{V}^\ast)$ where $\textbf{V}^\ast$ is defined as $\textbf{V}_2$ except for the case where $x = x^\ast$. In this case the circuit returns $1$ if and only if $R = \OWF(y)$, where $R = \OWF(r)$, i.e.\ the image of the one-way function is hardwired in the circuit.
    
    Since the two circuits are functionally equivalent, we obatain that $\mathcal{H}_6 \approx_c \mathcal{H}_7$ by another invocation of the security of $\obf$.
\end{itemize}
Observe that causing the verifier to accept a proof $\pi$ for $x^\ast \in \lang_\no$ requires one to output a valid preimage of $R = \OWF(r)$, where $r$ is uniformly sampled. This is a contradiction to the one-wayness of $\OWF$ and concludes our proof.
\end{proof}

\paragraph{Zero Knowledge.}
We now show that the scheme satisfies a strong variant of statistical zero-knowledge. Namely, we show the existence of an efficient simulator whose output is statistically close to the output of the prover, for all valid choices of the common reference string.
\begin{theorem}[Zero Knowledge]
The scheme in Figure~\ref{fig:nizk} is statistically zero-knowledge.
\end{theorem}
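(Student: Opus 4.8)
The plan is to exhibit an explicit simulator $\Sim(1^\lambda, x, r)$ that, given the randomness $r$ of the setup, reconstructs the common reference string and produces a proof \emph{without} using the quantum witness. Concretely, $\Sim$ runs $\NIZK.\Setup(1^\lambda; r)$ to recover the PRF keys $k_0, k_1$ (which are deterministic functions of $r$) and the obfuscated circuits $\widetilde{\mathbf{P}}, \widetilde{\mathbf{V}}$, and then simply outputs $\pi = \PRF.\Eval(k_0, x)$. The point is that the honest prover, on a true statement $x \in \lang_\yes$ with a valid witness $\ket{\psi} \in \relation(x)$, computes $c = \widetilde{\mathbf{P}}(x)$ and then runs $\WE.\Dec(x, c, \ket{\psi}^{\otimes k(\lambda)})$; by correctness of the obfuscator, $c = \WE.\Enc(1^\lambda, x, \PRF.\Eval(k_0,x); \PRF.\Eval(k_1,x))$, and by correctness of the witness encryption scheme, $\WE.\Dec$ returns exactly the encrypted message $\PRF.\Eval(k_0, x)$ with overwhelming probability. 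Hence the honest prover's output is statistically close to the deterministic value $\PRF.\Eval(k_0,x)$, which is precisely what $\Sim$ outputs.

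First I would note that everything in the $\crs$ ($\widetilde{\mathbf{P}}$ and $\widetilde{\mathbf{V}}$) is a deterministic function of $r$, so $\Sim$ can recompute it perfectly — there is no distribution to match on the $\crs$ side, only on the proof. Then I would invoke correctness of the obfuscator (Definition of Correctness for $\obf$: for all inputs the obfuscated circuit agrees with the original) to conclude $\widetilde{\mathbf{P}}(x)$ equals the honestly formed ciphertext. Next I would invoke correctness of the witness encryption scheme: since $x \in \lang_\yes$ and $\ket{\psi} \in \relation(x)$, we have $\WE.\Dec(x, \WE.\Enc(1^\lambda, x, m), \ket{\psi}^{\otimes k(\lambda)}) = m$ except with negligible probability $\nu(\lambda)$, with $m = \PRF.\Eval(k_0, x)$. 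Combining these, the honest prover outputs $\PRF.\Eval(k_0, x)$ except with negligible probability, so the trace distance between the honest prover's output and $\Sim$'s (deterministic) output is at most $\nu(\lambda)$, which gives statistical indistinguishability.

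The only subtlety — and the step I would be most careful about — is matching the quantifier structure of the zero-knowledge definition, which demands statistical closeness for \emph{all} $r \in \{0,1\}^\lambda$ rather than over random $r$. Since $\Sim$ is handed $r$ and recomputes the $\crs$ exactly, the $\crs$ parts contribute zero distance for every fixed $r$; what remains is to check that the witness-encryption correctness bound holds for every fixed $\crs$ in the support of $\NIZK.\Setup$, i.e.\ that the negligible error $\nu$ does not depend on a ``good'' choice of setup randomness. This holds because the correctness guarantee of $\WE$ quantifies over all $x \in \lang_\yes$, all witnesses, and all coins of $\WE.\Enc$ — in particular all coins $\PRF.\Eval(k_1,x)$ arising from any $r$ — and the obfuscator correctness holds with probability $1$. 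Thus the bound is uniform in $r$, and the proof is complete; I would close by remarking that since $\Sim$ runs in polynomial time and its output is statistically (indeed, up to negligible trace distance) identical to the real proof for every valid $(x, r)$, the scheme satisfies statistical zero-knowledge as defined.
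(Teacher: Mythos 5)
Your proposal is correct and takes the same approach as the paper: the simulator recomputes the $\crs$ from $r$ and outputs $\pi = \PRF.\Eval(k_0,x)$, with the only source of statistical distance being the negligible failure probability of $\WE.\Dec$. The paper's proof is a terse version of this; your added care about the "for all $r$" quantifier and the uniformity of the $\WE$ correctness bound is a reasonable elaboration, not a departure.
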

\begin{proof}
The simulator computes $\crs$ as in the $\NIZK.\Setup$ algorithm and sets $\pi = \PRF.\Eval(k_0, x)$. This distribution is identical to the one induced by the honest algorithms, except when the $\WE.\Dec$ fails which happens only with negligible probability.
\end{proof}

\ifsubmission
\section{Attribute-Based Encryption for BQP}

In the following we present our construction of attribute-based encryption (ABE) for quantum functionalities.


\subsection{Definition}

We recall the definition of ABE. For convenience we consider the notion of \emph{ciphertext-policy} ABE where messages are encrypted with respect to circuits and keys are issued for attribute strings. If the class of circuits supported by the scheme is large enough, then one can switch to the complementary notion (i.e.\ \emph{key-policy} ABE) by encoding universal (quantum) circuits. We also consider without loss of generality an ABE that encrypts a single (classical) bit of information.

\begin{definition}[Attribute-Based Encryption for BQP]
An ABE scheme for BQP $(\ABE.\Gen,\allowbreak\ABE.\Enc,\allowbreak\ABE.\KeyGen,\allowbreak\ABE.\Dec)$ consists of the following efficient algorithms.
\begin{itemize}
    \item $\ABE.\Gen(1^\secp,1^\ell)$: On input the security parameter $1^\secp$ and the length $\ell$ of attributes, the parameters generation algorithm outputs a master public key $\mpk$, and master secret key $\msk$.
    \item $\ABE.\Enc(\mpk,Q,m)$: On input the master public key $\mpk$, a quantum circuit $Q$ (implementing a BQP language), and a message $m$, the encryption algorithm outputs a ciphertext $\ct_Q$.
    \item $\ABE.\KeyGen(\msk,x)$: On input the master secret key $\msk$ and an attribute $x$, the key generation algorithm outputs a secret key $\sk_x$.
    \item $\ABE.\Dec(\sk_x,\ct_Q)$: On input a secret key $\sk_x$ and a ciphertext $\ct_Q$, the decryption algorithms either outputs a message $m$ or $\bot$. 
\end{itemize}
\end{definition}
Throughout the rest of this work, we always assume that the ciphertexts also contain a description of the corresponding unitary $Q$ and that the keys also contain a description of the corresponding attribute $x$. We define correctness below.

\begin{definition}[Correctness]
An ABE scheme $(\ABE.\Gen,\ABE.\Enc,\allowbreak\ABE.\KeyGen,\allowbreak\ABE.\Dec)$ is correct if for all negligible functions $\nu$, there exists a negligible function $\mu$ such that for any $\secp \in \mathbb{N}, \ell \in \mathbb{N}, m \in \{0,1\}, x \in \{0,1\}^\ell$, and any quantum circuit $Q$ on $\ell$ input bits such that $\Pr[Q(x) = 1] = 1-\nu(\secp)$, it holds that
    $$\Pr\left[\ABE.\Dec\left(\ABE.\KeyGen(\msk,x),\ABE.\Enc(\mpk,Q,m)\right) = m \right] = 1 - \mu(\secp)$$
    where $(\mpk,\msk) \sample \ABE.\Gen(1^\secp,1^\ell)$.
\end{definition}
Finally we define the notion of security for ABE. We consider the selective notion of security, where the quantum circuit associated with the challenge ciphertext is known ahead of time. It is well known that this can be generically upgraded to the stronger notion of adaptive security (via complexity leveraging), although at the cost of an exponential decrease in the quality of the reduction.
\begin{definition}[Security]
An ABE scheme $(\ABE.\Gen,\ABE.\Enc,\ABE.\KeyGen,\allowbreak\ABE.\Dec)$ is secure if there exists a pair of negligible functions $\nu$  and $\mu$ such that for all $\lambda \in \mathbb{N}$, all quantum circuits $Q^\ast$, and all admissible non-uniform
 QPT distinguishers with quantum advice 
 $\adv = \{\adv_\lambda, \rho_\lambda\}_{\lambda \in \mathbb{N}}$, it holds that
 $$\begin{array}{r}
 \Pr\left[b =\cA_\secp(\ct_{Q^\ast}, \mpk; \rho_\secp)^{\ABE.\KeyGen(\msk,\cdot)} : \begin{array}{l}(\mpk,\msk) \sample \ABE.\Gen(1^\secp,1^\ell) \\ b \sample \{0,1\} \\ \ct_{Q^\ast} \sample \ABE.\Enc(\mpk,Q^\ast,b)\end{array}\right] = 1/2 + \mu(\secp)\end{array}$$
where $\adv$ is admissible if each query $x$ to $\ABE.\KeyGen(\msk,\cdot)$ is such that $\Pr[Q^\ast(x) = 1] \leq \nu(\secp)$.
\end{definition}


\subsection{Construction}

We are now ready the present our construction of ABE for BQP. We assume the existence of a pseudorandom generator $\PRG:\{0,1\}^\lambda \to \{0,1\}^{\ell\cdot\lambda}$, a puncturable PRF $(\PRF.\Gen, \allowbreak\PRF.\Puncture, \allowbreak\PRF.\Eval)$, and an indistinguishability obfuscator $\obf$ for classical circuits, all with sub-exponential security. We additionally assume the existence of a sub-exponentially secure witness encryption scheme $(\WE.\Enc, \WE.\Dec)$ for BQP with a classical encryption algorithm, which is implied by the scheme shown in Section~\ref{sec:we}. Our scheme $(\ABE.\Gen,\allowbreak\ABE.\Enc,\allowbreak \ABE.\KeyGen,\allowbreak \ABE.\Dec)$ is described in Figure~\ref{fig:abe}. 

\protocol{ABE for BQP}{An attribute-based encryption scheme for BQP}{fig:abe}
{
\begin{itemize}
    \item $\ABE.\Gen(1^\secp,1^\ell)$:
    \begin{itemize}
        \item Sample a key $k \sample \PRF.\Gen(1^\lambda)$.
        \item Compute the obfuscation $\widetilde{\textbf{P}} \sample \obf(1^\lambda, \textbf{P})$ where $\textbf{P}$ is the circuit that, on input some attribute $x\in\{0,1\}^\ell$ and a string $s\in\{0,1\}^\lambda$, returns $1$ if and only if $\PRG(s) = \PRG(\PRF.\Eval(k, x))$. The circuit $\textbf{P}[k]$ is padded to the maximum size of $\textbf{P}^\ast$ (defined in the proof of Theorem~\ref{thm:abesec}).
        \item Return $\msk = k$ and $\mpk = \widetilde{\textbf{P}}$.
    \end{itemize}
    \item $\ABE.\Enc(\mpk,Q,m)$:
    \begin{itemize}
        \item Sample a key $r \sample \PRF.\Gen(1^\lambda)$.
        \item Compute the obfuscation $\widetilde{\textbf{E}} \sample \obf(1^\lambda, \textbf{E})$ where $\textbf{E}$ is the circuit that, on input some attribute $x\in\{0,1\}^\ell$ and a string $s\in\{0,1\}^\lambda$,
        checks whether $\widetilde{\textbf{P}}(x,s)=1$ and returns $\WE.\Enc(1^\lambda, (Q,x), m; \PRF.\Eval(r,x))$ if this is the case. The circuit $\textbf{E}[m,r]$ is padded to the maximum size of $\textbf{E}^\ast$ (defined in the proof of Theorem~\ref{thm:abesec}).
        \item Return $\widetilde{\textbf{E}}$. 
    \end{itemize}
    \item $\ABE.\KeyGen(\msk,x)$:
    \begin{itemize}
        \item Return $\PRF.\Eval(k, x)$.
    \end{itemize}
    \item $\ABE.\Dec(\sk_x,\ct_Q)$:
    \begin{itemize}
        \item Parse $\ct_Q$ as $\widetilde{\textbf{E}}$ and compute $c = \widetilde{\textbf{E}}(x, \sk_x)$.
        \item Return $\WE.\Dec((Q,x),c)$.\footnote{Note that $\WE.\Dec$ does not need to take a third input (the witness) since the statement is in BQP.}
    \end{itemize}
\end{itemize}
}

\paragraph{Correctness.} The statistical correctness of the scheme follows immediately from the correctness of the underlying building blocks. In particular, note that all the circuits that we obfuscate are entirely classical and thus obfuscation for classical circuits suffices.

\paragraph{Security.} We analyze the (selective) security of our construction in the following. We remark that the proof can be upgraded to the stronger notion of adaptive security with additional complexity leveraging.
\begin{theorem}[Security]\label{thm:abesec}
Let $\PRG$ be a pseudorandom generator, let $(\PRF.\Gen,\allowbreak \PRF.\Puncture, \PRF.\Eval)$ be a puncturabl PRF, let $\obf$ be an indistinguishability obfuscator for classical circuits, and let $(\WE.\Enc, \WE.\Dec)$ be a witness encryption scheme, all with sub-exponential security (in the attribute size $\ell$). Then the scheme in Figure~\ref{fig:abe} is secure.
\end{theorem}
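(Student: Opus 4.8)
\medskip\noindent\textbf{Proof plan.} The plan is a punctured‑programming argument in the style of~\cite{STOC:SahWat14}, but where the hybrid sequence ranges over \emph{all} $2^\ell$ attributes; this exponential number of steps is exactly what forces the sub‑exponential security hypothesis (equivalently, one instantiates $\PRG$, the puncturable PRF, $\obf$, and $\WE$ with a polynomial security parameter large enough in $\ell$ that each primitive's distinguishing advantage drops below $2^{-\ell}\cdot\negl(\secp)$). Since the scheme is only required to be selectively secure, $Q^\ast$ is fixed, and by standard BQP amplification applied to $Q^\ast$ we may assume that $Q^\ast$ accepts every input $x$ with probability either $\ge 1-\negl(\secp)$ or $\le\negl(\secp)$; writing $\lang$ for the BQP language underlying $\WE$, this means $(Q^\ast,x)\in\lang_\yes$ or $(Q^\ast,x)\in\lang_\no$ for every $x$. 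Identifying $\{0,1\}^\ell$ with $\{0,\dots,2^\ell-1\}$, for each threshold $t\in\{0,\dots,2^\ell\}$ let $\Hyb_t$ be the security game in which the challenge ciphertext obfuscates the circuit that behaves exactly like $\mathbf{E}[b,r]$ except that on attribute $x$ it encrypts the message $0$ if $x<t$ and the challenge bit $b$ if $x\ge t$. Then $\Hyb_0$ is the real game and in $\Hyb_{2^\ell}$ the challenge ciphertext is independent of $b$, so it suffices to show $\Hyb_t\approx_c\Hyb_{t+1}$ for every $t$, where the only difference is the message embedded on the single attribute $x:=t$.

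To pass from $\Hyb_t$ to $\Hyb_{t+1}$ I split into two cases. If $(Q^\ast,t)\in\lang_\no$: by security of $\obf$, hardwire into $\mathbf{E}$ the ciphertext $\WE.\Enc(1^\secp,(Q^\ast,t),b;\PRF.\Eval(r,t))$ for attribute $t$ and replace $r$ elsewhere by $\PRF.\Puncture(r,t)$ (functionally equivalent by puncturing correctness); replace $\PRF.\Eval(r,t)$ by fresh uniform randomness $\rho$ (pseudorandomness at the punctured point); replace the hardwired ciphertext by $\WE.\Enc(1^\secp,(Q^\ast,t),0;\rho)$ (security of $\WE$, which applies precisely because $(Q^\ast,t)\in\lang_\no$); and undo the first two changes. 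If $(Q^\ast,t)\in\lang_\yes$: by security of $\obf$, replace the key $k$ in $\mathbf{P}$ (and in the copy of $\widetilde{\mathbf{P}}$ hardwired inside $\mathbf{E}$) by $\PRF.\Puncture(k,t)$ together with the hardwired value $\PRG(\PRF.\Eval(k,t))$ for the check on attribute $t$; replace $\PRF.\Eval(k,t)$ by a fresh uniform $u$ (pseudorandomness at the punctured point); replace $\PRG(u)$ by a uniform $\tau\in\{0,1\}^{\ell\secp}$ (security of $\PRG$). Since the image of $\PRG$ has size at most $2^\secp\ll 2^{\ell\secp}$, with probability $1-2^{-(\ell-1)\secp}$ the string $\tau$ lies outside that image, in which case no pair $(t,s)$ passes the $\mathbf{P}$‑check and $\mathbf{E}$ therefore outputs $\bot$ on every input with attribute $t$; the message embedded there is unreachable, so it can be switched from $b$ to $0$ by another invocation of $\obf$ (absorbing the negligible probability that $\tau$ is in the image), after which one reverses the three modifications to $\mathbf{P}$. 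Crucially, admissibility guarantees that no key query equals $t$ when $(Q^\ast,t)\in\lang_\yes$, so the punctured key $\PRF.\Puncture(k,t)$ answers all of the adversary's key queries and the tampering with $\mpk=\widetilde{\mathbf{P}}$ is undetectable. The maximal‑size circuits arising as these intermediate hybrids are $\mathbf{P}^\ast$ and $\mathbf{E}^\ast$, to which $\mathbf{P}$ and $\mathbf{E}$ are padded in Figure~\ref{fig:abe}.

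Summing the $O(2^\ell)$ hybrid steps and using that each consecutive pair is indistinguishable up to the (sub‑exponentially small) advantage against $\obf$, the puncturable PRF, $\PRG$, or $\WE$, a union bound bounds the total advantage by $\negl(\secp)$, which proves the theorem. I expect the $\lang_\yes$ case to be the main obstacle: one has to argue that tampering with the \emph{public} circuit $\mathbf{P}$ cannot be noticed, and this is precisely where admissibility (no decryption key for a $\lang_\yes$ attribute) must be combined with pseudorandomness of the punctured PRF and with the sparseness of $\PRG$'s image — the latter being what lets us genuinely kill the accepting branch rather than merely appeal to functional equivalence. The remaining difficulty is purely quantitative: keeping the $2^\ell$‑step hybrid bookkeeping consistent and choosing the complexity‑leveraging parameters so that the accumulated loss stays negligible.
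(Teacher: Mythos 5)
Your proposal follows essentially the same punctured‑programming strategy as the paper's proof: an exponentially long hybrid walk over all $2^\ell$ attributes, complexity leveraging so that each per‑primitive loss is $\le 2^{-\ell}\cdot\negl(\secp)$, and, at each attribute $t$, a split into two cases — a witness‑encryption‑security argument when $(Q^\ast,t)$ is a no‑instance, and a puncture‑in‑$\mathbf{P}$, PRG‑sparsity, kill‑the‑accepting‑branch argument otherwise (with admissibility ensuring the adversary never queries $t$, so the punctured key suffices to answer all key queries). The paper organizes this as a single fourteen‑step sub‑hybrid sequence per attribute with conditional logic at steps $\cH_{i,2}$ and $\cH_{i,5}$–$\cH_{i,7}$, and walks from the $b=0$ game to the $b=1$ game rather than from a $b$‑dependent game to a $b$‑independent one, but these are cosmetic differences; the identified maximal circuits $\mathbf{P}^\ast,\mathbf{E}^\ast$ for padding coincide.

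One remark: the ``standard BQP amplification applied to $Q^\ast$'' step is not a legitimate WLOG. The encryption algorithm does not amplify $Q^\ast$ before use, so you cannot replace it by an amplified circuit inside the security reduction; if the adversary's chosen $Q^\ast$ accepts some $x$ with probability, say, $1/2$, no amplification is applied. The right way to make the case split airtight (and the way the argument implicitly works in the paper) is to split on whether $\Pr[Q^\ast(t)=1]\le\nu(\secp)$ — in which case $(Q^\ast,t)\in\lang_\no$ and $\WE$ security applies — or $\Pr[Q^\ast(t)=1]>\nu(\secp)$, in which case admissibility forbids the adversary from querying $t$ and the $\obf$+$\PRG$ argument applies \emph{regardless} of whether $(Q^\ast,t)$ is a yes‑instance. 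With that repair the attributes with ``intermediate'' acceptance probability are covered and the proof is complete.
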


\begin{proof}
The proof proceeds by defining an exponentially long series of hybrids, iterating over all possible attributes $x\in\{0,1\}^\ell$. More specifically, starting from hybrid $\mathcal{H}_0$ (the original experiment with the bit $b$ fixed to $b=0$) we define, for each $i \in \{0,1\}^\ell$, a different sequence of hybrids and we argue about the indistinguishability of neighbouring distributions. As mentioned earlier, we assume all primitives we use are sub-exponentially secure, that is, there exists an $\epsilon > 0$ such that no efficient adversary can break the primitive with probability better than $2^{-\secp^\epsilon}$. Thus, we can set the security parameter for each to be at least $\ell^c$ for some $c > 1/\epsilon$, ensuring that efficient adversaries have advantage $\negl / 2^\ell$. 

\begin{itemize}
    \item Hybrid $\mathcal{H}_{i,0}$: Defined the previous hybrid, except that we change the way we compute the challenge ciphertext. We begin by computing a punctured key $r_i \sample \PRF.\Puncture(r, i)$. Then we compute $\widetilde{\textbf{E}} \sample \obf(1^\lambda, \textbf{E}_1)$, where $\textbf{E}_1$ takes as input a pair $(x,s)$ and does the following.
    \begin{itemize}
        \item If $x < i$: Check whether $\widetilde{\textbf{P}}(x,s)=1$ and return $$\WE.\Enc(1^\lambda, (Q,x), 1; \PRF.\Eval(r_i,x))$$ if this is the case.
        \item If $x = i$: Check whether $\widetilde{\textbf{P}}(x,s)=1$ and return $$\WE.\Enc(1^\lambda, (Q,x), 0; \PRF.\Eval(r,x))$$ if this is the case.
        \item If $x > i$: Check whether $\widetilde{\textbf{P}}(x,s)=1$ and return $$\WE.\Enc(1^\lambda, (Q,x), 0; \PRF.\Eval(r_i,x))$$ if this is the case.
    \end{itemize}
    
    Note that, by the correctness of the puncturable PRF, the circuits $\textbf{E}$ and $\textbf{E}_1$ are functionally equivalent and therefore indistinguishability follows from the security of the classical obfuscator $\obf$.

    \item Hybrid $\mathcal{H}_{i,1}$: Defined the previous hybrid, except that we compute $\widetilde{\textbf{E}} \sample \obf(1^\lambda, \textbf{E}_2)$, where $\textbf{E}_2$ takes as input a pair $(x,s)$ and does the following.
    \begin{itemize}
        \item If $x < i$: Same as $\textbf{E}_1$.
        \item If $x = i$: Check whether $\widetilde{\textbf{P}}(x,s)=1$ and return $\WE.\Enc(1^\lambda, (Q,x), 0; \tilde{r})$ if this is the case, where $\tilde{r} \sample \{0,1\}^\lambda$.
        \item If $x > i$: Same as $\textbf{E}_1$.
    \end{itemize}
    
    Note that the two hybrids differ only in the definition of $\tilde{r}$, which is uniformly sampled in $\mathcal{H}_{i,1}$ and computed according to the puncturable PRF in $\mathcal{H}_{i,0}$. By the indistinguishability of the puncturable PRF, we have that the two distributions are computationally close.
    
    \item Hybrid $\mathcal{H}_{i,2}$: In this hybrid we check whether $Q^\ast(i) = 0$. If this is not the case, then we proceed as before. Otherwise, we compute $\widetilde{\textbf{E}} \sample \obf(1^\lambda, \textbf{E}_3)$, where $\textbf{E}_3$ takes as input a pair $(x,s)$ and does the following.
    \begin{itemize}
        \item If $x < i$: Same as $\textbf{E}_2$.
        \item If $x = i$: Check whether $\widetilde{\textbf{P}}(x,s)=1$ and return $\WE.\Enc(1^\lambda, (Q,x), 1; \tilde{r})$ if this is the case, where $\tilde{r} \sample \{0,1\}^\lambda$.
        \item If $x > i$: Same as $\textbf{E}_2$.
    \end{itemize}
    
    Note that we change the view of the adversary only if $Q^\ast(i) = 0$, which implies that the statement $(Q,i)$ is false. Thus indistinguishability follows from the security of the witness encryption scheme.

    \item Hybrid $\mathcal{H}_{i,3}$: This is defined as the previous one, except that we compute a punctured key $k_i \sample \PRF.\Puncture(k, i)$ and we modify the public parameters as follows. We obfuscate $\widetilde{\textbf{P}} \sample \obf(1^\lambda, \textbf{P}_1)$ where $\textbf{P}_1$ is the circuit that, on input some attribute $x\in\{0,1\}^\ell$ and a string $s\in\{0,1\}^\lambda$, does the following.
    \begin{itemize}
        \item If $x\neq i$: Return $1$ if and only if $\PRG(s) = \PRG(\PRF.\Eval(k_i, x))$.
        \item If $x=i$: Return $1$ if and only if $\PRG(s) = \PRG(\PRF.\Eval(k, i))$.
    \end{itemize}
    
    By the perfect correctness of the puncturable PRF, the two circuits are functionally equivalent and therefore the indistinguishability follows from the security of the obfuscator $\obf$.

    \item Hybrid $\mathcal{H}_{i,4}$: In this hybrid we compute $\widetilde{\textbf{P}} \sample \obf(1^\lambda, \textbf{P}_2)$ where $\textbf{P}_2$ is the circuit that, on input some attribute $x\in\{0,1\}^\ell$ and a string $s\in\{0,1\}^\lambda$, does the following.
    \begin{itemize}
        \item If $x\neq i$: Same as $\textbf{P}_1$.
        \item If $x=i$: Return $1$ if and only if $\PRG(s) = \PRG(\tilde{k})$, where $\tilde{k}\sample\{0,1\}^\lambda$.
    \end{itemize}   
    Additionally, we answer the query of the adversary to the key generation oracle with $\tilde{k}$, if queried on attribute $i$.

    Note that this hybrid is identical to the previous one, except that $\tilde{k}$ is sampled uniformly. By the security of the puncturable PRF, the two hybrids are computationally indistinguishable.
    
    \item Hybrid $\mathcal{H}_{i,5}$: Before sampling the public parameters, we check whether $Q^\ast(i) = 1$. If this is not the case, then we proceed as before. Otherwise we obfuscate $\widetilde{\textbf{P}} \sample \obf(1^\lambda, \textbf{P}_3)$ where $\textbf{P}_3$ is the circuit that, on input some attribute $x\in\{0,1\}^\ell$ and a string $s\in\{0,1\}^\lambda$, does the following.
    \begin{itemize}
        \item If $x\neq i$: Same as $\textbf{P}_2$.
        \item If $x=i$: Return $1$ if and only if $\PRG(s) = K$, where $K \sample\{0,1\}^{\lambda\cdot\ell}$.
    \end{itemize}   
    
    Note that if $Q^\ast(i) \neq 1$, then the distribution induced by this hybrid is identical to the previous one, so we only consider the case where $Q^\ast(i) = 1$. Observe that an admissible adversary never queries the key generation oracle on $i$. Thus, the key $\tilde{k}$ is not present in the view of the distinguisher. Indistinguishability follows from the pseudorandomness of $\PRG$.
    
    \item Hybrid $\mathcal{H}_{i,6}$: Here we again check whether $Q^\ast(i) = 1$. If this is not the case, then we proceed as before. Otherwise we compute $\widetilde{\textbf{P}} \sample \obf(1^\lambda, \textbf{P}^\ast)$ where $\textbf{P}^\ast$ is the circuit that, on input some attribute $x\in\{0,1\}^\ell$ and a string $s\in\{0,1\}^\lambda$, does the following.
    \begin{itemize}
        \item If $x\neq i$: Same as $\textbf{P}_3$.
        \item If $x=i$: Return $0$.
    \end{itemize}  

    Note that the programs $\textbf{P}_3$ and $\textbf{P}^\ast$ are identical except if $K$ falls within the range of $\PRG$. Since this happens only with negligible probability, then the two hybrids are computationally indistinguishable by the security of the obfuscator $\obf$.
    
    \item Hybrid $\mathcal{H}_{i,7}$: In this hybrid we check whether $Q^\ast(i) = 1$. If this is not the case, then we proceed as before. Otherwise, we compute the challenge ciphertext as $\widetilde{\textbf{E}} \sample \obf(1^\lambda, \textbf{E}^\ast)$, where $\textbf{E}^\ast$ takes as input a pair $(x,s)$ and does the following.
    \begin{itemize}
        \item If $x < i$: Same as $\textbf{E}_3$.
        \item If $x = i$: Check whether $\widetilde{\textbf{P}}(x,s)=1$ and return $\WE.\Enc(1^\lambda, (Q,x), 1; \tilde{r})$ if this is the case, where $\tilde{r} \sample \{0,1\}^\lambda$.
        \item If $x > i$: Same as $\textbf{E}_3$.
    \end{itemize}
    
    Observe that at this point $\widetilde{\textbf{P}}$ always returns $0$ whenever queried on $i$, and thus the programs $\textbf{E}_3$ and $\textbf{E}^\ast$ are functionally equivalent. Indistinguishability follows from the security of $\obf$.

    \item Hybrid $\mathcal{H}_{i,8}$: We revert the change done in $\mathcal{H}_{i,6}$.
    
    \item Hybrid $\mathcal{H}_{i,9}$: We revert the change done in $\mathcal{H}_{i,5}$.

    \item Hybrid $\mathcal{H}_{i,10}$: We revert the change done in $\mathcal{H}_{i,4}$.
    
    \item Hybrid $\mathcal{H}_{i,11}$: We revert the change done in $\mathcal{H}_{i,3}$.

    \item Hybrid $\mathcal{H}_{i,12}$: We revert the change done in $\mathcal{H}_{i,1}$.

    \item Hybrid $\mathcal{H}_{i,13}$: We revert the change done in $\mathcal{H}_{i,0}$.
\end{itemize}
We denote by $\mathcal{H}_1$ the last hybrid of the sequence $\mathcal{H}_{2^\ell, 13}$. Observe that such an hybrid is identical to the original experiment with the bit $b$ fixed to $b=1$. This concludes our proof.
\end{proof}

\begin{remark}
We remark that using null-iO instead of witness encryption in the above scheme would allow us to achieve a stronger security definition where the attributes are also hiding to the eyes of parties that have keys for non-accepting predicates (one-sided attribute hiding). In Appendix~\ref{sec:pe} we show an alternative transformation from any ABE scheme to one-sided attribute-hiding ABE, additionally assuming the post-quantum hardness of the LWE problem.
\end{remark}

\else
\section{ZAPR Arguments for QMA}

In the following we show a transformation to lift our NIZK argument for QMA to the setting where the common reference string can be sampled maliciously. Specifically we construct a two-message witness indistinguishable argument for QMA with public verifiability. Such an argument has been referred to in the literature as a ZAPR argument (i.e.\ a ZAP~\cite{FOCS:DwoNao00} where the first message may be sampled with private random coins that are not needed for verification). Before presenting our scheme, we introduce the necessary cryptographic machinery.

\subsection{Definition}

In the following we define the notion of statistical ZAPR arguments for QMA, although a similar definition applies (with minor modifications) to the case of NP.

\begin{definition}[ZAPR Argument]
A ZAPR argument $(\ZAPR.\Setup,\allowbreak \ZAPR.\Prove,\allowbreak \ZAPR.\Verify)$ for a language $\lang\in\QMA$ with relation $\relation$ consists of the following efficient algorithms.
\begin{itemize}
    \item $\ZAPR.\Setup(1^\lambda)$: On input the security parameter $1^\lambda$, the setup returns a common reference string $\crs$.
    \item $\ZAPR.\Prove(\crs, \ket{\psi}^{\otimes k(\lambda)}, x)$: On input a common reference string $\crs$, $k(\lambda)$ copies of the witness $\ket{\psi}$, and a statement $x$, the proving algorithm returns a proof $\pi$.
    \item $\ZAPR.\Verify(\crs, \pi, x)$: On input a common reference string $\crs$, a proof $\pi$, and a statement $x$, the verification algorithm returns a bit $\{0,1\}$.
\end{itemize}
\end{definition}
We define correctness below.
\begin{definition}[Correctness]
A ZAPR argument $(\ZAPR.\Setup, \ZAPR.\Prove, \allowbreak \ZAPR.\Verify)$ is correct if there exists a negligible function $\nu$ such that for all $\lambda\in\mathbb{N}$, all $x\in\lang_\yes$, and all $\ket{\psi}\in\relation(x)$ it holds that
\[
\Pr\left[\ZAPR.\Verify(\crs, \ZAPR.\Prove(\crs, \ket{\psi}^{\otimes k(\secp)}, x), x) = 1\right] = 1 -\nu(\lambda)
\]
where $\crs \sample \ZAPR.\Setup(1^\lambda)$.
\end{definition}
Next, we define computational soundness.
\begin{definition}[Computational Soundness]
A ZAPR argument $(\ZAPR.\Setup,\allowbreak \ZAPR.\Prove, \allowbreak \ZAPR.\Verify)$ is computationally sound if
there exist a negligible function $\nu$ such that for all non-uniform  QPT adversaries with quantum advice 
$\adv = \{\adv_\lambda, \rho_\lambda\}_{\lambda \in \mathbb{N}}$ and all $x^\ast\in\lang_\no$, it holds that
\[
\Pr\left[\ZAPR.\Verify(\crs, \adv_\secp(\crs, x^\ast; \rho_\secp), x^\ast) = 1\right] = \nu(\lambda)
 \]
 where $\crs \sample \ZAPR.\Setup(1^\lambda)$.
\end{definition}
In the following we present the notion of (statistical) witness indistinguishability.
\begin{definition}[Statistical Witness Indistinguishability]
A ZAPR argument $(\ZAPR.\Setup, \ZAPR.\Prove, \allowbreak \ZAPR.\Verify)$ is witness indistinguishable if for all $\lambda\in\mathbb{N}$, all $x\in\lang_\yes$, all pairs of witnesses $\ket{\psi_0}, \ket{\psi_1} \in\relation(x)$, and all common reference strings $\crs$, it holds that
\[
\ZAPR.\Prove(\crs, \ket{\psi_0}^{k(\lambda)}, x) \approx_s  \ZAPR.\Prove(\crs, \ket{\psi_1}^{k(\lambda)}, x).
\]
\end{definition}

\paragraph{Statistical ZAPs for NP.}
It was recently show in~\cite{EC:BFJKS20,EC:GJJM20} that statistical ZAPs for $\NP$ exist assuming the quasi-polynomial (quantum) hardness of the LWE problem.
\begin{lemma}[\cite{EC:BFJKS20,EC:GJJM20}]
Assuming the quantum quasi-polynomial hardness of the LWE problem, there exists a public coin ZAP for $\NP$
$(\ZAP.\Setup,\allowbreak \ZAP.\Prove, \ZAP.\Verify)$.
\end{lemma}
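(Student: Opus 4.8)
This lemma is a verbatim restatement of results from~\cite{EC:BFJKS20,EC:GJJM20}, so in lieu of a self-contained proof I will describe the strategy those works employ and flag where quasi-polynomial hardness enters. The plan is to realize a two-message \emph{public-coin} statistically witness-indistinguishable argument for $\NP$ out of two ingredients that both follow from quasi-polynomial LWE: (i) a statistically-hiding, sometimes-binding ($\SBSH$) commitment whose receiver message is a uniformly random string, equipped with a computationally indistinguishable ``trapdoor'' receiver-message distribution under which a commitment becomes efficiently extractable given a trapdoor; and (ii) an inner non-interactive proof system for $\NP$ with a public-coin CRS and computational soundness, instantiated from LWE (e.g.\ via a correlation-intractable hash applied to a suitable $\Sigma$-protocol). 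Given these, $\ZAP.\Setup(1^\secp)$ outputs a uniform string, parsed as an $\SBSH$ receiver message together with a public-coin CRS; to prove $x\in\lang$ with witness $w$, the prover commits to $w$, obtaining $c$, and attaches a proof of the statement ``$c$ is a valid $\SBSH$ commitment to some $w'$ with $w'\in\relation(x)$'', which $\ZAP.\Verify$ simply re-checks.

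Statistical witness indistinguishability follows because, under an honestly sampled (uniform) receiver message, $c$ carries only statistically negligible information about $w$, and, since the composite $\NP$-statement is witnessed for both candidate witnesses (the commitment being equivocal in its hiding mode), the accompanying inner proof reveals nothing further about which witness was used; hence the whole transcript is statistically independent of the prover's input. Computational soundness goes through a mode switch: replace the verifier's honest receiver message by a trapdoor one (computationally indistinguishable, so a QPT cheating prover's acceptance probability on a false $x\in\lang_\no$ changes by at most $\negl$), then extract $w'$ from $c$ using the trapdoor and invoke soundness of the inner proof system to conclude $w'\in\relation(x)$ — impossible for $x\in\lang_\no$.

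The main obstacle — and what makes quasi-polynomial rather than polynomial LWE hardness necessary — is the complexity-leveraging step hidden in the soundness analysis: the $\SBSH$ commitment's parameters must simultaneously guarantee \emph{statistical} (not merely computational) hiding in the public-coin mode and a large enough chance that the soundness reduction, after switching to trapdoor mode, lands on an extractable commitment to then defeat a computational assumption; reconciling these forces the LWE instance underlying mode-indistinguishability (and the inner proof system) to be hard against quasi-polynomial-time adversaries, with the precise parameter bookkeeping carried out in~\cite{EC:BFJKS20,EC:GJJM20}. Finally, the whole construction carries over to the post-quantum setting essentially for free: every reduction invoked above — hardness of LWE, soundness and witness indistinguishability of the inner proof system, and mode-indistinguishability of the $\SBSH$ commitment — is black-box and goes through against QPT adversaries, which is exactly the observation those works make when stating this lemma in the quantum setting.
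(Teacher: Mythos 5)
The paper imports this lemma directly from \cite{EC:BFJKS20,EC:GJJM20} without giving a proof of its own, so there is no in-paper argument to compare against. Your write-up, which you correctly frame as a summary of the cited works rather than a self-contained derivation, gives a fair high-level account of their constructions (public-coin verifier message, statistically-hiding sometimes-binding commitments, CI-hash-based inner arguments, and black-box transference of every reduction to QPT adversaries) and correctly pinpoints where quasi-polynomial LWE hardness enters, namely the complexity-leveraging step that reconciles statistical hiding with a noticeable extraction probability in the soundness reduction.
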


\subsection{Non-Interactive Witness-Indistinguishable Proofs for NP}

We recall the notion of non-interactive witness-indistinguishable (NIWI) proof for NP~\cite{C:BarOngVad03}.
\begin{definition}[NIWI Proof for NP]
A NIWI proof $(\NIWI.\Prove, \NIWI.\Verify)$ for a language $\lang\in\NP$ with relation $\relation$ consists of the following efficient algorithms.
\begin{itemize}
    \item $\NIWI.\Prove(1^\lambda,w, x)$: On input the security parameter $1^\lambda$, a witness $w$, and a statement $x$, the proving algorithm returns a proof $\pi$.
    \item $\NIWI.\Verify(\pi, x)$: On input a proof $\pi$, and a statement $x$, the verification algorithm returns a bit $\{0,1\}$.
\end{itemize}
\end{definition}
We defined the properties of interest below.
\begin{definition}[Correctness]
A NIWI proof $(\NIWI.\Prove, \NIWI.\Verify)$ is correct if for all $\lambda\in\mathbb{N}$, all $x\in\lang$, and all $w\in\relation(x)$ it holds that
\[
\Pr\left[\NIWI.\Verify(\NIWI.\Prove(1^\lambda, w, x), x) = 1\right] = 1.
\]
\end{definition}
We define statistical soundness.
\begin{definition}[Statistical Soundness]
A NIWI proof $(\NIWI.\Prove, \NIWI.\Verify)$  is statistically sound if
there exist a negligible function $\nu$ such that for all $x^\ast\notin\lang$ and all proofs $\pi^\ast$ it holds that
 \[
 \Pr\left[\NIWI.\Verify(\pi^\ast, x^\ast) = 1\right] = \nu(\lambda).
 \]
\end{definition}
Finally we define computational witness indistinguishability.
\begin{definition}[Computational Witness Indistinguishability]
A NIWI proof $(\NIWI.\Prove, \NIWI.\Verify)$ is witness indistinguishable if there exist a negligible function $\nu$ such that for all $\lambda\in\mathbb{N}$, all $x\in\lang$, and all pairs of witnesses $w_0, w_1\in\relation(x)$ it holds that
\[
\NIWI.\Prove(1^\lambda,w_0, x) \approx_c \NIWI.\Prove(1^\lambda,w_1, x).
\]
\end{definition}
NIWI proofs are known to exist under a variety of assumptions, but for the purpose of our work we only consider constructions that (plausibly) satisfy post-quantum security.
\begin{lemma}[\cite{TCC:BitPan15},\cite{TCC:BitPanWic16}]
Assuming the existence of post-quantum one-way functions and post-quantum sub-exponential indistinguishability obfuscation for classical circuits, there exists a post-quantum NIWI for $\NP$
$(\NIWI.\Prove, \allowbreak\NIWI.\Verify)$.
\end{lemma}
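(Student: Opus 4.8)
The plan is to re-derive the result of Bitansky--Paneth~\cite{TCC:BitPan15}: obtain a NIWI by ``derandomizing'' the verifier's first message of a two-message public-coin witness-indistinguishable argument (a ZAP) with the help of indistinguishability obfuscation. First I would assemble all ingredients purely from post-quantum one-way functions and sub-exponentially secure iO. From post-quantum OWFs one gets a puncturable PRF $(\PRF.\Gen,\PRF.\Puncture,\PRF.\Eval)$ (via GGM). From sub-exponentially secure iO together with OWFs one gets (doubly-enhanced) trapdoor permutations following~\cite{TCC:BitPanWic16}; plugging these into the Feige--Lapidot--Shamir hidden-bits paradigm yields a \emph{statistically sound}, computationally zero-knowledge NIZK for $\NP$ in the common reference string model; and applying the Dwork--Naor compiler~\cite{FOCS:DwoNao00} to this NIZK yields a statistically sound two-message public-coin ZAP $(\ZAP.\Setup,\ZAP.\Prove,\ZAP.\Verify)$ for $\NP$ with computational witness indistinguishability. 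Every reduction in this chain is straight-line, so post-quantum security of the components is inherited from post-quantum security of the OWF and the iO.

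Given these, the NIWI is defined as follows. On input $(1^\lambda, w, x)$ with $w \in \relation(x)$, $\NIWI.\Prove$ samples a puncturable PRF key $k \sample \PRF.\Gen(1^\lambda)$ and outputs
\[
\widetilde{C} \sample \obf\big(1^\lambda, C[x,w,k]\big), \qquad C[x,w,k](\rho) := \ZAP.\Prove\big(\rho, x, w;\, \PRF.\Eval(k,\rho)\big),
\]
i.e.\ $\widetilde{C}$ is an obfuscation of the circuit that maps a ZAP first message $\rho$ to a ZAP response computed with PRF-derived randomness. On input $(\widetilde{C}, x)$, $\NIWI.\Verify$ samples a fresh ZAP first message $\rho$, computes $a = \widetilde{C}(\rho)$, and outputs $\ZAP.\Verify(\rho, x, a)$. (If a deterministic verifier is desired one can additionally derandomize $\rho$ by a complexity-theoretic pseudorandom generator fooling co-nondeterministic circuits, as in the Barak--Ong--Vadhan derandomization of ZAPs, but the definition in this section permits a randomized verifier, so this step is optional.)

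Correctness is immediate from ZAP completeness and iO correctness. For statistical soundness, fix $x^\ast \notin \lang$: by statistical soundness of the ZAP, all but a negligible fraction of first messages $\rho$ admit \emph{no} accepting response for $x^\ast$, and since $\NIWI.\Verify$ draws $\rho$ afresh and independently of the proof, for every purported proof $\widetilde{C}^\ast$ we get $\Pr_\rho[\ZAP.\Verify(\rho, x^\ast, \widetilde{C}^\ast(\rho)) = 1]$ negligible, which is exactly the required bound. For computational witness indistinguishability I would hybrid over all ZAP first messages $\rho \in \{0,1\}^{\ell(\lambda)}$ one at a time: to ``flip'' the hardwired value at a fixed $\rho$, first replace $k$ inside the obfuscated circuit by $\PRF.\Puncture(k,\rho)$ together with a hardwired output at $\rho$ (indistinguishable by iO security since the function is unchanged), then replace the hardwired randomness $\PRF.\Eval(k,\rho)$ by fresh coins (puncturable PRF security), then switch the hardwired ZAP witness from $w_0$ to $w_1$ (computational witness indistinguishability of the ZAP), then undo the first two changes so that $\rho$ is again handled by the PRF. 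Consecutive hybrids are indistinguishable by iO security, and summing over the $2^{\ell(\lambda)}$ steps and setting the internal security parameter of the iO, PRF, and ZAP to a sufficiently large polynomial in $\ell(\lambda)$ drives the total advantage to negligible.

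The main obstacle is precisely this exponential hybrid argument for witness indistinguishability. It forces two design decisions: the ZAP prover's randomness must be generated \emph{deterministically} from $\rho$ via a puncturable PRF, so that a single coordinate of the (exponential-domain) obfuscated function can be isolated and reprogrammed; and the obfuscator, the puncturable PRF, and the ZAP's witness-indistinguishability must all be \emph{sub-exponentially} secure, so that a $2^{\ell(\lambda)}$-term summation of advantages still vanishes once the inner security parameter is a polynomial blow-up of $\ell(\lambda)$ --- which also propagates back into the parameter choices for the trapdoor-permutation / NIZK / ZAP layer. A secondary technical point, orthogonal to the obfuscation-derandomization trick, is to justify that sub-exponential iO plus OWFs really do yield trapdoor permutations with the enhancement properties the FLS hidden-bits NIZK (and hence the Dwork--Naor ZAP, and hence statistical soundness) requires; this is the content imported from~\cite{TCC:BitPanWic16}. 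Finally, one should verify that every reduction above remains sound against quantum distinguishers, which holds because none of them rewinds the adversary.
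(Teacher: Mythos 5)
The paper gives no proof of this lemma---it is stated purely as a citation to~\cite{TCC:BitPan15} and~\cite{TCC:BitPanWic16}---and your proposal faithfully reconstructs the construction those works carry out: obtain (doubly-enhanced) trapdoor permutations, hence a statistically sound FLS-style NIZK and a Dwork--Naor ZAP, from iO and one-way functions as in~\cite{TCC:BitPanWic16}, then derandomize the ZAP prover by obfuscating it with puncturable-PRF-derived coins and argue witness indistinguishability via the exponential punctured-programming hybrid over verifier first messages exactly as in~\cite{TCC:BitPan15}, with sub-exponential security of the iO, PRF, and ZAP absorbing the $2^{\ell(\lambda)}$ hybrid blow-up. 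Your two side remarks are also accurate: the paper's NIWI definition permits a randomized verifier, so the Barak--Ong--Vadhan derandomization step is optional, and every reduction in the chain is straight-line and non-rewinding, so post-quantum security of the base primitives lifts to the NIWI.
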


\subsection{Sometimes-Binding Statistically Hiding Commitments}

We introduce the notion of sometimes-binding statistically hiding (SBSH) commitments, a notion formally introduced in~\cite{EC:LomVaiWic20}.

\begin{definition}[SBSH Commitment]
An SBSH commitment scheme $(\SBSH.\Gen, \allowbreak\SBSH.\Key, \SBSH.\com)$ consists of the following efficient algorithms.
\begin{itemize}
    \item $\SBSH.\Gen(1^\lambda)$: On input the security parameter $1^\lambda$, the generation algorithm returns a partial commitment key $\ck_0$.
    \item $\SBSH.\Key(\ck_0)$: On input a partial key $\ck_0$, the key agreement algorithm returns the complement of the key $\ck_1$.
    \item $\SBSH.\com((\ck_0, \ck_1), m)$: On input a commitment key $(\ck_0, \ck_1)$ and a message $m$, the commitment algorithm returns a partial commitment key $\ck_1$ and a commitment $c$.
\end{itemize}
\end{definition}
The commitment must satisfy the notion of statistical hiding.
\begin{definition}[Statistical Hiding]
An SBSH commitment scheme $(\SBSH.\Gen,\allowbreak \SBSH.\Key, \SBSH.\com)$  is statistically hiding if for all $\lambda\in\mathbb{N}$, all partial keys $\ck_0$, and all pairs of messages $(m_0, m_1)$, it holds that
\[
(\ck_0,\ck_1,\SBSH.\com((\ck_0, \ck_1), m_0)) \approx_s (\ck_0,\ck_1,\SBSH.\com((\ck_0, \ck_1), m_1))
\]
where $\ck_1 \sample \SBSH.\Key(\ck_0)$.
\end{definition}
Next we define the notion of sometimes-binding for an SBSH commitment scheme. We define the set $\mathsf{Binding}$ as the set of all commitment keys $(\ck_0, \ck_1)$ such that any resulting commitment is perfectly binding. We present the definition of the property in the following.
\begin{definition}[Sometimes Binding]
An SBSH commitment scheme $(\SBSH.\Gen,\allowbreak \SBSH.\Key, \SBSH.\com)$ is $(\varepsilon, \delta)$-sometimes binding if there exists a negligible function $\nu$ such that for all $\lambda\in\mathbb{N}$ and all (stateful) QPT distinguishers $\adv = \{\adv_\lambda, \rho_\lambda\}_{\lambda \in \mathbb{N}}$, it holds that
\[
\Pr\left[\adv_\secp(\tau) =1 \land (\ck_0, \ck_1) \in \mathsf{Binding}\right] = \varepsilon(\lambda) \cdot \Pr\left[\adv_\secp(\tau) =1\right] - \delta(\lambda)\cdot\nu(\lambda) 
\]
where $\ck_0 \sample \SBSH.\Gen(1^\lambda)$ and $(\tau,\ck_1) \gets \adv_\secp(\ck_0,\rho_\secp)$.
\end{definition}
We also require the existence of a polynomial-time extractor $\SBSH.\Ext$ that, on input the random coins $r$ used in the $\SBSH.\Gen$ algorithm, extracts the committed message $m$ from the protocol transcript if $(\ck_0, \ck_1) \in \mathsf{Binding}$. The works of~\cite{EC:KalKhuSah18,EC:BFJKS20,EC:GJJM20} present constructions of SBSH commitment schemes (albeit with a slightly different syntax) for quasi-polynomial $(\varepsilon, \delta)$ assuming the quasi-polynomial hardness of LWE. Note that the extractor does not need to access the code of the adversary (not even as an oracle) and therefore it is well defined regardless on whether the adversary is classical or quantum.

\begin{lemma}[\cite{EC:KalKhuSah18,EC:BFJKS20,EC:GJJM20}]
Assuming the quantum quasi-polynomial hardness of the LWE problem, there exists an $(\varepsilon, \delta)$-sometimes binding SBSH commitment scheme $(\SBSH.\Gen, \SBSH.\Key, \SBSH.\com)$.
\end{lemma}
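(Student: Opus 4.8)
The plan is to obtain this lemma by instantiating the interface $(\SBSH.\Gen,\SBSH.\Key,\SBSH.\com)$ with the two-round statistically-hiding extractable commitments of \cite{EC:KalKhuSah18,EC:BFJKS20,EC:GJJM20}, all of which are built from LWE; the only real work is to massage their syntax into these three algorithms and to check that the relevant reductions survive a quantum adversary. The base object is an LWE-based \emph{dual-mode} commitment: the receiver's message is a key that is either \emph{lossy} (statistically hiding, not binding) or \emph{injective} (perfectly binding, carrying a trapdoor for straight-line decryption), with the two modes computationally indistinguishable under quasi-polynomial LWE. Concretely one can take a lossy public-key encryption scheme from LWE and commit to a bit by encrypting it.

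In the construction, $\SBSH.\Gen(1^\lambda)$ samples $\ck_0$ so that, over its coins, it is injective with probability $\varepsilon(\lambda)$ and lossy otherwise — realizing this precise probability while remaining indistinguishable from an always-lossy distribution is the technical heart of the cited works, which I would import directly — and it retains the coins (hence the trapdoor, when present) for $\SBSH.\Ext$. Then $\SBSH.\Key(\ck_0)$ publicly derives the remaining parameters $\ck_1$ (empty in the single-bit case, a parallel repetition otherwise), $\SBSH.\com((\ck_0,\ck_1),m)$ runs the dual-mode commitment on $m$, and $\mathsf{Binding}$ is exactly the set of keys whose $\ck_0$-component is injective. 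Statistical hiding then holds because for every lossy $\ck_0$ the commitment information-theoretically hides $m$, and the only binding keys are the $\varepsilon$-rare injective ones. For sometimes-binding, from the adversary's viewpoint $\ck_0$ is — up to the LWE distinguishing advantage — distributed the same whether or not it is injective, and for any valid $\ck_1$ the adversary outputs, $(\ck_0,\ck_1)\in\mathsf{Binding}$ holds iff $\ck_0$ is injective; hence $\Pr[\adv=1 \wedge (\ck_0,\ck_1)\in\mathsf{Binding}]$ differs from $\varepsilon\cdot\Pr[\adv=1]$ by at most that advantage, and absorbing the complexity-leveraging slack into the multiplicative factor $\delta(\lambda)$ yields the residual error $\delta(\lambda)\cdot\nu(\lambda)$. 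I would emphasize that the extractor is \emph{straight-line}: it reads only the coins of $\SBSH.\Gen$, recovers the LWE trapdoor when $\ck_0$ is injective, and decrypts, so nothing in the analysis cares whether the adversary is classical or quantum.

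The main obstacle is not a single calculation but two points of care. First, the cited works present their schemes with somewhat different (interactive or commitment-key) interfaces, so I would need to pin down the exact quantifiers in the statistical-hiding and sometimes-binding statements above, in particular how $\mathsf{Binding}$ should behave when $\ck_1$ is chosen adversarially, possibly restricting the hiding guarantee to well-formed $\ck_0$ in the support of $\SBSH.\Gen$. Second, and more importantly, I would verify post-quantum security of each step: this reduces to checking that LWE hardness is invoked only to switch the mode of $\ck_0$ via a non-rewinding straight-line reduction, and that extraction never rewinds the adversary — both of which hold here. Setting the lattice parameters so that mode-switching is $\delta^{-1}$-quasi-polynomially hard then gives the lemma.
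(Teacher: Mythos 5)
There is a genuine gap in your statistical-hiding argument. The paper's statistical-hiding definition quantifies over \emph{all} partial keys $\ck_0$, including adversarially chosen ones, and only gives you randomness over the prover's $\ck_1 \sample \SBSH.\Key(\ck_0)$ and the commitment coins. Your design puts the binding decision entirely inside $\ck_0$: you set $\mathsf{Binding}$ to be the set of $(\ck_0,\ck_1)$ with injective $\ck_0$, and take $\ck_1$ to be empty or a deterministic parallel repetition, so for any injective $\ck_0$ the commitment is perfectly binding and hence perfectly distinguishes $m_0$ from $m_1$. Statistical hiding therefore fails on exactly the $\varepsilon$-fraction of keys you need for extraction, and restricting the hiding quantifier to ``well-formed $\ck_0$ in the support of $\SBSH.\Gen$'' does not help, since the injective keys are in the support by construction. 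This matters: the witness-indistinguishability argument for the ZAPR swaps commitment contents against a CRS that may be adversarial, so it needs hiding for every $\ck_0$.

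The structural idea you are missing, and which the cited schemes actually supply, is that $\mathsf{Binding}$ must be a joint condition on $(\ck_0,\ck_1)$ so that the \emph{prover's} $\ck_1$ governs the mode. Concretely, $\ck_0$ contains an LWE encryption of a short secret tag, the prover samples a fresh tag inside $\ck_1$, and $(\ck_0,\ck_1)\in\mathsf{Binding}$ iff the two tags collide --- an $\varepsilon$-probability event over the prover's coins, for \emph{every} $\ck_0$. The prover picks the mode blindly (the verifier's tag is hidden by the encryption), so the extraction probability stays $\varepsilon$, yet every fixed $\ck_0$ is statistically hiding once $\varepsilon$ is negligible. Your sometimes-binding calculation then transfers, but the correct bound on $\left|\Pr[\adv=1\wedge\mathsf{Binding}] - \varepsilon\Pr[\adv=1]\right|$ is roughly $\varepsilon$ times the tag-hiding advantage (not the advantage itself), which is why the lemma's $\delta = \varepsilon^2$ slack is the natural scale and why quasi-polynomial LWE hardness is required. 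Finally, note that the paper does not reprove this lemma at all; it imports it from the cited works and makes only one observation to justify quantum security, namely that the extractor is straight-line: it reads $\SBSH.\Gen$'s random coins and never touches the adversary, so the classical analysis carries over verbatim. You correctly spotted that observation, but the surrounding construction as you sketched it does not meet the paper's hiding definition.
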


\subsection{Construction}

We are now in the position to present a formal description of our ZAPR argument system for any language $\lang\in\QMA$ with relation $\relation$. Let $\varepsilon$ be some fixed negligible function. We assume the existence of a NIZK for $\lang$ $(\NIZK.\Setup, \allowbreak\NIZK.\Prove, \NIZK.\Verify)$ with classical setup and classical verification, a one-way function $\OWF$, a statistical ZAP for NP $(\ZAP.\Setup, \ZAP.\Prove, \ZAP.\Verify)$, and a NIWI for NP $(\NIWI.\Prove, \NIWI.\Verify)$, all with quasi-polynomial security $\varepsilon(\lambda)^2 \cdot \nu(\lambda)$, for some negligible function $\nu$. Finally, we assume the existence of an SBSH commitment scheme $(\SBSH.\Gen, \SBSH.\Key, \SBSH.\com)$ with $(\varepsilon(\lambda), \varepsilon(\lambda)^2)$-sometimes binding. Our protocol is presented in Figure~\ref{fig:zapr}.

\protocol{ZAPR for QMA}{A publicly-verifiable ZAPR argument for QMA}{fig:zapr}
{
\begin{itemize}
   \item $\ZAPR.\Setup(1^\lambda)$:
   \begin{itemize}
       \item Sample two common reference strings for the NIZK  system $\crs_0 \sample \NIZK.\Setup(1^\lambda)$ and $\crs_1 \sample \NIZK.\Setup(1^\lambda)$.
       \item Sample two strings $(x_0, x_1) \sample\{0,1\}^{2\lambda}$ and compute the corresponding images $y_0 = \OWF(x_0)$ and $y_1 = \OWF(x_1)$.
       \item Compute the partial key of the SBSH commitment $\ck_0 \sample \SBSH.\Gen(1^\lambda)$ and the first message of the ZAP $\crs'' \sample \ZAP.\Setup(1^\lambda)$.
       \item Compute a NIWI proof $\pi'$ for the statement
       \begin{align*}
      \{ \left( \crs_0 \in \NIZK.\Setup \text{ AND }y_0 \in \OWF\right) \text{ OR } \left( \crs_1 \in \NIZK.\Setup \text{ AND }y_1 \in \OWF\right)\}.
       \end{align*}
       \item Return $\crs = (\crs_0, \crs_1, y_0, y_1, \ck_0, \crs'', \pi')$.
   \end{itemize}
  \item $\ZAPR.\Prove(\crs, \ket{\psi}^{\otimes 2k(\lambda)}, x)$:
   \begin{itemize}
    \item Verify $\pi'$ and abort if the verification does not succeed.
    \item Compute two NIZK proofs for the statement $x$, $\pi_0 \sample \NIZK.\Prove(\crs_0, \ket{\psi}^{\otimes k(\lambda)}, x)$ and $\pi_1 \sample \NIZK.\Prove(\crs_1, \ket{\psi}^{\otimes k(\lambda)}, x)$. If $\pi_0$ is a valid proof, then set $b=0$, else if $\pi_1$ is valid then set $b=1$. If neither of the two proofs is valid, then abort.
    \item Sample a partial key for the SBSH commitment $\ck_{1} \sample \SBSH.\Key(\ck_0)$.
    \item Compute two SBSH commitments $c_{\NIZK} \sample \SBSH.\com((\ck_0, \ck_{1}), (b, \pi_b))$ and $c_{\OWF} \sample \SBSH.\com((\ck_0, \ck_{1}), 0)$.
    \item Compute a ZAP proof $\pi''$ for the statement
    \[
    \left\{ 
    \begin{array}{l}
    c_{\NIZK} \in \SBSH.\com((\ck_0, \ck_{1}), (b, \pi_b))\text{ s.t. } \NIZK.\Verify(\crs_b, \pi_b, x) = 1 \\\text{OR } c_{\OWF} \in \SBSH.\com((\ck_0, \ck_{1}), x_b) \text{ s.t. } \OWF(x_b) = y_b
        \end{array}
\right\}.
    \]
   \item Return $\pi = (\ck_{1}, c_{\NIZK}, c_{\OWF}, \pi'')$.
   \end{itemize}
   \item $\ZAPR.\Verify(\crs, \pi, x)$: 
    \begin{itemize}
    \item Accept if and only if $\pi''$ verifies.
    \end{itemize} 
\end{itemize}
}
\paragraph{Correctness.}
The correctness of the scheme follows immediately from the correctness of the underlying primitives.

\paragraph{Soundness.} We show how to reduce the (non-adaptive) computational soundness of our protocol to the security of the corresponding cryptographic primitives.
\begin{theorem}[Soundness]
Let $(\SBSH.\Gen, \SBSH.\Key, \SBSH.\com)$ be an SBSH commitment with $(\varepsilon(\lambda), \varepsilon(\lambda)^2)$-sometimes binding. Let $(\NIZK.\Setup, \allowbreak\NIZK.\Prove, \allowbreak\NIZK.\Verify)$ be a NIZK for $\lang$, let $\OWF$ be a one-way function, let $(\ZAP.\Setup, \allowbreak\ZAP.\Prove, \allowbreak\ZAP.\Verify)$ be a ZAP for NP, and let $(\NIWI.\Prove, \allowbreak \NIWI.\Verify)$, all with negligible security in $\varepsilon(\lambda)^2$. Then the scheme in Figure~\ref{fig:zapr} is computationally sound.
\end{theorem}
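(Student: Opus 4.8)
The plan is to assume for contradiction that some non-uniform QPT $\adv = \{\adv_\secp,\rho_\secp\}$ makes $\ZAPR.\Verify$ accept a proof for a fixed $x^\ast \in \lang_\no$ with non-negligible probability $p$, and to turn this into an attack on one of the four underlying primitives. The skeleton is: (i) use $(\varepsilon,\varepsilon^2)$-sometimes binding of the SBSH commitment to pass, with probability $\gtrsim \varepsilon(\secp) p$, to an execution in which $(\ck_0,\ck_1)\in\mathsf{Binding}$, so that the extractor $\SBSH.\Ext$ (run on the coins of $\SBSH.\Gen$) recovers the unique committed values $(b,\pi_b)$ inside $c_{\NIZK}$ and $z$ inside $c_{\OWF}$; (ii) use statistical soundness of the ZAP over the honestly sampled $\crs''$ to conclude that, except with negligible probability, a verifying $\pi''$ forces the ZAP statement to be true, hence one of these recovered values is a genuine witness, i.e.\ either $\NIZK.\Verify(\crs_b,\pi_b,x^\ast)=1$ or $\OWF(z)=y_b$ for some $b\in\{0,1\}$; (iii) observe that either conclusion breaks a primitive whose $\sim\varepsilon(\secp)^2\cdot\negl$-level security is too strong for a reduction succeeding with probability $\sim\varepsilon(\secp) p$.

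Concretely, I would set up a wrapper distinguisher $\adv'$ for the sometimes-binding game: on input $\ck_0$, it samples every other component of $\crs$ honestly (in particular $\crs_0,\crs_1\sample\NIZK.\Setup$, $y_j=\OWF(x_j)$, $\crs''\sample\ZAP.\Setup$, and $\pi'$ via $\NIWI.\Prove$ on a fixed branch of the OR-statement), runs $\adv_\secp(\crs,x^\ast;\rho_\secp)$ to get $\pi=(\ck_1,c_{\NIZK},c_{\OWF},\pi'')$, outputs $\ck_1$, and in its second stage outputs $1$ iff $\ZAPR.\Verify(\crs,\pi,x^\ast)=1$. Sometimes binding then gives $\Pr[\adv\text{ accepts}\wedge (\ck_0,\ck_1)\in\mathsf{Binding}] \ge \varepsilon(\secp) p - \varepsilon(\secp)^2\negl(\secp) \approx \varepsilon(\secp) p$ since $p$ is non-negligible. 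Conditioning on this event, the contents of $c_{\NIZK}$ and $c_{\OWF}$ are uniquely determined and recovered by $\SBSH.\Ext$, and since statistical ZAP soundness fails only for a negligible fraction of $\crs''$, a verifying $\pi''$ implies the recovered values witness the ZAP statement. Writing $p_a$ (resp.\ $p_b$) for the probability of this joint event together with ``the witnessing branch is the NIZK branch'' (resp.\ ``the $\OWF$ branch''), we get $p_a+p_b \gtrsim \varepsilon(\secp) p$, so $\max(p_a,p_b)\gtrsim \tfrac12\varepsilon(\secp) p$.

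For the NIZK branch, I would build a reduction $\cB_{\NIZK}$ against non-adaptive NIZK soundness: on challenge $\crs^\ast$ for $x^\ast\in\lang_\no$, sample a uniform guess $\beta\sample\{0,1\}$, plant $\crs_\beta:=\crs^\ast$, generate $\crs_{1-\beta}\sample\NIZK.\Setup$ and everything else honestly, produce $\pi'$ using a witness for branch $1-\beta$ (possible: $\cB_{\NIZK}$ knows the coins of $\crs_{1-\beta}$ and the preimage $x_{1-\beta}$), run $\adv$ and $\SBSH.\Ext$, and output the extracted proof if its committed track index equals $\beta$. When $\beta$ disagrees with the branch used by $\ZAPR.\Setup$ for $\pi'$, I would insert one hybrid invoking computational witness-indistinguishability of the NIWI, which alters $\adv$'s view by only $\sim\varepsilon(\secp)^2\negl$; with probability $1/2$ the guess is correct, so $\cB_{\NIZK}$ breaks NIZK soundness with probability $\gtrsim \tfrac12 p_a - \varepsilon(\secp)^2\negl(\secp)$. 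Symmetrically, a reduction $\cB_{\OWF}$ plants the one-wayness challenge $y^\ast$ as $y_\beta$ for a random $\beta$, uses branch $1-\beta$ for $\pi'$ (it now knows $x_{1-\beta}$ and both CRS coins), and outputs the extracted $z$ whenever $\OWF(z)=y_\beta$, inverting $\OWF$ with probability $\gtrsim \tfrac12 p_b - \varepsilon(\secp)^2\negl(\secp)$. Since $\max(p_a,p_b)\gtrsim\tfrac12\varepsilon(\secp) p$, at least one of $\cB_{\NIZK},\cB_{\OWF}$ succeeds with probability $\gtrsim \tfrac14\varepsilon(\secp) p - \varepsilon(\secp)^2\negl(\secp)$, which exceeds the $\sim\varepsilon(\secp)^2\cdot\negl(\secp)$ security level for all sufficiently large $\secp$ (as $p$ is non-negligible), the desired contradiction. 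Non-adaptivity is respected throughout since $x^\ast$ is fixed.

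The main obstacle I anticipate is the probabilistic bookkeeping of this ``complexity-leveraging'' argument rather than any single reduction: one must verify that the multiplicative factor-$\varepsilon(\secp)$ loss from the sometimes-binding step, together with the constant-factor losses from the track guess and the negligible ZAP-soundness-failure and NIWI-WI-hybrid terms, still leaves a reduction advantage dominating the $\varepsilon(\secp)^2$-scale security of the NIZK, $\OWF$, ZAP, and NIWI — which is precisely why the construction calibrates $(\varepsilon,\varepsilon^2)$-sometimes binding against $\varepsilon^2\nu$-security of the remaining primitives. A secondary care point is ensuring the reductions can produce an (identically- or indistinguishably-) distributed $\pi'$ without knowing the secret coins of the planted CRS or the planted preimage; pinning down which branch $\ZAPR.\Setup$ uses and inserting the single NIWI-WI hybrid resolves this cleanly, and notably the soundness (as opposed to WI) of $\NIWI$ plays no role here.
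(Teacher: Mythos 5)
Your proposal follows essentially the same route as the paper's proof: invoke $(\varepsilon,\varepsilon^2)$-sometimes binding to move to an execution where the SBSH extractor recovers the committed witness, argue that ZAP soundness then forces one of the two disjuncts of the ZAP statement to be satisfied, exploit NIWI witness-indistinguishability to guess (and hide) which branch of the OR the reduction itself uses for $\pi'$, and finally reduce to the $\varepsilon^2$-hardness of either the NIZK's soundness or the one-way function. The probability bookkeeping you lay out (factor $\varepsilon$ from sometimes-binding, factor $\tfrac12$ from the branch guess, plus the negligible terms from the NIWI-WI hybrid) matches the paper's accounting up to cosmetic differences; the paper instead bounds the success probability by $\varepsilon(\lambda)$ directly, but the two framings yield the same conclusion.

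There is, however, one technical error worth flagging. You invoke ``statistical soundness of the ZAP over the honestly sampled $\crs''$'' to conclude that a verifying $\pi''$ forces the ZAP statement to be true. But the ZAP used in the construction is a \emph{statistical WI} ZAP (needed so that the ZAPR is statistically witness-indistinguishable), and such schemes only have \emph{computational} soundness — a two-message protocol cannot have both statistical soundness and statistical WI for all of NP. The theorem statement accordingly assumes the ZAP has ``negligible security in $\varepsilon(\lambda)^2$,'' i.e.\ $\varepsilon^2 \nu$-level computational soundness, and the paper's proof correspondingly reduces to this property rather than to an unconditional soundness guarantee. Your argument is easily repaired — replace ``statistical soundness fails only for a negligible fraction of $\crs''$'' with a reduction: the wrapper that samples $\crs''$ honestly, runs $\adv$, extracts via $\SBSH.\Ext$, and checks whether the ZAP statement is false is an efficient adversary breaking ZAP soundness with advantage $\gtrsim \varepsilon p$, contradicting $\varepsilon^2\nu$-soundness since $p$ is non-negligible. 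The remainder of your argument then goes through unchanged.
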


\begin{proof}
The proof proceeds by showing a bound on the success probability of the prover of $\varepsilon(\lambda)$. Assume towards contradiction that 
\[
\Pr\left[\ZAPR.\Verify(\crs, \adv_\secp(\crs, x^\ast; \rho_\secp), x^\ast) = 1\right] \geq \varepsilon(\lambda).
\]
By the $(\varepsilon(\lambda), \varepsilon(\lambda)^2)$-sometimes binding property of the SBSH commitment scheme, we have that 
\[
\Pr\left[\ZAPR.\Verify(\crs, \adv_\secp(\crs, x^\ast; \rho_\secp), x^\ast) = 1\land (\ck_0, \ck_1)\in\mathsf{Binding}\right] \geq \varepsilon(\lambda)^2 \cdot(1- \nu(\lambda))
\]
for some negligible function $\nu(\lambda)$. We denote the outputs of the SBSH extractor by $r_\NIZK^\ast = \SBSH.\mathsf{Ext}(r,\allowbreak \ck_0, \ck_1, c_{\NIZK})$ and $r_\OWF^\ast = \SBSH.\mathsf{Ext}(r, \ck_0, \ck_1, c_{\OWF})$, where $r$ denotes the random coins used in the $\SBSH.\Gen$ algorithm. We now proceed by modifying the verification procedure to derive a contradiction.
\begin{itemize}
\item The verifier additionally checks whether $r_\NIZK^\ast$ or $r_\OWF^\ast$ define a bit $b\in\{0,1\}$, either by containing a valid proof $\pi_b$ with respect to $\crs_b$ or by containing a pre-image of $y_b$. If no such bit is defined, then the verifier aborts.

We claim that the probability that the adversary cheats and that the bit $b$ is well defined is at least $\varepsilon(\lambda)^2\cdot 1/\poly$. Assume towards contradiction that this is not the case. Then we know that with probability at least $\varepsilon(\lambda)^2 \cdot(1- \negl)$ the adversary successfully cheats and the ZAP proof $\pi''$ proves a false statement. This contradicts the quasi-polynomial soundness of the ZAP for NP. It follows that a well-defined $b$ is extracted by the proof with probability inverse polynomial in $\varepsilon(\lambda)^2$.

\item During the computation of the common reference string, the verifier samples a bit $b'$ and computes the NIWI proof using $x_{b'\oplus 1}$ and the random coins of the generation of $\crs_{b'\oplus 1}$ as a witness. In the verification algorithm, the verifier aborts if $b'\neq b$. 

We claim that, conditioned on the extraction being successful, the probabiliy that the verifier aborts is negligibly close to $1/2$. Assume the contrary, and consider the following reduction against the quasi-polynomial witness indistinguishability of the NIWI proof: The reduction performs the same operations of the verifier and, if the extraction is not successful, then it outputs a uniformly sampled bit, otherwise it outputs the extracted bit $b$. As argued above, the extraction succeeds with probability at least inverse polynomial in $\varepsilon(\lambda)^2$. Thus, the advantage of the reduction is also at least inverse polynomial in $\varepsilon(\lambda)^2$, which contradicts the security of the NIWI proof.

The bias on the output of the reduction is identical to the probability that $b' \neq b$, conditioned on the fact that the extraction is successful. As argued above, the extraction succeeds with probability at least inverse polynomial in $\varepsilon(\lambda)^2$. This is a contradiction to the quasi-polynomial soundness of the NIWI.
\end{itemize}
By the above analysis, it follows that, with probability inverse polynomial in $\varepsilon(\lambda)^2$, the variables $(r_\NIZK^\ast, r_\OWF^\ast)$ encode either (1) a valid NIZK proof $\pi_b$ against $\crs_b$ or (2) the pre-image of $y_b$. Note that the random coins used to sample $\crs_b$ and $y_b$ are not used by the verifier to compute the NIWI proof. Thus, case (1) contradicts the quasi-polynomial soundness of the NIZK argument, whereas (2) contradicts the quasi-polynomial security of the one-way function.
\end{proof}

\paragraph{Witness Indistinguishability.} In the following we show that our scheme satisfies statistical witness indistinguishability.

\begin{theorem}[Witness Indistinguishability]
The scheme in Figure~\ref{fig:zapr} is statistically witness indistinguishable.
\end{theorem}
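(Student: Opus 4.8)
The plan is to prove statistical witness indistinguishability by a short hybrid argument on the prover's output, exploiting the fact that the only components of $\pi = (\ck_{1}, c_{\NIZK}, c_{\OWF}, \pi'')$ that can depend on the witness are the commitment $c_{\NIZK}$ (which encodes $(b,\pi_b)$) and the ZAP proof $\pi''$ (which uses an opening of $c_{\NIZK}$ as part of its witness); the key $\ck_{1} \sample \SBSH.\Key(\ck_0)$ is sampled independently of $\ket{\psi}$. First I would dispose of the trivial case: if $\pi'$ does not verify, the honest prover aborts no matter the witness, so witness indistinguishability holds vacuously. Hence we may assume $\crs$ is such that $\pi'$ verifies, and then statistical soundness of the NIWI implies the statement carried by $\pi'$ is true, so there is an index $b^\ast = b^\ast(\crs) \in \{0,1\}$ with $\crs_{b^\ast} \in \NIZK.\Setup$ and $y_{b^\ast} \in \OWF$; fix a preimage $x^\ast$ with $\OWF(x^\ast) = y_{b^\ast}$. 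Note that $b^\ast$ and $x^\ast$ are functions of $\crs$ alone.

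Fixing a witness $\ket{\psi}$, I would then pass through the hybrids: $\Hyb_0$ is the honest output $\ZAPR.\Prove(\crs, \ket{\psi}^{\otimes 2k(\lambda)}, x)$; $\Hyb_1$ replaces $c_{\OWF} \sample \SBSH.\com((\ck_0,\ck_{1}), 0)$ with $c_{\OWF} \sample \SBSH.\com((\ck_0,\ck_{1}), x^\ast)$, which is $\approx_s$-indistinguishable by statistical hiding of the SBSH commitment; $\Hyb_2$ recomputes $\pi''$ using the \emph{second} disjunct of the ZAP statement (that $c_{\OWF}$ opens to a preimage of $y_{b^\ast}$) in place of the first, a legal witness switch because in $\Hyb_1$ both disjuncts admit valid witnesses, so by statistical witness indistinguishability of the ZAP for NP — which must hold even for the maliciously supplied first message $\crs''$, a property of the public-coin LWE-based constructions we invoke — $\Hyb_1 \approx_s \Hyb_2$; and $\Hyb_3$ replaces $c_{\NIZK}$ by $\SBSH.\com((\ck_0,\ck_{1}), 0)$, again $\approx_s$ by statistical hiding. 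In $\Hyb_3$ the tuple $(\ck_{1}, c_{\NIZK}, c_{\OWF}, \pi'')$ is generated with no reference to $\ket{\psi}$ beyond the prover's abort condition, so the distributions obtained by instantiating $\ket{\psi}$ as $\ket{\psi_0}$ versus $\ket{\psi_1}$ coincide; chaining the constant number of $\approx_s$ steps on both sides then yields $\ZAPR.\Prove(\crs, \ket{\psi_0}^{\otimes 2k(\lambda)}, x) \approx_s \ZAPR.\Prove(\crs, \ket{\psi_1}^{\otimes 2k(\lambda)}, x)$.

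The hard part — really the only subtlety — is the prover's abort when neither $\pi_0$ nor $\pi_1$ verifies, since this is the last place $\ket{\psi}$ enters $\Hyb_3$ and one must argue the abort probability is (nearly) the same for $\ket{\psi_0}$ and $\ket{\psi_1}$. Because $\pi'$ verifying forces $\crs_{b^\ast}$ to be an honestly samplable reference string, correctness of the NIZK gives that $\pi_{b^\ast} = \NIZK.\Prove(\crs_{b^\ast}, \ket{\psi}^{\otimes k(\lambda)}, x)$ verifies except with negligible probability for \emph{either} witness, so the two abort events agree up to a negligible quantity; making this airtight needs a per-$\crs$ reading of NIZK correctness (available here since the only internal randomness of $\NIZK.\Prove$ is that of $\WE.\Dec$, and $\WE.\Dec$ can be made to succeed with probability $1-\negl$ on every ciphertext by a majority-vote amplification) rather than merely the average-over-$\crs$ statement. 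An equivalent and perhaps cleaner route is to insert one extra hybrid up front that replaces the prover's data-dependent choice of $b$ by the fixed $b^\ast$, justified by the same correctness fact; after that step every later hybrid is manifestly witness-independent apart from $c_{\NIZK}$ and $\pi''$, which the SBSH-hiding and ZAP-WI steps neutralize as above.
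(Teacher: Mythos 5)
Your proposal matches the paper's proof step for step: the paper likewise uses NIWI statistical soundness to ensure a valid branch exists (its inefficient abort in $\mathcal{H}_1$ is just another packaging of your fixed $b^\ast$), then invokes SBSH statistical hiding to put a preimage under $c_{\OWF}$, ZAP statistical witness indistinguishability to switch to the OWF disjunct, SBSH hiding to erase $c_{\NIZK}$, swaps the quantum witness, and reverts. Your treatment of the witness-dependent abort is in fact slightly more careful than the paper's, which asserts $\mathcal{H}_4 \equiv \mathcal{H}_5$ where only $\approx_s$ holds because the abort probability can differ negligibly between $\ket{\psi_0}$ and $\ket{\psi_1}$; your per-$\crs$ appeal to NIZK correctness for the honestly sampled $\crs_{b^\ast}$ is the right way to close that gap.
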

\begin{proof}
The proof proceeds by defining a sequence of hybrid distributions that we show to be statistically indistinguishable.
\begin{itemize}
    \item Hybrid $\mathcal{H}_0$: This is the distribution with the proof being computed using $\ket{\psi_0}$, i.e.\ this is the distribution $\ZAPR.\Prove(\crs, \ket{\psi_0}^{\otimes 2k(\lambda)}, x)$.
    \item Hybrid $\mathcal{H}_1$: In this hybrid the algorithm checks inefficiently whether one of the two $(\crs_0, \crs_1)$ is correctly computed and whether the corresponding $(y_0, y_1)$ is in the range of the one-way function, and aborts if this is not the case. Otherwise proceed as in $\mathcal{H}_0$.
    
    Note that the only difference between this hybrid and the previous hybrid is if the algorithm in $\mathcal{H}_1$ aborts and the algorithm in $\mathcal{H}_0$ does not. This however implies that both $(\crs_0, y_0)$ and $(\crs_1, y_1)$ are invalid and thus contradicts the statistical soundness of the NIWI.
    \item Hybrid $\mathcal{H}_2$: In this hybrid we compute $c_\OWF \sample \SBSH.\com((\ck_0, \ck_{1}), x')$ where $x'$ is the (inefficiently computed) pre-image of either $y_0$ or $y_1$. Note at least one among $y_0$ and $y_1$ is guaranteed to have a pre-image.
    
    By the statistical hiding of the SBSH commitment, we have that $\mathcal{H}_1 \approx_s\mathcal{H}_2$.
    \item Hybrid $\mathcal{H}_3$: Here we compute the ZAP proof $\pi''$ using the alternative branch, i.e.\ using $x'$ and the randomness of $c_\OWF$ as the witness.
    
    This change is statistically indistinguishable by the statistical indistinguishability of the ZAP argument.
    \item Hybrid $\mathcal{H}_4$: Here we compute $c_\NIZK \sample \SBSH.\com((\ck_0, \ck_{1}), 0)$.
    
    By the statistical indistinguishability of the SBSH commitment we have that $\mathcal{H}_3 \approx_s \mathcal{H}_4$.
    
    \item Hybrid $\mathcal{H}_5$: Here we use $\ket{\psi_1}^{\otimes k(\lambda)}$ to compute the NIZK, instead of $\ket{\psi_0}^{\otimes k(\lambda)}$.
    
    Note that at this point the ouput of the distribution is independent of the NIZK proofs $\pi_0$ and $\pi_1$ and therefore $\mathcal{H}_4 \equiv \mathcal{H}_5$.
    
    \item Hybrid $\mathcal{H}_6$: We revert the change done in $\mathcal{H}_4$.
    \item Hybrid $\mathcal{H}_7$: We revert the change done in $\mathcal{H}_3$.
    \item Hybrid $\mathcal{H}_7$: We revert the change done in $\mathcal{H}_2$.
    \item Hybrid $\mathcal{H}_8$: We revert the change done in $\mathcal{H}_1$.
\end{itemize}
Note that the last hybrid corresponds to the distribution $\ZAPR.\Prove(\crs, \ket{\psi_1}^{\otimes 2k(\lambda)}, x)$, which concludes our proof.
\end{proof}

\section{Constrained PRF for BQP}

We define the notion of constrained pseudorandom function (PRF) and we present a construction for BQP.

\subsection{Definition}

We recall the syntax of constrained PRFs. 
\begin{definition}[Constrained PRF for BQP]
A constrained PRF for BQP $(\cPRF.\Gen,\allowbreak\cPRF.\Eval,\allowbreak\cPRF.\Constrain,\allowbreak\cPRF.\CEval)$ consists of the following efficient algorithms.
\begin{itemize}
\item $\cPRF.\Gen(1^\secp)$: On input the security parameter $1^\secp$, the key generation algorithm outputs a the public parameters $\params$, and master secret key $K$.
\item $\cPRF.\Eval(K, x)$: On input the key $K$ and a string $x$, the evaluation algorithm outputs a string $y$.
\item $\cPRF.\Constrain(K, Q)$: On input the key $K$ and a quantum circuit $Q$ (implementing a BQP language), the constrain algorithm returns a constrained key $K_Q$.
\item $\cPRF.\CEval(\params, K_Q, x)$: On input the public parameters $\params$, a constrained key $K_Q$, and a string $x$, the constrained evaluation algorithm returns a string $y$.
\end{itemize}
\end{definition}
For simplicity, we implicitly assume that the constrained key $K_Q$ also contains a (classical) description of the circuit $Q$.
\begin{definition}[Correctness]
A constrained PRF for BQP $(\cPRF.\Gen,\allowbreak\cPRF.\Eval,\allowbreak\cPRF.\Constrain,\allowbreak\cPRF.\CEval)$ is correct if for all negligible functions $\nu$, there exists a negligible function $\mu$ such that for any $\secp \in \mathbb{N}, x \in \{0,1\}^\secp$, and any quantum circuit $Q$ on $\ell$ input bits such that $\Pr[Q(x) = 1] = 1-\nu(\secp)$, it holds that
    $$\Pr\left[\cPRF.\Eval(K, x) =  \cPRF.\CEval(\params, K_Q, x)\right] = 1 - \mu(\secp)$$
    where $(\params,K) \sample \cPRF.\Gen(1^\secp)$ and $K_Q \sample \cPRF.\Constrain(K, Q)$.
\end{definition}
We define the notion of security of interest. In this work, we consider selective security, meaning that the adversary is bound to choose the challenge point ahead of time. We also consider the collusion resistant variant, where the adversary can obtain an unbounded number of constrained keys for circuits of his choice.
\begin{definition}[Security]
A constrained PRF for BQP $(\cPRF.\Gen,\allowbreak\cPRF.\Eval,\allowbreak\cPRF.\Constrain,\allowbreak\cPRF.\CEval)$ is secure if there exists a pair of negligible functions $\nu$ and $\mu$ such that for all $\lambda \in \mathbb{N}$, all points $x^\ast$, and all admissible non-uniform
 QPT distinguishers with quantum advice 
 $\adv = \{\adv_\lambda, \rho_\lambda\}_{\lambda \in \mathbb{N}}$, it holds that
 $$\begin{array}{r}
 \Pr\left[b =\cA_\secp(\params, y^\ast; \rho_\secp)^{\cPRF.\Constrain(K,\cdot)} : \begin{array}{l}(\params,K) \sample \cPRF.\Gen(1^\secp) \\ b \sample \{0,1\} \\ 
 y^\ast = \cPRF.\Eval(K, x^\ast) \text{ if } b=0\\
 y^\ast \sample \{0,1\}^\lambda \text{ if } b=1\\
 \end{array}\right] = 1/2 + \mu(\secp)\end{array}$$
where $\adv$ is admissible if each query $Q$ to $\cPRF.\Constrain(K,\cdot)$ is such that $\Pr[Q(x^\ast) = 1] \leq \nu(\secp)$.
\end{definition}

\subsection{Construction}

We show in the following a construction of a (collusion-resistant) constrained PRF for BQP. We assume the  existence of a puncturable PRF $(\PRF.\Gen, \allowbreak\PRF.\Puncture, \allowbreak\PRF.\Eval)$, an indistinguishability obfuscator $\obf$ for classical circuits, and a key-policy ABE $(\ABE.\Gen,\allowbreak\ABE.\Enc,\allowbreak \ABE.\KeyGen,\allowbreak \ABE.\Dec)$ for BQP, with classical ciphertexts and classical keys. Our scheme $(\cPRF.\Gen,\allowbreak \cPRF.\Eval,\allowbreak\cPRF.\Constrain,\allowbreak \cPRF.\CEval)$ is described in Figure~\ref{fig:qprf}.

\protocol{Constrained PRF for BQP}{A constrained PRF for BQP}{fig:qprf}
{
\begin{itemize}
\item $\cPRF.\Gen(1^\secp)$:
\begin{itemize}
\item Sample two keys $(k,\title{k}) \sample \PRF.\Gen(1^\lambda)$.
\item Sample a pair $(\msk, \mpk) \sample \ABE.\Gen(1^\secp, 1^\secp)$.
\item Compute the obfuscation $\widetilde{\textbf{P}} \sample \obf(1^\lambda, \textbf{P})$ where $\textbf{P}$ is the circuit that, on input some string $x\in\{0,1\}^\secp$, returns
$$
\ABE.\Enc(\mpk, x, \PRF.\Eval(k, x); \PRF.\Eval(\tilde{k}, x)).
$$
The circuit $\textbf{P}[k]$ is padded to the maximum size of $\textbf{P}^\ast$ (defined in the proof of Theorem~\ref{thm:prfsec}).
\item Return the key $K = (k,\msk)$ and $\params= \widetilde{\textbf{P}}$.
\end{itemize}
\item $\cPRF.\Eval(K, x)$:
\begin{itemize}
\item Return $\PRF.\Eval(k, x)$.
\end{itemize}
\item $\cPRF.\Constrain(K, Q)$:
\begin{itemize}
\item Return $\ABE.\KeyGen(\msk, Q)$.
\end{itemize}
\item $\cPRF.\CEval(\params, K_Q, x)$:
\begin{itemize}
\item Compute $c = \widetilde{\textbf{P}}(x)$.
\item Return $\ABE.\Dec(K_Q, c)$.
\end{itemize}
\end{itemize}
}

\paragraph{Correctness.} The statistical correctness of the scheme follows from the correctness of the underlying cryptographic building blocks. We stress that we assume that the ABE scheme has classical ciphertext, and therefore obfuscation for classical circuits suffices.

\paragraph{Security.} In the following we show that the scheme satisfies selective security. One can lift the analysis to the more realistic adaptive settings by standard complexity leveraging.
\begin{theorem}[Security]\label{thm:prfsec}
Let $(\PRF.\Gen,\allowbreak \PRF.\Puncture, \PRF.\Eval)$ be a puncturable PRF, let $\obf$ be an indistinguishability obfuscator for classical circuits, and let $(\ABE.\Gen,\allowbreak\ABE.\Enc,\allowbreak \ABE.\KeyGen,\allowbreak \ABE.\Dec)$ be an ABE for BQP. Then the scheme in Figure~\ref{fig:qprf} is secure.
\end{theorem}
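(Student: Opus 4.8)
The plan is to establish selective security through a symmetric sequence of hybrid experiments interpolating between the game with challenge bit $b=0$, where $y^\ast = \PRF.\Eval(k,x^\ast)$, and the game with $b=1$, where $y^\ast$ is uniform. The key observation shaping the structure is that in \emph{both} games the public parameters $\params = \widetilde{\textbf{P}}$ are generated honestly, so the argument must first modify $\widetilde{\textbf{P}}$ in order to decouple $y^\ast$ from everything else, then randomize $y^\ast$, and finally undo all modifications to $\widetilde{\textbf{P}}$.

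First I would puncture the two PRF keys at the challenge point. Concretely, invoking the security of $\obf$ together with correctness of the puncturable PRF, I would replace $\textbf{P}$ by a functionally equivalent circuit that, on input $x \neq x^\ast$, uses punctured keys $k_{x^\ast} \gets \PRF.\Puncture(k,x^\ast)$ and $\tilde{k}_{x^\ast} \gets \PRF.\Puncture(\tilde{k},x^\ast)$, and on input $x = x^\ast$ outputs a hardwired ciphertext. Then, using pseudorandomness of the puncturable PRF (given $\tilde{k}_{x^\ast}$) and a further appeal to $\obf$, I would make that hardwired ciphertext equal to $c^\ast = \ABE.\Enc(\mpk,x^\ast,\PRF.\Eval(k,x^\ast);u)$ for a genuinely uniform $u$. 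After these steps the value $\PRF.\Eval(k,x^\ast)$ occurs in exactly two places: inside $c^\ast$ and as $y^\ast$.

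The heart of the proof is to then switch $c^\ast$ to an encryption of $0^\lambda$ using the security of the key-policy ABE. The reduction samples $k$ and $\tilde{k}$ itself, obtains $\mpk$ and the challenge ciphertext from the ABE challenger for the selective attribute $x^\ast$, answers each constrain query $Q$ by forwarding it to the ABE key-generation oracle, and embeds the challenge ciphertext as $c^\ast$ inside the (padded) obfuscated program; it can compute $y^\ast = \PRF.\Eval(k,x^\ast)$ directly since it knows $k$. Crucially, admissibility of the constrained-PRF adversary --- every query $Q$ satisfies $\Pr[Q(x^\ast)=1] \le \nu(\secp)$ --- is precisely the admissibility condition the ABE adversary must respect, so the reduction is legitimate. (As the ABE of the paper encrypts a single bit, this step is formally a sub-hybrid over the $\lambda$ bits of $\PRF.\Eval(k,x^\ast)$; alternatively one first upgrades to a multi-bit ABE generically.) Once $c^\ast$ encrypts $0^\lambda$, the value $\PRF.\Eval(k,x^\ast)$ appears only as $y^\ast$, so pseudorandomness of the puncturable PRF at $x^\ast$ (given $k_{x^\ast}$) lets me replace $y^\ast$ by a uniform string. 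Finally I would reverse the earlier modifications in order --- restore $c^\ast$ to an encryption of $\PRF.\Eval(k,x^\ast)$ via ABE security, un-hardwire it, reinstate the PRF-derived randomness, and un-puncture both keys --- landing on the honestly generated $\widetilde{\textbf{P}}$ together with a uniform $y^\ast$, which is exactly the $b=1$ game.

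I expect the main obstacle to be the bookkeeping surrounding the ABE step: one must be sure that, before invoking ABE security, the challenge-relevant value $\PRF.\Eval(k,x^\ast)$ has been fully isolated into the single ciphertext $c^\ast$ (the purpose of the preliminary puncturing and hardwiring), that the encryption randomness of $c^\ast$ is truly fresh rather than PRF-derived (so that $c^\ast$ can be simulated as a bona fide ABE challenge ciphertext), and that the multi-bit message is handled cleanly. A secondary but necessary point is padding: $\textbf{P}$ must be padded to the size of the largest intermediate circuit $\textbf{P}^\ast$ so that every obfuscation-indistinguishability step goes through, which is what the annotation in Figure~\ref{fig:qprf} anticipates.
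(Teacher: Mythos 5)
Your proposal is correct and follows essentially the same hybrid sequence as the paper's proof: puncture both PRF keys and hardwire the challenge ciphertext (the paper does this in two separate iO steps, one per key), randomize the encryption coins via puncturable-PRF pseudorandomness of $\tilde{k}$, switch the encrypted payload to zero via ABE security, randomize $y^\ast$ via puncturable-PRF pseudorandomness of $k$, then reverse. Your additional remark about treating the $\lambda$-bit payload as a sub-hybrid over a single-bit ABE is a legitimate refinement that the paper leaves implicit.
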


\begin{proof}
The proof proceeds by defining the following series of hybrid and arguing about the indistinguishability of neighbouring experiments.
\begin{itemize}
    \item Hybrid $\mathcal{H}_{0}$: This is the experiment with the challenge bit fixed to $0$, i.e.\ the challenge string is computed as $\cPRF.\Eval(K, x^\ast)$.
    \item Hybrid $\mathcal{H}_{1}$: In this hybrid we replace the key $\tilde{k}$ with $\tilde{k}_{x^\ast} \sample \PRF.\Puncture(\tilde{k}, x^\ast)$ and we define the circuit $\textbf{P}_1$ to return
    $$
    \ABE.\Enc(\mpk, x, \PRF.\Eval(k, x); \PRF.\Eval(\tilde{k}_{x^\ast}, x))
    $$
    if $x \neq x^\ast$, and $\ABE.\Enc(\mpk, x^\ast, \PRF.\Eval(k, x^\ast); \PRF.\Eval(\tilde{k}, x^\ast))$ otherwise.
    
    Note that $\textbf{P}$ and $\textbf{P}_1$ are functionally equivalent and therefore this modification is computationally indistinguishable by the security of the obfuscator $\obf$.
    
    \item Hybrid $\mathcal{H}_{2}$: In this hybrid we replace the key ${k}$ with ${k}$ with ${k}_{x^\ast} \sample \PRF.\Puncture({k}, x^\ast)$ and we define the circuit $\textbf{P}_2$ to return
    $$
    \ABE.\Enc(\mpk, x, \PRF.\Eval(k_{x^\ast}, x); \PRF.\Eval(\tilde{k}_{x^\ast}, x))
    $$
    if $x \neq x^\ast$, and $\ABE.\Enc(\mpk, x^\ast, \PRF.\Eval(k, x^\ast); \PRF.\Eval(\tilde{k}, x^\ast))$ otherwise.
    
    Note that $\textbf{P}_1$ and $\textbf{P}_2$ are functionally equivalent and therefore this modification is computationally indistinguishable by the security of the obfuscator $\obf$.
    
    \item Hybrid $\mathcal{H}_{3}$: In this hybrid we define the circuit $\textbf{P}_3$ to be identical to $\textbf{P}_2$ except that on input $x^\ast$ it returns  
    $$
    \ABE.\Enc(\mpk, x^\ast, \PRF.\Eval(k, x^\ast); \tilde{r})
    $$
    where $\tilde{r} \sample \{0,1\}^\lambda$.
    
    By the security of the puncturable PRF, this modification is computationally indistinguishable.
    
    \item Hybrid $\mathcal{H}_{4}$: In this hybrid we define the circuit $\textbf{P}^\ast$ to be identical to $\textbf{P}_3$ except that on input $x^\ast$ it returns
    $$
    \ABE.\Enc(\mpk, x^\ast, 0; \tilde{r}).
    $$
    
    This change is computationally indistinguishable by the (selective) security of the ABE scheme.
    
    \item Hybrid $\mathcal{H}_{5}$: In this hybrid we sample challenge string uniformly from $\{0,1\}^\lambda$.
    
    This modification is computationally indistinguishable by the security of the puncturable PRF.
    
    \item Hybrid $\mathcal{H}_{6}$: We revert the change done in $\mathcal{H}_4$.
    \item Hybrid $\mathcal{H}_{7}$: We revert the change done in $\mathcal{H}_3$.
    \item Hybrid $\mathcal{H}_{8}$: We revert the change done in $\mathcal{H}_2$.
    \item Hybrid $\mathcal{H}_{9}$: We revert the change done in $\mathcal{H}_1$.
\end{itemize}
The proof is concluded by observing that $\mathcal{H}_9$ is identical to the experiment where the challenge bit is fixed to $1$.
\end{proof}

\section{Secret Sharing for Monotone QMA}

In the following we show how a witness encryption scheme for QMA directly implies a secret sharing scheme for monotone QMA ($\mQMA$). A language $\lang = ({\lang}_{\text{yes}}, {\lang}_{\text{no}})$ is in monotone QMA if $\lang \in \QMA$ and i) for all statements $x \in {\lang}_{\text{yes}}$ and $y$ such that $x \subseteq y$ it holds that $y \in {\lang}_{\text{yes}}$, and ii) for all statements $x \in \lang_\no$ and $y$ such that $y \subseteq x$ it holds that $y \in \lang_\no$. Here, by $x \subseteq y$ for binary strings $x,y$, we mean that for each index $i$, if $x_i = 1$ then $y_i=1$.

\subsection{Definition}

We begin by defining the notion of secret sharing for languages monotone QMA. For convenience, we only define the scheme for binary secrets, although it is easy to extend it to arbitrary length strings.

\begin{definition}[Secret Sharing]
A secret sharing scheme $(\share, \reconstruct)$ for $N$ parties and a language $\lang\in\mQMA$ consists of the following efficient algorithms.
\begin{itemize}
    \item $\share(1^\lambda, s)$: On input the security parameter $1^\lambda$ and a secret $s\in\{0,1\}$, the sharing algorithm returns a set of $N$ shares $(p_1, \dots, p_N)$.
    \item $\reconstruct(p_1, \dots, p_{|I|}, \ket{\psi}^{\otimes k(\lambda)})$: On input a set $I$ of shares $(p_1, \dots, p_{|I|})$ and a $k(\lambda)$ copies of a quantum state $\ket{\psi}$, the reconstruction algorithm returns a secret $s$.
\end{itemize}
\end{definition}
Next we define the notion of correctness. We say that a set of parties $I$ is qualified if its binary representation defines a statement $x\in\lang_\yes$.
\begin{definition}[Correctness]
A secret sharing scheme $(\share, \reconstruct)$ is correct if there exists a negligible function $\nu$ such that for all $\lambda\in\mathbb{N}$, all secrets $s\in\{0,1\}$, all qualified sets of parties $I$ that define a statement $x \in \lang_\yes$, and all $\ket{\psi}\in\relation(x)$, it holds that
\[
\Pr\left[s= \reconstruct(p_1, \dots, p_{|I|}, \ket{\psi}^{\otimes k(\lambda)})\right] = 1 -\nu(\lambda)
\]
where $(p_1, \dots, p_N) \sample \share(1^\lambda, s)$.
\end{definition}
Finally we define the notion of (non-uniform) computational security of a secret sharing scheme. In the following we say that a set $S$ is unauthorized if it defines a statement $x\in\lang_\no$.
\begin{definition}[Security]
A secret sharing scheme $(\share, \reconstruct)$ is secure if for all $\lambda\in\mathbb{N}$ and all unauthorized sets $S \subseteq \{1, \dots, N\}$,  it holds that
\[
\{p_{i,0}\}_{i\in S} \approx_c \{p_{i,1}\}_{i\in S}
\]
where $(p_{1,0}, \dots, p_{N,0}) \sample \share(1^\lambda, 0)$ and $(p_{1,1}, \dots, p_{N,1}) \sample \share(1^\lambda, 1)$.
\end{definition}

\subsection{Perfectly Binding Commitments}

A (non-interactive) commitment scheme $\com$ is a PPT algorithm that takes as input a message $m$ and returns a commitment $c$. We require that it satisfies the notions of perfect binding and computational hiding, which we define below.

\begin{definition}[Perfect Binding]
A commitment scheme $\com$ is perfectly binding if for all $(m_0, m_1)$ such that $m_0 \neq m_1$ and all $(r_0, r_1)\in \{0,1\}^{2\lambda}$ it holds that $\com(m_0; r_0) \neq \com(m_1; r_1)$.
\end{definition}

\begin{definition}[Computational Hiding]
A commitment scheme $\com$ is computationally hiding if for all $\lambda \in \mathbb{N}$ and all $(m_0, m_1)$ it holds that
\[
\com(m_0; r_0) \approx_c \com(m_1; r_1)
\]
where $(r_0, r_1) \sample \{0,1\}^{2\lambda}$.
\end{definition}

\subsection{Construction}

We describe our secret sharing scheme $(\share, \reconstruct)$ for any language $\lang = ({\lang}_{\text{yes}}, {\lang}_{\text{no}})\in\mQMA$ in the following. Let $\com$ be a perfectly binding commitment scheme and let $(\WE.\Enc, \WE.\Dec)$ be a witness encryption for the language 
$\tilde{\lang} = (\tilde{\lang}_{\text{yes}}, \tilde{\lang}_{\text{no}})$ defined as
\begin{align*}
&\tilde{\lang}_{\text{yes}} = \left\{ (c_1, \dots, c_N) : \begin{array}{c} \exists \text{ a vector }(r_1, \dots, r_N) \in \{0,1\}^{N\lambda} \text{ such that } x \in {\lang}_{\text{yes}}\\ \text{where } x_i = 1 \text{ if } c_i = \com(i; r_i) \text{ and } x_i = 0  \text{ otherwise}.\end{array}\right\} \\
&\tilde{\lang}_{\text{no}} = \left\{ (c_1, \dots, c_N) : \begin{array}{c} \forall \text{ vectors }(r_1, \dots, r_N) \in \{0,1\}^{N\lambda} \text{ it holds that } x \in {\lang}_{\text{no}}\\ \text{where } x_i = 1 \text{ if } c_i = \com(i; r_i) \text{ and } x_i = 0  \text{ otherwise}.\end{array}\right\}
\end{align*}
Our scheme $(\share, \reconstruct)$ is shown in Figure~\ref{fig:secshare}.

\protocol{Secret Sharing for mQMA}{A secret sharing scheme for (monotone) QMA}{fig:secshare}
{
\begin{itemize}
    \item $\share(1^\lambda, s)$:
    \begin{itemize}
        \item For all $i \in [1, \dots, N]$ sample $r_i \sample \{0,1\}^\lambda$ and compute $c_i = \com(i; r_i)$.
        \item Compute $c \sample \WE.\Enc(1^\lambda, (c_1, \dots, c_N), s)$.
        \item Set the share of the $i$-th party to $p_i = (r_i, c)$.
    \end{itemize}
    \item $\reconstruct(p_1, \dots, p_{|I|}, \ket{\psi}^{\otimes k(\lambda)})$: 
    \begin{itemize}
    \item A subset $I \subseteq \{1, \dots, N\}$ of parties parses their shares as $\{p_i = (r_i, c)\}_{i\in I}$.
    \item For all $i \notin I$ set $r_i = \bot$.
    \item Return $\WE.\Dec(c, (r_1, \dots, r_n, \ket{\psi}^{\otimes k(\lambda)}), (c_1, \dots, c_N))$.
    \end{itemize}
\end{itemize}
}

\paragraph{Correctness.} For the correctness of the scheme, observe that any authorized set of users always has a valid witness for the statement $(c_1, \dots, c_N)$ and therefore, by the correctness of the witness encryption scheme, the reconstruction procedure returns the secret $s$, except with negligible probability. 

\paragraph{Security.} In the following we argue about the (non-uniform) security of the scheme.

\begin{theorem}[Security]
Let $\com$ be a computationally hiding commitment scheme and let $(\WE.\Enc, \WE.\Dec)$ be a witness encryption scheme for $\tilde{\lang}$. Then the scheme in Figure~\ref{fig:secshare} is secure.
\end{theorem}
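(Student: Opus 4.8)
The plan is to fix an arbitrary unauthorized set $S \subseteq \{1,\dots,N\}$ and interpolate between $\{p_{i,0}\}_{i\in S}$ and $\{p_{i,1}\}_{i\in S}$ via a short sequence of hybrids. The only information the adversary receives is the collection of openings $\{r_i\}_{i\in S}$, the list of commitments $(c_1,\dots,c_N)$, and the single witness-encryption ciphertext $c = \WE.\Enc(1^\secp,(c_1,\dots,c_N),b)$; the point to exploit is that the randomness $r_i$ for $i\notin S$ never appears in the view. The obstacle is that the honestly generated statement $(c_1,\dots,c_N)$ is a \emph{true} instance of $\tilde{\lang}$, since the honest randomness $(r_1,\dots,r_N)$ is a witness inducing $x = 1^N$; hence witness-encryption security cannot be invoked directly. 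The fix is to first rewrite the commitments indexed outside $S$ so that the statement becomes a no-instance, then flip the encrypted bit, and then undo the rewriting.

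Concretely, I would work (without loss of generality) with a commitment scheme whose message space contains $\{1,\dots,N\}$ together with a distinguished symbol $\bot \notin \{1,\dots,N\}$, and pass through four hybrids. Hybrid $\Hyb_0$ is the honest distribution $\{p_{i,0}\}_{i\in S}$. Hybrid $\Hyb_1$ replaces, for every $i \notin S$, the commitment $c_i = \com(i;r_i)$ by a fresh $\com(\bot;\cdot)$, recomputing $c$ (still an encryption of $0$ under the rewritten statement); indistinguishability from $\Hyb_0$ follows from computational hiding of $\com$, via an inner hybrid over the at most $N$ indices outside $S$, where the reduction embeds the challenge commitment into one coordinate, samples everything else itself, and in particular never needs the randomness behind the challenge (consistent with the fact that $r_i$ for $i\notin S$ is not in the view). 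Hybrid $\Hyb_2$ switches $c$ to an encryption of $1$ under the same rewritten statement. Hybrid $\Hyb_3$ reverts every $c_i$ with $i\notin S$ back to $\com(i;r_i)$, recomputing $c$; this is again indistinguishable by computational hiding of $\com$, exactly as in $\Hyb_0 \approx_c \Hyb_1$, and $\Hyb_3$ is precisely $\{p_{i,1}\}_{i\in S}$, which closes the chain.

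The only step requiring real argument is $\Hyb_1 \approx_c \Hyb_2$, a reduction to witness-encryption security that hinges on the rewritten statement $(c_1,\dots,c_N)$ lying in $\tilde{\lang}_\no$. This is where perfect binding of $\com$ and monotonicity of $\lang$ combine: for $i\notin S$ the commitment $\com(\bot;\cdot)$ cannot be opened to $i$ (as $\bot\neq i$ and binding is perfect), so \emph{any} candidate witness $(\rho_1,\dots,\rho_N)$ forces $x_i=0$ for all $i\notin S$; hence every achievable $x$ has support contained in $S$, i.e.\ $x\subseteq \mathbf{1}_S$, and since $S$ is unauthorized we have $\mathbf{1}_S\in\lang_\no$, so the downward-closure property (ii) of monotone QMA yields $x\in\lang_\no$. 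Thus no witness exists and the statement is a no-instance, for every choice of the sampled randomness, so the bit flip is hidden. I expect no complications from the quantum nature of the QMA witness: the security experiment hands the adversary only classical shares, every hybrid edit touches only classical data, and the quantum witness enters only $\reconstruct$, which plays no role here.
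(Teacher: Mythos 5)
Your proposal is correct and follows essentially the same route as the paper's proof: both switch the commitments at indices outside $S$ to commitments of $\bot$ using computational hiding, observe that perfect binding of $\com$ together with downward closure of $\lang \in \mQMA$ then forces the rewritten statement into $\tilde{\lang}_\no$, invoke witness-encryption security to flip the encrypted bit, and undo the commitment switch. The paper packages this as a single reduction to the hiding of $\com$ with WE security invoked inside the analysis, whereas you lay it out as a three-step hybrid chain, but the underlying argument is identical.
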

\begin{proof}
To prove this claim, we assume towards contradiction that there exists an efficient quantum algorithm that is able to distinguish between the shares of $0$ and the shares of $1$ given the shares of some non-authorized set of users $S$ with probability $1/2 + \delta(\lambda)$, for some non-negligible function $\delta$. We derive a reduction against the computational hiding property of the commitment $\com$. Specifically, we show that such an algorithm implies a distinguisher between the following distributions
\[
(\com(1), \dots, \com(N)) \text{ and } (\com(\bot), \dots, \com(\bot))
\]
where $\bot$ is some distinguished string that does not correspond to any index. This implies a contradiction to the hiding property of the commitment scheme, by a standard hybrid argument.

On input the challenge set of commitments $(c_1, \dots, c_N)$, the reduction defines a new set of commitments $(c_1', \dots,  c_N')$ by setting $c_i' = c_i$ if $i \notin S$ and $c_i = \com(i;r_i)$ if $i \in S$, where $r_i \sample \{0,1\}^\lambda$. Finally, it samples a bit $b \sample \{0,1\}$ and computes $c \sample \WE.\Enc(1^\lambda, (c_1', \dots,  c_N'), b)$. The distinguisher is given the shares $\{r_i, c\}_{i\in S}$ and returns a bit $b'$. The reduction returns $1$ if $b = b'$ and $0$ otherwise.

In the first case, i.e.\ $(c_1, \dots, c_N) = (\com(1), \dots, \com(N))$ then the distribution of the shares given to the ditinguisher is identical to the one output by the algorithm $\share$ and therefore the probability that the reduction outputs $1$ is identical to $1/2  + \delta(\lambda)$. In the second case, where $(c_1, \dots, c_N) = (\com(\bot), \dots, \com(\bot))$, observe that the resulting instance $(c_1', \dots, c_N') \in \tilde{\lang}_{\text{no}}$ since the commitment scheme is perfectly binding and the set $S$ is non-authorized, i.e.\ it holds that $x\in\lang_\no$ (and consequently that $z\in\lang_\no$, for all $z \subseteq x$), where $x$ is the statement defined by $S$. By the semantic security of the witness encryption scheme it holds that the probability the distinguisher correctly guesses the bit $b$ (and consequently the reduction outputs $1$) is negligibly close to $1/2$. It follows that the reduction successfully distinguishes between the two distributions of commitments with non-negligible probability, which is a contradiction to the hiding property of the commitment scheme.
\end{proof}

\fi

\section{Cryptanalysis of Quantum Obfuscation Candidates}\label{sec:cryptanalysis}

In the following we argue that a natural extension of our approach, where the obfuscated CVQC verification circuit can be queried on accepting instances, leads to insecure schemes, given current constructions of CVQC.
Consider the two-message CVQC of  \cite{FOCS:Mahadev18a,TCC:ChiChuYam20,TCC:ACGH20}. The first message $\pk_1,\dots,\pk_k$ essentially commits to a string $x \in \{0,1\}^k$ where each bit $x_i$ determines whether the $i$'th qubit of the prover's committed state will be measured in the computational basis or Hadamard basis. It is crucial that $x$ be hidden from the prover, since a prover that can predict how its state will be measured can easily break soundness of the underlying information-theoretic protocol of \cite{Fitzsimons_2018}. Now, the prover's proof includes a series of values $(b_1,d_1),\dots,(b_k,d_k)$, where each $b_i$ is a bit and each $d_i$ is a string.\footnote{Actually some positions will be designated for a test round, in which case the proof at those positions will have a different structure, but this can be ignored for the purpose of this discussion.} For each position where $x_i = 0$ (indicating a computational basis measurement), the verifier will completely ignore $(b_i,d_i)$, and for each position where $x_i = 1$ (indicating a Hadamard basis measurement), the verifier will compute some bit $e_i \coloneqq b_i \oplus (d_i \cdot s_i)$, where $s_i$ is some secret known to the verifier. The $e_i$'s are then computed on as part of the verifier's verdict function. 

Consider an adversary, with access to an obfuscation of the verifier, that first honestly computes an accepting proof that includes values $(b_1,d_1),\dots,(b_k,d_k)$. Then, it generates random $b'_1,\dots,b'_k$ until the same proof except with $(b'_1,d_1),\dots,\allowbreak (b'_k,d_k)$ rejects. Now the adversary can flip $b'_i$ to $b_i$ one at a time until the proof accepts again. If the proof flipped back to accepting at index $i$, then it must be the case that $x_i = 1$, since $(b_i,d_i)$ was not ignored by the verifier. An adversary can repeat this process until it learns enough of $x$ to break soundness. Once the adversary can make the verifier accept even on rejecting instances, it is able to learn any other secrets hidden in the obfuscation of the verifier (concretely, information about the obfuscated quantum circuit).

The above describes an attack against the basic scheme, but there may be ways of altering the protocol to avoid such an attack. Below, we argue that three natural variants of the protocol will still be susceptible to attacks. 

\paragraph{Sampling Fresh Hamiltonian Terms.} The above attack does assume that the same verification function is applied to the $e_i$'s across all of the different proofs that the prover submits. This will be the case if the verifier has sampled and fixed the same set of Hamiltonian terms to measure. However, one could consider re-sampling the Hamiltonian terms each time, by using randomness derived from applying a PRF to the proof. Then, when the adversary observes that flipping $b'_i$ to $b_i$ results in the verification flipping from rejecting to accepting, it may not be clear whether this flip occurred as a result of changing $e_i$, or as a result of sampling a new subset of indices to compute on (corresponding to a different set of Hamiltonian terms). Unfortunately, this does not solve the issue that the keys $\pk_1,\dots,\pk_k$ commit to a fixed string $x$ of measurement bases. Indeed, an adversary could still compute accepting/rejecting statistics for $b_i$ vs. $b_i'$, by altering other parts of the proof. If the acceptance probability remains roughly the same, it is likely that $x_i = 0$, and otherwise it is likely that $x_i = 1$.

\paragraph{Removing the Commitment.} One could try to design a CVQC protocol where $\pk_1,\dots,\pk_k$ does not commit to $x$, i.e.\ each $\pk_i$ is the public key for a trapdoor claw-free function family (as opposed to Mahadev's protocol~\cite{FOCS:Mahadev18a}, where some of the public keys are for injective functions). However, the ability to flip the verifier's verdict based on just flipping $(b_i,d_i)$ to $(b_i',d_i)$ will still lead to an attack. Indeed, the adversary can now alter the $d_i$ in order to learn enough linear equations of the verifier's secret $s_i$ to eventually recover $s_i$. The value $s_i$ is actually related to the LWE secret underlying the $\pk_i$, allowing the adversary to break the claw-freeness of the function, and thus soundness of the protocol.

\paragraph{Quantum CVQC.} Recently, \cite{morimae2021classically} showed how to construct an information-theoretically secure two-message verification protocol where the first message from verifier to prover is quantum, but the proof and subsequent verification function are entirely classical. Thus, one could still use classical obfuscation to obfuscate the verification circuit (though now the obfuscation of the quantum circuit would be a quantum state). However, this scheme would succumb to the same class of attacks outlined above. Indeed, the protocol of \cite{morimae2021classically} roughly works by having the verifier prepare several EPR pairs and release half of each to the prover. The prover then prepares a (quantum) proof for \cite{Fitzsimons_2018}'s protocol and \emph{teleports} this proof into the verifier's halves of the EPR pairs. That is, the prover performs Bell measurements between its proof and the quantum state sent by the verifier, and the final proof just consists of the resulting teleportation errors. Now, the verification can be made completely classical by \emph{pre-measuring} (in either the computational or Hadamard basis) each of the verifier halves of the EPR pairs and hard-coding these measurement results into the verification circuit. Unfortunately, this also implies that qubit $i$ sent by the verifier is again essentially a commitment to measuring qubit $i$ of the prover's proof in either the computational or Hadamard basis, and the attack sketched above will still apply.

\subsection*{Acknowledgements}
The authors wish to thank Dakshita Khurana for many insightful discussions.

\bibliographystyle{alpha}
\bibliography{abbrev3,crypto,extra}

\appendix

\ifsubmission
\pagebreak
\begin{center}
    {\huge\bfseries SUPPLEMENTARY MATERIAL}
\end{center}

\section{An Alternative Obfuscation of Null Quantum Circuits}\label{sec:vbb}
In the following we present an alternative construction of null-iO for quantum circuits that uses classical VBB obfuscation. 

\paragraph{Classical VBB Obfuscation.}
First, we recall the definition of \emph{virtual black-box obfuscation}.
\begin{definition}[Virtual Black-Box Obfuscation]\label{def:VBB}
For every PPT adversary $\cA$, there exists a PPT simulator $\Sim$ such that for all circuits $C$, and all polynomial-size auxiliary input $z$, it holds that $$\bigg|\Pr\left[\cA\left(\obf(1^\secp,C),z\right) = 1\right] = \Pr\left[\Sim^C\left(1^\secp,1^{|C|},z\right) = 1\right]\bigg| \leq \negl.$$
\end{definition}
This definition extends to the post-quantum setting by allowing the adversary and simulator to be QPT, and giving the simulator superposition query access to $C$.

\paragraph{Blind CVQC.} Next, we define and construct \emph{blind} CVQC.

\begin{definition}[Blindness]
    A CVQC protocol $(\KeyGen, \Prove, \Verify)$ is blind if for any QPT adversary $\cA$ there exists a negligible function $\nu(\secp)$  such that for any polynomial-size sequences of circuits $\{Q_{0,\secp}\}_{\secp \in \bbN}, \{Q_{1,\secp}\}_{\secp \in \bbN}$, it holds that
    $$\begin{array}{r}\left|\Pr\left[\cA^{\ket{\cH}}(\params) = 1 : (\params,r) \sample \KeyGen(1^\secp,Q_{0,\secp})\right] - \Pr\left[ \cA^{\ket{\cH}}(\params) = 1: (\params,r) \sample \KeyGen(1^\secp,Q_{1,\secp})\right]\right| = \nu(\secp).\end{array}$$
\end{definition}

We observe that blind CVQC exists in the QROM from LWE. While the works of~\cite{TCC:ChiChuYam20,TCC:ACGH20} give CVQC protocols that satisfy correctness and soundness, their protocols are not blind. However, using quantum fully-homomorphic encryption (Definition~\ref{def:qfhe}), it is straightforward to obtain a two-message blind protocol in the QROM, by applying Fiat-Shamir to a blind variant of the four-message CVQC protocol from~\cite{TCC:ChiChuYam20,TCC:ACGH20}. 

\begin{lemma}
Assuming the quantum hardness of learning with errors problem, there exists a two-message blind classical verification of quantum computation protocol in the QROM.
\end{lemma}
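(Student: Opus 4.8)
The plan is threefold: (i) recall that the two-message CVQC schemes of~\cite{TCC:ChiChuYam20,TCC:ACGH20} are obtained by applying the Fiat--Shamir transform in the QROM to an underlying \emph{public-coin} four-message protocol in the plain model (a parallel-repeated, public-coin variant of Mahadev's protocol~\cite{FOCS:Mahadev18a}); (ii) wrap that four-message protocol with quantum fully-homomorphic encryption so that it becomes blind while remaining public-coin and negligibly sound; and (iii) apply the same QROM Fiat--Shamir transform to the blind four-message protocol.

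For step (ii), the base four-message protocol has the shape $V \to P$ (function keys $\pk_1,\dots,\pk_k$ derived from $Q$ together with a secret Hamiltonian term), $P \to V$ (a classical commitment $y$ to the history state of $Q$), $V \to P$ (a public-coin challenge $c$), $P \to V$ (a classical answer $a$), with the verifier deciding from its trapdoors and $Q$; the only $Q$-dependent work of the prover is preparing and committing to the history state. I would therefore define a blind version as follows. The new $\KeyGen(1^\secp, Q)$ samples a leveled $\QFHE$ key pair $(\pk,\sk)$ with classical keys (which exists under LWE~\cite{FOCS:Mahadev18b,C:Brakerski18}, with the depth parameter set from the size of $Q$ and of the prover's circuit), runs the base $\KeyGen$ to obtain the function keys and the trapdoors/secret $r_Q$, and outputs $\params = (\pk, \QFHE.\Enc(\pk, z))$, where $z$ is a fixed-length encoding of $(Q,\text{function keys})$, together with verification key $r = (\sk, r_Q)$. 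The blind prover runs the base prover \emph{homomorphically} under $\QFHE$: it decrypts $z$ inside the encryption, prepares the history state, and commits, producing a classical encrypted commitment $\widehat y = \QFHE.\Enc(\pk, y)$ while keeping the residual (quantum) evaluated ciphertext; upon receiving the public-coin $c$ it homomorphically computes the answer, yielding a classical encrypted answer $\widehat a = \QFHE.\Enc(\pk, a)$. The blind verifier decrypts $(\widehat y, \widehat a)$ with $\sk$ and runs the base verifier on $(Q, y, c, a, r_Q)$. Both messages remain classical and the protocol is still public-coin.

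Correctness follows from correctness of the base protocol and of $\QFHE$. Blindness follows immediately from semantic security of $\QFHE$: $\params$ is a fresh public key plus an encryption of a fixed-length plaintext, hence computationally indistinguishable for any two circuit families, and the random oracle (used only inside $\Prove$ and $\Verify$ after step (iii)) is independent of $\params$. For soundness, I would first argue that the blind four-message protocol is sound against quantum provers: given a blind prover, a reduction relays the base protocol's keys (as the encryption $\params$, which it can compute since it generated $(\pk,\sk)$), relays the prover's commitment after decrypting it, relays the challenger's public-coin $c$, and relays the decrypted answer; since $\QFHE.\Dec$ is a fixed deterministic function and the blind verifier's decision on $(\widehat y, c, \widehat a)$ equals the base verifier's decision on $(\QFHE.\Dec(\widehat y), c, \QFHE.\Dec(\widehat a))$, a winning blind prover yields a winning base prover, contradicting soundness of the base protocol~\cite{FOCS:Mahadev18a}. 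Step (iii) then invokes the QROM Fiat--Shamir soundness analysis of~\cite{TCC:ChiChuYam20,TCC:ACGH20}, whose hypotheses (public-coin, negligible soundness against quantum provers) are satisfied by the blind four-message protocol, giving the desired two-message blind CVQC in the QROM. Collecting the assumptions, everything reduces to the quantum hardness of LWE.

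The main point requiring care is that the $\QFHE$-wrapping must preserve exactly the structural features on which the existing QROM Fiat--Shamir analysis of~\cite{TCC:ChiChuYam20,TCC:ACGH20} rests --- namely that the protocol stays public-coin and negligibly sound against quantum provers --- so that that analysis can be invoked essentially verbatim rather than re-derived; the decryption-relay reduction above and the observation that the challenge is still a uniform public string handle this. One should also check the routine point that the prover's homomorphic computation spans two $\QFHE.\Eval$ calls with a quantum evaluated ciphertext carried between them, so a leveled scheme whose depth bound covers the whole prover circuit (known at $\KeyGen$ time) is what is needed.
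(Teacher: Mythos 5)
Your proof sketch is correct and follows essentially the same approach as the paper: wrap the four-message public-coin CVQC protocol of~\cite{TCC:ChiChuYam20,TCC:ACGH20} with $\QFHE$ to get blindness, argue soundness of the blind protocol by a decrypt-and-relay reduction to the base protocol, and then collapse to two messages via the QROM Fiat--Shamir lemma for generalized $\Sigma$-protocols. The only differences are cosmetic (e.g.\ your explicit discussion of the two-stage homomorphic evaluation and the leveled-$\QFHE$ depth bound, which the paper leaves implicit).
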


\begin{proof}
Consider the four-message CVQC protocol from~\cite{TCC:ChiChuYam20,TCC:ACGH20}, which is correct and sound assuming the quantum hardness of learning with errors. The protocol includes two prover algorithms $\Prove_1$ and $\Prove_2$, and operates as follows. First, $\KeyGen(1^\secp,Q)$ is run to produce parameters $\params$ and a verification key $r$. Next, $\Prove_1(\params,\ket{\psi})$ outputs a classical string $y$ and quantum state $\ket{\psi_1}$. Next, a uniformly random classical challenge $c$ is sampled, and $\Prove_2(\params,y,c,\ket{\psi_1})$ is run to produce a classical proof $\pi$. Finally, $\Verify(Q,y,c,\pi,r)$ outputs a bit indicating acceptance of rejection.

This protocol consisting of algorithms $(\KeyGen,\Prove_1,\Prove_2,\Verify)$ can be turned into a blind protocol $(\KeyGen',\Prove'_1,\Prove'_2,\Verify')$ as follows. $\KeyGen'$ will run $\KeyGen$, sample a random QFHE key pair $(\pk,\sk)$, and then output $\Enc(\pk,\params)$. $\Prove'_1$ will now run $\Prove_1$ under the QFHE to produce $(\pk,\Enc(\pk,y))$. The challenge $c$ will still be sampled and given to the prover in the clear. $\Prove'_2$ will now run $\Prove_2$ under the QFHE to produce $\Enc(\pk,\pi)$. Finally, the verification procedure $\Verify'(Q,\Enc(\pk,y),c,\Enc(\pk,\pi),r)$ will compute $\sk$ from $r$, decrypt the ciphertexts to obtain $y$ and $\pi$, and then run $\Verify(Q,y,c,\pi,r)$.

Correctness of $(\KeyGen',\Prove'_1,\Prove'_2,\Verify')$ follows immediately from correctness of QFHE. Soundness follows by a reduction to the soundness of $(\KeyGen,\allowbreak \Prove_1,\Prove_2,\Verify)$, in which the reduction samples the $(\pk,\sk)$ key pair, encrypts $\params$ received from its challenger, and decrypts each of $\Enc(\pk,y)$ and $\Enc(\pk,\pi)$ before forwarding them to its challenger. Blindness follows immediately from the semantic security of QFHE.

Finally, we can compile $(\KeyGen',\Prove'_1,\Prove'_2,\Verify')$ into a two-message protocol in the QROM, by appealing to the ``Fiat-Shamir for generalized $\Sigma$-protocols'' Lemma~\cite[Lemma 6.2]{TCC:ACGH20}.
\end{proof}

\paragraph{Construction.}
Finally, we present the construction. Let $(\KeyGen,\Prove,\Verify)$ be a two-message blind CVQC scheme in the QROM (Definition~\ref{def: classical argument}). Let $\obf$ be a post-quantum virtual black-box obfuscator for classical circuits (Definition~\ref{def:VBB}). Let $F : \{0,1\}^\secp \times \{0,1\}^* \to \{0,1\}^{m(\secp)}$ be a quantum-secure PRF.

\protocol{Alternative Null-iO for Quantum Circuits}{A null obfuscation scheme for quantum circuits from classical VBB obfuscation.}{fig:null-iO}
{
\begin{itemize}
    \item $\obf(1^\secp,Q)$: 
    \begin{itemize}
        \item Sample $(\params, r) \sample \KeyGen(1^\secp,Q)$.
        \item Sample a PRF key $k \sample \{0,1\}^\secp$ and define $C_0(\cdot) = F(k,\cdot)$.
        \item Let $V[Q,k,r](\cdot)$ be the circuit that takes as input $\pi$ and computes and outputs $b = \Verify^{F(k,\cdot)}(Q,\pi,f)$. Define $C_1(\cdot) = V[Q,k,r](\cdot)$.
        \item Define $C(b,x) = C_b(x)$, and compute $\widetilde{C} \sample \obf(1^\secp,C)$.
        \item Output $\widetilde{Q} = (\params,\widetilde{C})$.
    \end{itemize}
    \item $\Eval(\widetilde{Q},\ket{\psi})$: Compute $\pi \sample \Prove^{\widetilde{C}(0,\cdot)}(\params,\ket{\psi})$ and output $\widetilde{C}(1,\pi)$.
\end{itemize}
}

\begin{theorem}
Assuming post-quantum virtual black-box obfuscation of classical circuits and the quantum hardness of learning with errors, the scheme in Figure~\ref{fig:null-iO} is a secure null-iO for quantum circuits.
\end{theorem}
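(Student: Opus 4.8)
The plan is to prove indistinguishability security through a sequence of hybrids that peels off the VBB obfuscation, replaces the pseudorandom function by a true random oracle, shows that the exposed $\Verify$ oracle is useless when $Q$ is null, then applies blindness of the underlying CVQC to swap the circuit, and finally reverses all the steps. All primitives are instantiated from LWE: the quantum-secure PRF $F$ follows from quantum-secure one-way functions, and the blind two-message CVQC in the QROM follows from the Lemma above. Fix null circuit families $\{Q_{0,\secp}\}_{\secp},\{Q_{1,\secp}\}_{\secp}$ and a QPT distinguisher $\cA$, and abbreviate $Q_\beta = Q_{\beta,\secp}$. An obfuscation is $(\params, \obf(1^\secp, C))$ with $(\params,r) \sample \KeyGen(1^\secp, Q_\beta)$, $k \sample \{0,1\}^\secp$, and $C = C[Q_\beta,k,r]$ the two-branch circuit with $C(0,\cdot) = F(k,\cdot)$ and $C(1,\pi) = \Verify^{F(k,\cdot)}(Q_\beta,\pi,r)$. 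First I would apply post-quantum VBB to the classical circuit $C$ with the correlated string $\params$ as auxiliary input: for the QPT distinguisher $\cA'(\widetilde C, \params) := \cA(\params, \widetilde C)$ there is a QPT simulator $\Sim$ with $\left|\Pr[\cA'(\obf(1^\secp, C), \params) = 1] - \Pr[\Sim^{C}(1^\secp, 1^{|C|}, \params) = 1]\right| \le \negl$ for every fixed choice of $(\params, r, k)$, hence also on average over them. So it suffices to bound the advantage of $\Sim$, which receives $\params$ and has superposition oracle access to both branches of $C$.

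Next I would replace $F(k,\cdot)$ by a uniformly random oracle $\cH$. The only dependence of $C$ on $k$ is through evaluations of $F(k,\cdot)$ --- directly in branch $0$ and through the classical, polynomial-query subroutine $\Verify^{F(k,\cdot)}$ in branch $1$ --- so any superposition query to $C$ is answerable using polynomially many superposition queries to $F(k,\cdot)$; by quantum security of $F$, the change is computationally indistinguishable. Now $\Sim$ sees $\params$ with superposition access to $(\cH, \Verify^\cH(Q_\beta,\cdot,r))$. I would then argue the second branch is useless: let $q = \poly(\secp)$ bound the branch-$1$ queries of $\Sim$ and set $S = \{\pi : \Verify^\cH(Q_\beta,\pi,r) = 1\}$; conditioned on every branch-$1$ query having negligible total weight on $S$, replacing the branch-$1$ oracle by the constant-$0$ function is statistically invisible. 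Hence if $\Sim$ distinguishes $(\cH, \Verify^\cH(Q_\beta,\cdot,r))$ from $(\cH, \mathbf{0})$ with noticeable advantage, then with noticeable probability some branch-$1$ query has noticeable weight on $S$; a reduction receiving $(\params, \ket{\cH})$ from the soundness challenger guesses the index $i$ of the first such query, runs $\Sim$ answering $\cH$-queries via its oracle and the first $i-1$ branch-$1$ queries by $0$ (no $r$ is needed, and it halts at query $i$), and measures the $i$-th branch-$1$ query to obtain $\pi$ with $\Verify^\cH(Q_\beta,\pi,r) = 1$, contradicting soundness since $Q_\beta$ is null. Thus $\Sim$ effectively sees only $(\params, \ket{\cH})$.

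At this point blindness of the CVQC protocol gives $\params \sample \KeyGen(1^\secp, Q_0)$ indistinguishable from $\params \sample \KeyGen(1^\secp, Q_1)$ to any QPT algorithm with superposition oracle access to $\cH$ (the $\mathbf{0}$ oracle being simulated internally by $\Sim$), so I swap $Q_0$ for $Q_1$. Then I reverse the preceding steps in order --- reinstating the branch-$1$ oracle $\Verify^\cH(Q_1,\cdot,r)$ using soundness for the null circuit $Q_1$, replacing $\cH$ by $F(k,\cdot)$ using quantum PRF security, and reinserting the VBB obfuscation --- to conclude that $\cA$'s advantage is negligible, i.e.\ $\obf(1^\secp, Q_0) \approx_c \obf(1^\secp, Q_1)$.

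The main obstacle is the combined VBB-and-soundness bookkeeping rather than any single deep step: one must check that the VBB quantifier (for all $C$, for all auxiliary input $z$) legitimately covers the correlated pair $(C, \params)$ by fixing the joint randomness and averaging; that superposition access to the composite two-branch circuit $C$ really reduces to superposition access to $F(k,\cdot)$ (after padding $\Verify$ to a fixed query count); and, most delicately, that the soundness reduction can faithfully emulate the partially zeroed branch-$1$ oracle together with quantum access to $\cH$ --- which is exactly what the soundness definition provides, since the reduction needs neither $r$ nor the honest $\Verify$ evaluations once it commits to measuring the $i$-th query, mirroring the argument already used for the witness-encryption-style construction.
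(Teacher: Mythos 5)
Your proposal follows essentially the same hybrid sequence as the paper: VBB to a simulator with superposition oracle access to $C$ (treating $\params$ as auxiliary input), PRF to random oracle, per-query zeroing of the verification branch via a reduction to CVQC soundness, blindness to swap $Q_0$ for $Q_1$, then reverse. The small extra care you take about the VBB quantifier over $(C,z)$ with correlated $\params$ is a useful clarification, but the argument is the paper's.
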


\begin{proof}
Correctness follows immediately from correctness of CVQC and the VBB obfuscator, and security of the PRF.

To show security, fix any two null circuits $Q_0,Q_1$ (these are technically families of circuits, but we drop the indexing by $\secp$ to avoid clutter), and a QPT adversary $\cA$. Recall that virtual black-box obfuscation guarantees the existence of a simulator $\Sim$ based on the description of $\cA$. Let $q$ be an upper bound on the number of queries $\Sim$ makes to its oracle.

\begin{itemize}
    \item $\cH_0$: Adversary $\cA$ receives $\obf(1^\secp,Q_0)$, which consists of $\params$, and a classical obfuscated circuit $\widetilde{C}$.
    \item $\cH_1$: Simulator $\Sim$ receives $\params$ and is given (superposition) oracle access to $C$. The output of this hybrid is indistinguishable from $\cH_0$ by the security of classical VBB obfuscation.
    \item $\cH_2$: Any calls to $F(k,\cdot)$ during the evaluation of $C$ are forwarded to a quantum-accessible random oracle $\cH$. Since $F$ is a quantum-secure PRF, this is indistinguishable from $\cH_1$.
    \item $\cH_{3,i}$ for $i \in [q]$: The simulator's first $i$ queries to $\widetilde{C}$ are answered as follow. $$\sum_x \alpha_{0,x}\ket{0,x}\ket{0} + \sum_\pi \alpha_{1,\pi}\ket{1,\pi}\ket{0} \to \sum_x \alpha_{0,x}\ket{0,x}\ket{\cH(x)} + \sum_\pi \alpha_{1,\pi}\ket{1,\pi}\ket{0}.$$ The indistinguishability $\cH_{3,i} \approx_c \cH_{3,i-1}$ follows from the soundness of CVQC. Indeed, an adversary can only distinguish if its $i$'th query has some inverse polynomial amplitude on $\pi$ such that $\Verify^\cH(Q_0,\pi,r) = 1$. Otherwise, the hybrids would be statistically close. However, in this case, a reduction can produce an accepting proof for CVQC with inverse polynomial probability by answering each of the first $i-1$ queries as in $\cH_{3,i-1}$ and then measuring the $i$'th query. This violates the soundness of CVQC, since $Q_0$ rejects on all inputs with overwhelming probability.
    \item $\cH_4$: Switch the obfuscation to $Q_1$. Indistinguishability follows from the blindness of CVQC. Indeed, the verifier's secret key is no longer needed to simulate this distribution, since the $\Verify$ circuit is no longer invoked.
    \item $\cH_{5,i}$ for $i \in [q]$: reverse the changes from $\cH_{3,i}$ for $i \in [q]$.
    \item $\cH_6$: reverse the change from $\cH_2$. 
    \item $\cH_7$: reverse the change from $\cH_1$. This is $\obf(1^\secp,Q_1)$.
\end{itemize}

\end{proof}

\section{One-Sided Attribute-Hiding}\label{sec:pe}

In the following we show how to convert generically an ABE for BQP into a predicate encryption scheme with one-sided security (i.e.\ where the quantum circuit associated with a ciphertext is hidden to the eyes of an adversary that cannot decrypt), additionally assuming the quantum hardness of the LWE problem. Towards this goal, we introduce the notion of quantum lockable obfuscation and we present a scheme from quantum-hard LWE.

\subsection{Definition}

We define the notion of one-sided attribute-hiding for attribute based encryption. In the literature, this primitive is also referred to as predicate encryption~\cite{C:GorVaiWee15}. Since the syntax is unchanged, we only present the upgraded security definition. Similarly as before, we consider the case of selective security, where the quantum circuit associated with the challenge ciphertext is fixed ahead of time, and in particular is chosen before seeing the public parameters of the scheme.
\begin{definition}[One-Sided Attribute-Hiding]
An ABE scheme $(\ABE.\Gen,\allowbreak\ABE.\Enc,\ABE.\KeyGen,\allowbreak\ABE.\Dec)$ is one-sided attribute-hiding if there exists a triple of negligible functions $\nu_0$, $\nu_1$, and $\mu$ such that for all $\lambda \in \mathbb{N}$, all pairs of quantum circuits $(Q_0, Q_1)$, and all admissible non-uniform
 QPT distinguishers with quantum advice 
 $\adv = \{\adv_\lambda, \rho_\lambda\}_{\lambda \in \mathbb{N}}$, it holds that
 $$\Pr\left[b =\cA_\secp(\ct_{Q_b}, \mpk; \rho_\secp)^{\ABE.\KeyGen(\msk,\cdot)} : \begin{array}{l}(\mpk,\msk) \sample \ABE.\Gen(1^\secp,1^\ell) \\ b \sample \{0,1\} \\ \ct_{Q_b} \sample \ABE.\Enc(\mpk,Q_b,b)\end{array}\right] = 1/2 + \mu(\secp),$$
where $\adv$ is admissible if each query $x$ to $\ABE.\KeyGen(\msk,\cdot)$ is such that $\Pr[Q_0(x) = 1] = \nu_0(\secp)$ and $\Pr[Q_1(x) = 1] = \nu_1(\secp)$.
\end{definition}

\subsection{Quantum Lockable Obfuscation}

In the following we give a construction of quantum lockable obfuscation assuming the quantum hardness of the LWE problem.

\paragraph{Compute-and-Compare Programs.} We define the class of compute-and-compare circuits. The definition below applies both to classical and pseudo-deterministic quantum circuits (with classical input and output).

\begin{definition}[Compute-and-Compare]
Let $C:\{0,1\}^n\rightarrow\{0,1\}^\lambda$ be a circuit, and let $u\in\{0,1\}^\lambda$ and $z \in \{0,1\}^\ast$ be two classical strings. Then $\CC{C}{u}{z}(x)$ is a circuit that returns $z$ if $C(x)=u$, and $0$ otherwise.  
\end{definition}

\paragraph{Definition.} We are now ready to define the notion of lockable obfuscation for compute-and-compare programs. In what follows we only define the classical version of lockable obfuscation. The extension to quantum circuits follows along the same lines. A lockable obfuscator $\obf$ is a PPT algorithm that takes as input a compute-and-compare program $\CC{C}{u}{z}$ and outputs a new circuit $\obfC$. We assume that the circuit $\CC{C}{u}{z}$ is given in some canonical description from which $C$, $u$, and $z$ can be read. Correctness is defined as follows.

\begin{definition}[Correctness]
A lockable obfuscator $\obf$ is correct if there exists a negligible function $\nu$ such that for all $\lambda\in\mathbb{N}$, all circuits $C:\{0,1\}^n\rightarrow\{0,1\}^\lambda$, all $u\in\{0,1\}^\lambda$, and all $z \in \{0,1\}^\ast$, it holds that
\[
\Pr\left[\forall x \in \{0,1\}^n: \obfC(x)= \CC{C}{u}{z}(x)\right] = 1 - \nu(\secp)
\]
where $\obfC\sample\obf(1^\lambda, \CC{C}{u}{z})$.
\end{definition}
We require a strong notion of simulation security for lockable obfuscation.
\begin{definition}[Simulation Security]
A lockable obfuscator $\obf$ is secure if there exists a simulator $\ccSim$ such that for all $\lambda\in\mathbb{N}$, all pseudo-deterministic polynomial-size quantum circuits $C:\{0,1\}^n\rightarrow\{0,1\}^\lambda$, and all polynomial-length output strings $z \in \{0,1\}^\ast$ it holds that
\[
\obf(1^\lambda, \CC{C}{u}{z}) \approx_c \ccSim(1^\lambda, 1^{|C|}, 1^{|z|})
\]
where $u \sample \{0,1\}^\lambda$.
\end{definition}

\paragraph{Classical Lockable Obfuscation.} Lockable obfuscation for classical circuits and with almost perfect correctness were constructed in \cite{FOCS:GoyKopWat17,FOCS:WicZir17}, assuming the quantum hardness of LWE. Recently, a construction with perfect correctness has been shown in~\cite{TCC:GKVW20}.

\begin{lemma}[\cite{FOCS:GoyKopWat17,FOCS:WicZir17}]
Assuming the quantum hardness of the LWE problem, there exists a lockable obfuscation $\obf$ for compute-and-compare programs.
\end{lemma}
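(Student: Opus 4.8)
This lemma records the lockable (compute-and-compare) obfuscators of \cite{FOCS:GoyKopWat17,FOCS:WicZir17}, which achieve correctness error negligibly close to $0$ from the (quantum) hardness of LWE (and \cite{TCC:GKVW20} later removes the error entirely), so the plan is to recall the structure of those constructions and then indicate how the argument should be extended to the pseudo-deterministic quantum circuits that the definitions of this subsection quantify over. At a high level, to obfuscate $\CC{C}{u}{z}$ with $C:\{0,1\}^n\to\{0,1\}^\secp$ (taking $z$ a single bit, with longer $z$ handled by a standard parallel-repetition trick), one encrypts the description of $C$ under a GSW-style LWE-based FHE and additionally publishes, for each coordinate $i\in[\secp]$, a ``target'' encoding that is a noisy lattice point decoding to $z$ in the slot indexed by $u_i$ and pseudorandom in the complementary slot. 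The obfuscated program, on input $x$, homomorphically evaluates $C$ on $x$, uses the resulting ciphertexts to select and combine the target encodings, and applies a lattice-rounding step; parameters are set so that the rounding recovers an encoding of $z$ when $C(x)=u$ and yields noise far from any valid encoding otherwise. The main technical obstacle, and the bulk of the work in \cite{FOCS:GoyKopWat17,FOCS:WicZir17}, is ruling out \emph{accidental} unlocks: one needs the rounding to fail, simultaneously for \emph{every} input $x$ with $C(x)\neq u$, except with negligible probability over the obfuscation randomness, which is handled via a superpolynomial modulus-to-noise ratio and a careful noise-growth/union-bound analysis.

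For security, there are two reductions to LWE. First, $C$ appears only inside GSW ciphertexts, so semantic security lets us replace it by $0^{|C|}$. Second, when $u$ is uniform the entire published object — GSW ciphertexts plus target encodings — is computationally pseudorandom under LWE; this is precisely what lets us take the simulator $\ccSim(1^\secp,1^{|C|},1^{|z|})$ to output uniformly random strings of the appropriate (size-dependent) lengths, and it simultaneously hides $z$. Since all reductions are post-quantum, this yields the indistinguishability required by the simulation-security definition against QPT distinguishers.

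Finally, the definitions here ask for obfuscation of \emph{pseudo-deterministic quantum} compute-and-compare programs, which the above does not directly provide since GSW evaluates only classical circuits; the cleanest completion is a black-box reduction to the classical lemma together with the QFHE of \cite{FOCS:Mahadev18b,C:Brakerski18} (available from LWE by the earlier lemma). To obfuscate $\CC{C}{u}{z}$ with $C$ pseudo-deterministic quantum, sample $(\pk,\sk)\sample\QFHE.\Gen(1^\secp)$, include $\ct_C\sample\QFHE.\Enc(\pk,C)$, and include a classical lockable obfuscation of $\CC{C'}{u}{z}$ where $C'=\QFHE.\Dec(\sk,\cdot)$ is the classical decryption circuit (which is independent of $C$); evaluation on input $x$ runs $\QFHE.\Eval$ with a universal quantum circuit to get a ciphertext $\ct_y$ of $C(x)$ and then feeds $\ct_y$ to the classical obfuscation. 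Pseudo-determinism of $C$ guarantees $\QFHE.\Dec(\sk,\ct_y)=C(x)$ with overwhelming probability, so correctness reduces to the classical case up to a union bound, and simulation security follows by first replacing the classical lockable obfuscation by $\ccSim$ (after which $\sk$ is unused) and then replacing $\ct_C$ by $\QFHE.\Enc(\pk,0^{|C|})$ via QFHE semantic security, at which point the whole distribution depends only on $1^\secp,1^{|C|},1^{|z|}$. The only new subtlety relative to the classical case is that $\QFHE.\Eval$ is randomized, but pseudo-determinism tames this: $\ct_y$ decrypts to the canonical value $C(x)$ except with negligible probability.
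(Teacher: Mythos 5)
The paper offers no proof of this lemma: it is an imported result, and the surrounding paragraph heading (``Classical Lockable Obfuscation'') and lead-in text make clear it is scoped to classical compute-and-compare programs only, with the extension to pseudo-deterministic quantum circuits handled separately as a new construction (Figure~\ref{fig:qlock}) together with its own Simulation Security theorem. Your first two paragraphs are a fair high-level recollection of the constructions in \cite{FOCS:GoyKopWat17,FOCS:WicZir17} (GSW-style FHE encryption of $C$, lock-dependent target encodings, superpolynomial modulus-to-noise ratio to union-bound away accidental unlocks, and a two-step LWE reduction), and as a recall of the imported lemma they are fine. Your third paragraph---QFHE-encrypt $C$, publish a classical lockable obfuscation of $\CC{\QFHE.\Dec(\sk,\cdot)}{u}{z}$, evaluate by running a universal circuit under $\QFHE.\Eval$, and argue security by first invoking classical simulation security (using that $u$ is uniform and independent of $\sk$) and then QFHE semantic security---is not part of proving this lemma, but it reproduces the paper's quantum lockable obfuscation construction and its security proof essentially verbatim, including the observation that pseudo-determinism of $C$ is exactly what controls the randomness of $\QFHE.\Eval$. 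You have correctly noticed a slight mismatch in the paper: the Simulation Security definition in this subsection quantifies over pseudo-deterministic quantum circuits, which would make the lemma statement overclaimed if read against that definition; the paper resolves this by proving the quantum case as a separate theorem, whereas you folded the QFHE lift into the lemma's proof. Both organizations are valid, and your argument is correct.
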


\paragraph{Quantum Lockable Obfuscation.} We now propose a lockable obfuscation scheme for pseudo-deterministic compute-and-compare \emph{quantum programs}. The scheme combines a QFHE scheme $(\QFHE.\Gen,\allowbreak\QFHE.\Enc,\allowbreak \QFHE.\Eval,\allowbreak \QFHE.\Dec)$ with classical keys and classical decryption with a lockable obfuscator $\obf$ for classical circuits. The scheme $\obf^\ast$ is described in Figure~\ref{fig:qlock}. 

\protocol{Quantum Lockable Obfuscation}{A lockable obfuscator for (pseudo-deterministic) compute-and-compare quantum circuits}{fig:qlock}
{
\begin{itemize}
    \item $\obf^\ast(1^\lambda, \CC{Q}{u}{z})$:
    \begin{itemize}
        \item Sample $(\sk, \pk) \sample \QFHE.\Gen(1^\lambda)$.
        \item Compute the classical encryption $c \sample \QFHE.\Enc(\pk, Q)$ of the circuit $Q$.
        \item Compute the obfuscation $\obfC \sample \obf(1^\lambda, \CC{\QFHE.\Dec(\sk, \cdot)}{u}{z})$ of the classical compute-and-compare program that takes as input a ciphertext $c^\ast$ and returns $z$ if and only if $\QFHE.\Dec(\sk, c^\ast) = u$.
        \item Return $(c, \obfC)$.
    \end{itemize}
\end{itemize}
To evaluate the obfuscated circuit $(c, \obfC)$ on some input $x$, the (quantum) algorithm computes $\tilde{c} \sample \QFHE.\Eval(\pk, \mathcal{U}_x, c)$ as the homomorphic evaluation of the universal quantum circuit $\mathcal{U}_x$ that has hard-wired input $x$ and evaluates the encrypted circuit $Q$ homomorphically. Then return the evaluation of $\obfC$ on input the resulting $\tilde{c}$.
}

Correctness follows straightforwardly from the correctness of the underlying building blocks. One caveat is that the QFHE evaluation procedure is inherently probabilistic, so correctness only holds with probability negligibly close to one (over the randomness imposed by the QFHE evaluation). We now proceed to establish the simulation security of the scheme.

\begin{theorem}[Simulation Security]
Let $(\QFHE.\Gen,\allowbreak\QFHE.\Enc,\allowbreak \QFHE.\Eval,\allowbreak \QFHE.\Dec)$ be a secure QFHE scheme and let $\obf$ be a simulation secure lockable obfuscator for classical compute-and-compare programs. Then the scheme in Figure~\ref{fig:qlock} is simulation secure.
\end{theorem}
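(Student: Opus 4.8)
The plan is to reduce simulation security of the quantum lockable obfuscator $\obf^\ast$ to the simulation security of the underlying classical lockable obfuscator $\obf$ together with the semantic security of the QFHE scheme. The target is to show that $\obf^\ast(1^\lambda, \CC{Q}{u}{z})$, for a uniformly random lock value $u \sample \{0,1\}^\lambda$, is computationally indistinguishable from the output of a simulator $\ccSim^\ast$ that takes only $(1^\lambda, 1^{|Q|}, 1^{|z|})$. The natural definition of $\ccSim^\ast$ is: sample $(\sk,\pk) \sample \QFHE.\Gen(1^\lambda)$, compute a dummy ciphertext $c \sample \QFHE.\Enc(\pk, 0^{|Q|})$ (an encryption of an all-zero circuit of the appropriate size), run the classical lockable obfuscation simulator $\widetilde{\mathbf{CC}} \sample \ccSim(1^\lambda, 1^{|\QFHE.\Dec(\sk,\cdot)|}, 1^{|z|})$, and output $(c, \widetilde{\mathbf{CC}})$.

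**First I would** establish the hybrid argument. Hybrid $\mathcal{H}_0$ is the real distribution $\obf^\ast(1^\lambda, \CC{Q}{u}{z})$ with $u \sample \{0,1\}^\lambda$: it contains $c \sample \QFHE.\Enc(\pk, Q)$ and $\obfC \sample \obf(1^\lambda, \CC{\QFHE.\Dec(\sk,\cdot)}{u}{z})$. Hybrid $\mathcal{H}_1$ replaces $\obfC$ by the output of the classical simulator $\ccSim(1^\lambda, 1^{|\QFHE.\Dec(\sk,\cdot)|}, 1^{|z|})$; indistinguishability $\mathcal{H}_0 \approx_c \mathcal{H}_1$ follows directly from the simulation security of the classical lockable obfuscator $\obf$, since the compute-and-compare program being obfuscated has a uniformly random lock $u$, and the circuit $C = \QFHE.\Dec(\sk,\cdot)$ is a classical (in particular pseudo-deterministic) circuit whose size is fixed by the security parameter. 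Here it is important that in $\mathcal{H}_0$ the lock $u$ is never used anywhere else in the distribution, so the reduction can forward everything cleanly. Hybrid $\mathcal{H}_2$ then replaces $c \sample \QFHE.\Enc(\pk, Q)$ with $c \sample \QFHE.\Enc(\pk, 0^{|Q|})$; indistinguishability $\mathcal{H}_1 \approx_c \mathcal{H}_2$ follows from the semantic security of the QFHE scheme, since in $\mathcal{H}_1$ the secret key $\sk$ is no longer needed to produce any other component of the distribution (the obfuscated circuit is now simulated independently of $\sk$). Finally $\mathcal{H}_2$ is exactly $\ccSim^\ast(1^\lambda, 1^{|Q|}, 1^{|z|})$, which completes the argument.

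**The main obstacle** is a subtlety in the ordering of the hybrids and in what the classical lockable obfuscation simulator is allowed to depend on. In $\mathcal{H}_0$, the classical obfuscation is of the circuit $\QFHE.\Dec(\sk,\cdot)$, which depends on $\sk$; so when we invoke classical simulation security first (to get to $\mathcal{H}_1$), we must make sure the classical reduction can produce the ciphertext $c \sample \QFHE.\Enc(\pk, Q)$ as auxiliary output — which it can, since $\pk$ and $Q$ are available to it. Only afterwards, in $\mathcal{H}_1 \to \mathcal{H}_2$, do we discard $\sk$ entirely and invoke QFHE semantic security. One must also be careful that the statement of classical simulation security quantifies over pseudo-deterministic circuits $C$, and here $C = \QFHE.\Dec(\sk,\cdot)$ is a genuinely classical deterministic circuit, hence a fortiori pseudo-deterministic — so this is fine. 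A final point worth a remark is that this theorem establishes only the \emph{security} half of quantum lockable obfuscation; the \emph{correctness} statement (that evaluation recovers $z$ exactly when $Q(x) = u$, up to negligible error) is handled separately and, as noted before the theorem, only holds up to negligible error because of the probabilistic nature of $\QFHE.\Eval$. The proof of simulation security itself does not touch correctness, so the two can be kept cleanly separate.
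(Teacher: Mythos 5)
Your proof is correct and takes essentially the same route as the paper: first invoke simulation security of the classical lockable obfuscator to replace $\obfC$ by $\ccSim(1^\lambda,1^{|\QFHE.\Dec|},1^{|z|})$ (using that the lock $u$ is fresh and appears nowhere else), then invoke QFHE semantic security to swap $\QFHE.\Enc(\pk,Q)$ for $\QFHE.\Enc(\pk,0^{|Q|})$, after which the distribution is exactly the claimed simulator. Your additional remarks on hybrid ordering and on $\QFHE.\Dec(\sk,\cdot)$ being a genuinely classical (hence pseudo-deterministic) circuit are correct and only make explicit what the paper leaves implicit.
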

\begin{proof}
To prove the security of our construction, we define a series of hybrid distributions that we argue to be computationally close. The real distribution for an obfuscated quantum circuit $Q$ consists of
\[
(\QFHE.\Enc(\pk, Q), \obf(1^\lambda, \CC{\QFHE.\Dec(\sk, \cdot)}{u}{z})).
\]
In the first hybrid distribution we simulate the obfuscated classical circuit. Since $u$ is uniformly sampled (and in particular is independent of all other variables of our distribution), by an invocation of the simulation security of $\obf$, we obtain that
\begin{align*}
&(\QFHE.\Enc(\pk, Q), \obf(1^\lambda, \CC{\QFHE.\Dec(\sk, \cdot)}{u}{z})) \\&\approx_c (\QFHE.\Enc(\pk, Q), \ccSim(1^\lambda, 1^{|\QFHE.\Dec|}, 1^{|z|})).
\end{align*}
Next, we switch the QFHE to be an encryption of $0^{|U|}$. By the semantic security of the QFHE scheme we have that
\begin{align*}
&(\QFHE.\Enc(\pk, Q), \ccSim(1^\lambda, 1^{|\QFHE.\Dec|}, 1^{|z|})) \\&\approx_c (\QFHE.\Enc(\pk, 0^{|Q|}), \ccSim(1^\lambda, 1^{|\QFHE.\Dec|}, 1^{|z|})).
\end{align*}
Note that the latter distribution does not contain any information about the circuit $Q$, besides its size. This concludes our proof.
\end{proof}

\subsection{Construction}

We now present our transformation to upgrade the security of an ABE scheme for quantum circuit to one-sided attribute-hiding. The transformation is identical to~\cite{FOCS:WicZir17,FOCS:GoyKopWat17} and we describe it here for completeness. We require the existence of an ABE scheme $(\ABE.\Gen,\allowbreak\ABE.\Enc,\allowbreak\ABE.\KeyGen,\ABE.\Dec)$ which (for convenience) we assume that it can encrypt multiple bits, and a lockable obfuscator $\obf$ for compute-and-compare quantum circuits. The scheme $\obf^\ast$ is described in Figure~\ref{fig:pe}. 

\protocol{One-Sided Attribute-Hiding ABE}{A one-sided attribute-hiding ABE scheme}{fig:pe}
{
\begin{itemize}
    \item $\ABE.\Gen^\ast(1^\secp,1^\ell)$:
    \begin{itemize}
        \item Return $\ABE.\Gen(1^\secp,1^\ell)$.
    \end{itemize}
    \item $\ABE.\Enc^\ast(\mpk,Q,m)$:
    \begin{itemize}
        \item Sample a uniform $u \sample \{0,1\}^\lambda$.
        \item Compute $c \sample \ABE.\Enc(\mpk,Q,u)$.
        \item Compute the obfuscation $\obfC \sample \obf(1^\lambda, \CC{\ABE.\Dec(\cdot, c)}{u}{m})$ of the quantum compute-and-compare program that takes as input a key $\sk_x$ and returns $m$ if and only if $\ABE.\Dec(\sk_x, c) = u$.
        \item Return $c_Q =\obfC$.
    \end{itemize}
    \item $\ABE.\KeyGen^\ast(\msk,x)$:
    \begin{itemize}
        \item Return $\ABE.\KeyGen(\msk,x)$.
    \end{itemize}
    \item $\ABE.\Dec^\ast(\sk_x,\ct_Q)$:
    \begin{itemize}
        \item Return the output of the evaluation of $\obfC$ on $\sk_x$.
    \end{itemize}
\end{itemize}
}

\paragraph{Correctness.} The ABE decryption circuit is by definition pseudo-deterministiic and therefore the correctness of the scheme is routinely established by invoking the correctness of the ABE and of the obfuscator for compute-and-compare quantum programs.

\paragraph{Security.} We now analyze the security of the scheme, which we establish with the following theorem.

\begin{theorem}[One-Sided Attribute-Hiding]
Let $(\ABE.\Gen,\allowbreak\ABE.\Enc,\allowbreak\ABE.\KeyGen,\allowbreak\ABE.\Dec)$ be a secure ABE scheme and let $\obf$ be a simulation secure obfuscator for compute-and-compare quantum programs. Then the scheme in Figure~\ref{fig:pe} is one-sided attribute-hiding.
\end{theorem}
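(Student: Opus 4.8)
The plan is to follow the now-standard predicate-encryption-from-lockable-obfuscation template of~\cite{FOCS:WicZir17,FOCS:GoyKopWat17}, instantiated with our quantum lockable obfuscator $\obf$ and its simulator $\ccSim$. As there, we may assume without loss of generality that $|Q_0| = |Q_1|$ (padding the shorter circuit), so that the compute-and-compare circuit $\CC{\ABE.\Dec(\cdot,c)}{u}{m}$ used inside $\ABE.\Enc^\ast$ has a description length that does not depend on the challenge bit. Fix a selective adversary $\adv = \{\adv_\lambda,\rho_\lambda\}$ that commits to $(Q_0,Q_1)$, and recall that admissibility forces every attribute $x$ it queries to $\ABE.\KeyGen(\msk,\cdot)$ to be rejecting for \emph{both} circuits, i.e.\ $\Pr[Q_0(x)=1] = \nu_0(\secp)$ and $\Pr[Q_1(x)=1] = \nu_1(\secp)$.

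First I would build a short chain of hybrids over $\adv$'s view, starting from the real experiment with $b=0$. In $\cH_0$ the challenge is $\obfC \sample \obf(1^\lambda, \CC{\ABE.\Dec(\cdot,c)}{u}{0})$ with $c \sample \ABE.\Enc(\mpk, Q_0, u)$ and $u \sample \{0,1\}^\lambda$. In $\cH_1$ I replace $c$ by $c \sample \ABE.\Enc(\mpk, Q_0, 0^\lambda)$: since every queried attribute is rejecting for $Q_0$, the reduction (which samples $u$ itself, builds and obfuscates the compute-and-compare program, and forwards key queries to the ABE challenger's $\ABE.\KeyGen$ oracle) is a legal ABE distinguisher, so $\cH_0 \approx_c \cH_1$ by the (multi-bit) security of the underlying ABE. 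In $\cH_2$ I replace $\obfC$ by $\obfC \sample \ccSim(1^\lambda, 1^{|\ABE.\Dec|}, 1^{|m|})$: in $\cH_1$ the lock $u$ is information-theoretically independent of the rest of the view (it appears only inside $\CC{\ABE.\Dec(\cdot,c)}{u}{0}$, and $c$ no longer encodes it), and $\ABE.\Dec(\cdot,c)$ is pseudo-deterministic, so simulation security of the quantum lockable obfuscator gives $\cH_1 \approx_c \cH_2$.

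Running the mirror-image argument starting from the real experiment with $b=1$ yields a hybrid $\cH_3$ whose challenge is also $\ccSim(1^\lambda, 1^{|\ABE.\Dec|}, 1^{|m|})$. Since $|Q_0| = |Q_1|$ and $|m| = 1$ in both branches, and since $\mpk$ and the key oracle are distributed identically regardless of $b$, the hybrids $\cH_2$ and $\cH_3$ are \emph{identical}. Chaining the indistinguishabilities, $\adv$ guesses $b$ with probability at most $1/2 + \mu(\secp)$ for a negligible $\mu$, which is exactly one-sided attribute-hiding.

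I expect the only genuine subtlety — the step I would write most carefully — to be the ordering $\cH_0 \to \cH_1 \to \cH_2$: one must remove $u$ from the ABE ciphertext \emph{before} invoking lockable-obfuscation simulation security, since that notion applies only when the lock value is uniform and independent of the rest of the distribution. The remaining points are routine: lifting single-bit ABE security to the $\lambda$-bit lock value by a hybrid over bit positions; checking that $\ABE.\Dec(\cdot,c)$ is pseudo-deterministic (for a fixed key and fixed ciphertext the quantum decryption returns a fixed value with overwhelming probability, by the QMA amplification built into the underlying witness encryption); and the size padding noted above. None of these require ideas beyond the building blocks already in place.
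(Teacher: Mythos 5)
Your proposal is correct and follows essentially the same hybrid argument as the paper: remove the lock $u$ from the ABE ciphertext via ABE security (legal because admissibility makes all queried attributes rejecting for $Q_0$), then invoke lockable-obfuscation simulation security once $u$ is independent of the rest of the view, and observe that the simulated distribution no longer depends on the challenge bit. The paper writes the middle steps as a linear chain (explicitly swapping $c$ to encrypt $Q_1$ while still in simulated mode and then reverting), whereas you state it as a meet-in-the-middle symmetry; these are equivalent. Your additional remarks on size padding, the multi-bit ABE hybrid, and pseudo-determinism of $\ABE.\Dec(\cdot,c)$ are correct routine points the paper leaves implicit.
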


\begin{proof}
The proof proceeds by defining the following series of hybrids.
\begin{itemize}
    \item Hybrid $\mathcal{H}_0$: This is the original experiment with the bit fixed to $b=0$.
    
    \item Hybrid $\mathcal{H}_1$: This is identical to the previous hybrid except that in the challenge ciphertext we compute $c \sample \ABE.\Enc(\pk, Q_0, 0^\lambda)$.
    
    Indisinguishability follows from the (standard) security of the ABE scheme.
    
    \item Hybrid $\mathcal{H}_2$: In this hybrid we compute the challenge ciphertext via  the simulator $\ccSim(1^\lambda, 1^{|\ABE.\Dec|}, 1^{|m|}))$.
    
    Note that in $\mathcal{H}_1$ the target value $u$ is uniform to the eyes of the distinguisher and therefore indistinguishbility follows from the simulation security of the compute-and-compare obfuscator.
    
     \item Hybrid $\mathcal{H}_3$: In this hybrid we modify the challenge ciphertext by sampling $c \sample \ABE.\Enc(\pk, Q_1, 0^\lambda)$.
     
     Since $c$ does not affect the view of the adversary this modification is only syntactical.
     
     \item Hybrid $\mathcal{H}_4$: We revert the change done in $\mathcal{H}_2$, except that we compute the obfuscated circuit as $\obfC \sample \obf(1^\lambda, \CC{\ABE.\Dec(\cdot, c)}{u}{m_1})$.
     
     Indistinguishability follows along the same lines as what argued above.

     \item Hybrid $\mathcal{H}_4$: We revert the change done in $\mathcal{H}_1$, except that we compute $c \sample \ABE.\Enc(\pk, Q_1, u)$.
     
     Indistinguishability follows from another invocation of the security of the ABE scheme.
\end{itemize}
The proof is concluded by observing that the distribution induced by $\mathcal{H}_4$ is identical to the original experiment with the bit fixed to $b=1$.
\end{proof}

\else

\fi

\end{document}